
\documentclass[12pt]{amsart}
\usepackage{amsfonts}
\usepackage{changepage}
\usepackage{natbib}
\usepackage{mathrsfs}
\usepackage{xr}
\usepackage{eurosym}
\usepackage{amssymb, amsmath, url,color,  amsthm, amssymb,graphicx,multirow, appendix}
\usepackage[margin=0.8in]{geometry}
\usepackage{bigints}
\usepackage{boondox-cal}
\usepackage[scr=boondoxo,scrscaled=1.05]{mathalfa}
\usepackage{lscape,amssymb, amsmath,verbatim,graphicx}
\usepackage{epsfig,subfigure,float,subfigure}
\usepackage{graphicx}
\usepackage{epsfig}
\usepackage{threeparttable}
\usepackage{ragged2e}
\usepackage{epstopdf}
\usepackage{booktabs}
\usepackage{float}
\usepackage{multirow}
\usepackage{adjustbox}
\usepackage{lscape,lipsum}
\usepackage{url}
\usepackage{natbib}
\usepackage[T1]{fontenc}
\usepackage{a4wide, url}
\usepackage[english]{babel}
\usepackage{graphicx}
\usepackage{array}
\usepackage{multirow}
\usepackage{amsthm}
\usepackage{amsmath}
\usepackage[foot]{amsaddr}
\usepackage{amsfonts}
\usepackage{amssymb}
\usepackage{pgf}
\usepackage{paralist}
\usepackage{pdflscape}
\usepackage{lscape,amssymb, amsmath,verbatim,graphicx}
\usepackage{epsfig,subfigure,float,subfigure}
\usepackage{graphicx}
\usepackage{epsfig}
\usepackage{amssymb}
\usepackage{amsmath}
\usepackage{amsfonts}
\usepackage{geometry}
\usepackage{scalefnt}
\usepackage{graphicx}
\usepackage{caption}
\usepackage{setspace}
\usepackage{array}
\usepackage{multirow}
\usepackage{amsthm}
\usepackage{amsmath}
\usepackage{amsfonts}
\usepackage{amssymb}
\usepackage[shortlabels]{enumitem}
\usepackage{pgf}
\usepackage{paralist}
\usepackage{pdflscape}
\usepackage{rotating}
\usepackage{scalefnt}
\usepackage{xr}
\usepackage{hyperref}
\usepackage{algpseudocode}
\usepackage{algorithm}

\setcounter{MaxMatrixCols}{10}

\captionsetup[figure]{font=footnotesize, labelfont=footnotesize}
\captionsetup[subfigure]{font=scriptsize, labelfont=scriptsize}
\setlength{\headsep}{0.5in}
\allowdisplaybreaks
\DeclareFontFamily{OT1}{pzc}{}
\DeclareFontShape{OT1}{pzc}{m}{it}{<-> s * [1.10] pzcmi7t}{}
\DeclareMathAlphabet{\mathpzc}{OT1}{pzc}{m}{it}

\pagestyle{plain}
\thispagestyle{empty}
\parindent0cm
\newtheoremstyle{exampstyle}
{4pt} 
{4pt} 
{\itshape} 
{} 
{\bfseries} 
{.} 
{.75em} 
{} 
\theoremstyle{exampstyle}

\numberwithin{table}{section}
\numberwithin{figure}{section}
\newtheorem{theorem}{Theorem}[section]
\newtheorem{lemma}{Lemma}[section]

\newtheorem{assumption}{Assumption}[section]

\def\beq{\begin{equation}}
\def\eeq{\end{equation}}
\def\bals{\begin{align*}}
\def\eals{\end{align*}}
\def\bal{\begin{align}}
\def\eal{\end{align}}

\numberwithin{equation}{section}
\numberwithin{theorem}{section}
\numberwithin{corollary}{section}
\allowdisplaybreaks
\onehalfspacing
\linepenalty=2000

\begin{document}
\title[ Monitoring volatility ]{Sequential monitoring for explosive
volatility regimes}
\author{Lajos Horv\'ath}
\address{Lajos Horv\'ath, Department of Mathematics, University of Utah, US}
\author{Lorenzo Trapani}
\address{Lorenzo Trapani, University of Leicester, UK, and Universita' di
Pavia, Italy}
\author{Shixuan Wang}
\address{Shixuan Wang, Department of Economics, University of Reading, UK}
\subjclass{Primary 62M10, 91B84; secondary 60G10, 62F12}
\keywords{Sequential monitoring, GARCH(1,1) processes, explosive volatility,
detection delay.}

\begin{abstract}
In this paper, we develop two families of sequential monitoring procedure to
(timely) detect changes in a GARCH(1,1) model. Whilst our methodologies can
be applied for the general analysis of changepoints in GARCH(1,1) sequences,
they are in particular designed to detect changes from stationarity to
explosivity or vice versa, thus allowing to check for \textquotedblleft
volatility bubbles\textquotedblright . Our statistics can be applied
irrespective of whether the historical sample is stationary or not, and
indeed without prior knowledge of the regime of the observations before and
after the break. In particular, we construct our detectors as the CUSUM
process of the quasi-Fisher scores of the log likelihood function. In order
to ensure timely detection, we then construct our boundary function
(exceeding which would indicate a break) by including a weighting sequence
which is designed to shorten the detection delay in the presence of a
changepoint. We consider two types of weights: a lighter set of weights,
which ensures timely detection in the presence of changes occurring
\textquotedblleft early, but not too early\textquotedblright\ after the end
of the historical sample; and a heavier set of weights, called
\textquotedblleft R\'{e}nyi weights\textquotedblright \thinspace\ which is
designed to ensure timely detection in the presence of changepoints
occurring very early in the monitoring horizon. In both cases, we derive the
limiting distribution of the detection delays, indicating the expected delay
for each set of weights. Our theoretical results are validated via a
comprehensive set of simulations, and an empirical application to daily
returns of individual stocks.
\end{abstract}

\maketitle

\doublespacing

\section{Introduction\label{intro}}

In recent years, developing tools for the (ex-ante or ex-post) detection of
the onset or the collapse of a bubble in financial markets has been one of
the most active research areas in financial econometrics; we refer the
reader, in particular, to the seminal articles on ex-post detection by %
\citet{phillips2011}, and \citet{phillips2015testing2}, and also to %
\citet{skrobotov2023testing} for a review. As far as ex-ante detection -
that is, finding the onset or collapse of a bubble in real time, as new data
come in - is concerned, this important issue has also been studied in
numerous recent contributions; although a comprehensive literature review
goes beyond the scope of this paper, we refer to the articles by %
\citet{homm2012testing}, \citet{phillips2015testing}, and %
\citet{whitehouse2023real}, \textit{inter alia}; in particular, the paper by %
\citet{whitehouse2023real} also contains a comprehensive literature review
of in-sample and online bubble detection methods. A common trait to the vast
majority of the existing literature is its reliance on a linear
specification, usually an AutoRegressive (AR) model, to capture regime
changes in the dynamics of log prices. Whilst such a modelling choice can be
justified from the theoretical point of view (see e.g. %
\citealp{phillips2011dating}), and whilst its analytical tractability offers
obvious advantages from a mathematical standpoint (see e.g. %
\citealp{phillips2007limit}, and \citealp{aue2007limit}), a major issue is
that using an AR model is fraught with difficulties when monitoring for
changes from an explosive towards a stationary regime. Several promising
solutions have been proposed such as the reverse regression approach by %
\citet{phillips2018financial}. \citet{horvath2023real} propose a different
model, based on a Random Coefficient Autoregressive (RCA) specification,
where inference is always standard normal irrespective of stationarity or
the lack thereof (\citealp{aue2011}), and develop a family of sequential
monitoring procedures based on the weighted CUSUM process, to check whether
the deterministic part of the autoregressive root changes over time. Such a
testing set-up also encompasses both the case of a switch from a stationary
to an explosive regime (thus indicating the start of a bubble phenomenon),
and a change from an explosive to a stationary regime (thus signalling the
collapse of a bubble).

The theory developed in \citet{horvath2023real} paves the way to a more
general research question, namely developing sequential monitoring
techniques which are robust to both the initial regime (i.e., which can be
employed irrespective of whether the observations in the training/historical
sample are stationary or explosive), and to the type of change which occurs
after a changepoint (i.e., which can detect changes from stationarity to
another stationary regime, or to a explosive regime; or from an explosive
regime to another explosive regime or a stationary one). From a technical
viewpoint, this question is nontrivial for at least three\ reasons. First,
proposing a changepoint detection methodology whose asymptotics is the same
irrespective of stationarity or explosivity is not easy \textit{per se},
because the partial sum processes which constitute the building blocks of
e.g. CUSUM-based statistics require completely different approximations
depending on whether the observations are stationary or not. Second, in
order to ensure timely changepoint detection, weighted versions of the CUSUM
process need to be considered, with different sets of weights ensuring
optimal detection delays for different changepoint locations within the
monitoring horizon. Third, on account of the previous point, it is important
to offer, to the applied user, a battery of results on the limiting
distribution of the detection delays, so as to gauge the expected detection
delay depending on the location of the changepoint and the weighing scheme
employed.

Motivated by the questions above, in this paper we investigate, with an
emphasis on completeness, the issue of sequential detection for regime
changes in a GARCH(1,1) sequence%
\begin{equation}
y_{i}=\sigma _{i}\epsilon _{i}\quad \;\;\mbox{and}\quad \;\;\sigma
_{i}^{2}=\omega +\alpha y_{i-1}^{2}+\beta \sigma _{i-1}^{2},\;\;\;1\leq
i<\infty .  \label{arch}
\end{equation}%
In particular, we develop two families of detectors based on the weighted
CUSUM process of the quasi-Fisher scores associated with (\ref{arch}): one
with \textquotedblleft mild\textquotedblright\ weights, designed to detect
changes that may occur \textquotedblleft not too early\textquotedblright\
after the start of the monitoring period; and one with heavy weights,
designed instead to detect changes occurring \textquotedblleft very
early\textquotedblright\ after the start of the monitoring period. The
latter is based on applying to the CUSUM process a set of (heavy) weights,
resulting in a family of test statistics known as \textit{R\'{e}nyi
statistics} (see \citealp{horvath2021} and \citealp{horvathmiller}, for
in-sample tests, and \citealp{ghezzi2024fast}, for sequential monitoring).
Our methodologies can, in principle, be applied to detect any type of change
in the vector $\left( \omega ,\alpha ,\beta \right) $. However, seeing as:
(1) our main interest is in detecting changes between regimes (e.g., from
stationarity to nonstationarity, or vice versa); the stationarity or lack
thereof of (\ref{arch}) is determined solely by $\alpha $ and $\beta $; and
(3) in the presence of nonstationarity, $\omega $ is not identified (%
\citealp{jensen2004asymptotic}), we focus on monitoring for changes only in
the sub-vector $\left( \alpha ,\beta \right) $. Detecting shifts in the
behaviour of the (conditional) volatility process $\sigma _{i}$ is important
in general; as \citet{hillebrand2005neglecting} notes, when neglecting a
break inference is biased in finite samples, and the sum of the estimated
autoregressive parameters $\alpha $ and $\beta $ converges to one.
Furthermore, changes (and, occasionally, explosions) in the volatility of
time series are often observed in practice (see e.g. %
\citealp{bloom2007uncertainty}, and \citealp{jurado2015measuring}). Hence,
finding the start or the end of an explosive regime in $\sigma _{i}$ is of
practical relevance because, as \citet{richter2023testing} put it, one
\textquotedblleft often sees sudden, integrated, or mildly explosive
behaviour in the second moment of the process which bounces back after a
while\textquotedblright\ (p. 468). Changes between stationarity and
explosivity in (\ref{arch}) can be interpreted as \textit{volatility bubbles}%
, i.e. events in which the second moment of the data (rather than the data,
e.g. prices, themselves) experiences periods of exhuberance. The link
between a volatility phenomenon and a \textquotedblleft
proper\textquotedblright\ bubble has not been fully explored yet (see %
\citealp{jurado2015measuring}), and, empirically, explosive regimes in
volatility can be ascribed to various sources in addition to bubbles (see %
\citealp{sornette2018can}, and \citealp{richter2023testing}). Nevertheless,
as \citet{jarrow2023explosion} put it, \textquotedblleft price bubbles
result from excess speculative trading decoupled from the asset's
fundamentals (dividends and liquidation value), which increases the asset's
price volatility to extreme levels\textquotedblright\ (p. 478). Hence, an
analysis of the regimes of the volatility of financial variables is bound to
contribute to a better understanding of bubble phenomena.

Based on the discussion above, in this paper we propose a battery of tests
for the sequential monitoring of the volatility of financial variables,
which complements the existing tests for bubbles based on the conditional
mean. Specifically, we make at least three contributions. First, we study
online detection of changes between stationarity and explosivity and vice
versa, which - to the best of our knowledge - is a novel result in the
literature, and which complements the ex-post detection statistics studied
in \citet{richter2023testing}. Second, we develop the full-blown theory for R%
\'{e}nyi statistics in the context of sequential monitoring of a GARCH(1,1)
model. Third, we derive the limiting distribution of detection delays for
all our monitoring schemes, including those based on R\'{e}nyi statistics;
this is an entirely novel result, which complements the results in %
\citet{horvath2020sequential}.\newline

The remainder of the paper is organised as follows. We discuss our workhorse
model and the main assumptions, as well as the test statistics, in Section %
\ref{model}. The theory is reported in Section \ref{asymptotics}: in
particular, the asymptotics under the null is in Section \ref{asy-null}, and
the full-blown asymptotics of the detection delay in the presence of a
changepoint is in Section \ref{asy-alt}. We validate our theory through a
comprehensive set of simulations (Section \ref{simulations}), and an
empirical application to daily returns of individual stocks (Section \ref%
{empirics}). Section \ref{conclusion} concludes.

NOTATION. We define the Euclidean norm of a vector $a$ as $\Vert a\Vert $.
We denote the integer part of a number as $\left\lfloor \cdot \right\rfloor $%
. We use: \textquotedblleft a.s\textquotedblright\ for \textquotedblleft
almost sure(ly)\textquotedblright ; \textquotedblleft $\rightarrow $%
\textquotedblright\ for the ordinary limit; \textquotedblleft $\overset{%
\mathcal{D}}{\rightarrow }$\textquotedblright\ for convergence in
distribution; \textquotedblleft $\overset{\mathcal{P}}{\rightarrow }$%
\textquotedblright\ for convergence in probability; \textquotedblleft $%
\overset{{\mathcal{D}}}{=}$\textquotedblright\ for equality in distribution.
Positive, finite constants are denoted as $c_{0}$, $c_{1}$, ... and their
value may change from line to line. Other, relevant notation is introduced
later on in the paper.

\section{Model, assumptions, and hypothesis testing\label{model}}

\subsection{Model, assumptions and hypotheses of interest\label{assume}}

The time dependent GARCH (1,1) sequence is defined by the recursion 
\begin{equation}
y_{i}=\sigma _{i}\epsilon _{i}\quad \;\;\mbox{and}\quad \;\;\sigma
_{i}^{2}=\omega _{i}+\alpha _{i}y_{i-1}^{2}+\beta _{i}\sigma
_{i-1}^{2},\;\;\;1\leq i<\infty ,  \label{garch}
\end{equation}%
where $\sigma _{0}^{2}$, $y_{0}^{2}$ are initial values, and $\alpha _{i}$, $%
\beta _{i}$ and $\omega _{i}$ are positive parameters.

Whilst the hypothesis testing framework is spelt out below, our monitoring
schemes are all based on the maintained assumption that we have $m$
observations which form a stable period (this is also known as the \textit{%
non-contamination assumption }in \citealp{CSW96}), viz. 
\begin{equation}
(\omega _{1},\alpha _{1},\beta _{1})=(\omega _{2},\alpha _{2},\beta
_{2})=\ldots =(\omega _{m},\alpha _{m},\beta _{m}).  \label{train}
\end{equation}%
We denote the value of the common parameter in \eqref{train} as $%
\mbox{\boldmath${\theta}$}_{0}=(\alpha _{0},\beta _{0},\omega _{0})^{\top }$%
. Prior to spelling out the main assumptions, we review the conditions for
the stationarity of $y_{i}$. As \citet{nelson1991conditional} shows (see
also \citealp{bougerol1992strict}, and \citealp{francq2012strict})

\vspace{-3pt}

\begin{enumerate}
\item if $E\log \left\vert \alpha _{0}\epsilon _{0}^{2}+\beta
_{0}\right\vert <\infty $, then $\sigma _{i}$ converges exponentially fast
to a unique, strictly stationary and ergodic solution $\left\{ \overline{%
\sigma }_{i},-\infty <i<\infty \right\} $ for all initial values $\epsilon
_{0}$ and $\sigma _{0}$;

\item if $E\log \left\vert \alpha _{0}\epsilon _{0}^{2}+\beta
_{0}\right\vert >\infty $, then $\sigma _{i}$ is nonstationary with $\sigma
_{i}\overset{a.s.}{\rightarrow }\infty $ exponentially fast (%
\citealp{nelson1991conditional});

\item if $E\log \left\vert \alpha _{0}\epsilon _{0}^{2}+\beta
_{0}\right\vert =\infty $, then $\sigma _{i}$ is nonstationary, but this is
a much more delicate case; indeed, \citet{kluppelberg2004continuous} show
that $\sigma _{i}\overset{\mathcal{P}}{\rightarrow }\infty $, but a.s.
divergence to infinity cannot be established. \newline
\end{enumerate}

\vspace{-12pt} We will develop several monitoring schemes for the null
hypothesis that the parameter $\mbox{\boldmath${\theta}$}_{0}$ undergoes no
changes after the historical training period $1\leq i\leq m$, i.e.%
\begin{equation}
H_{0}:\left( \omega _{m+k},\alpha _{m+k},\beta _{m+k}\right) =\left( \alpha
_{0},\beta _{0},\omega _{0}\right) \text{, for all }k\geq 1.  \label{null-h}
\end{equation}%
Under the alternative, we assume that there is a change at time $m+k^{\ast }$%
; whilst this would correspond to having $\left( \omega _{m+k^{\ast
}-j},\alpha _{m+k^{\ast }-j},\beta _{m+k^{\ast }-j}\right) $ $\neq $ $\left(
\omega _{m+k^{\ast }+j},\alpha _{m+k^{\ast }+j},\beta _{m+k^{\ast
}+j}\right) $ for all $j\geq 0$, it is well know that the $\omega _{i}$'s
cannot be identified in explosive, nonstationary regimes (%
\citealp{francq2012strict}). Hence, we will test for%
\begin{align}
H_{A}:\;& (\alpha _{m},\beta _{m})=(\alpha _{m+1},\beta _{m+1})=(\alpha
_{m+2},\beta _{m+2})  \label{alt} \\
& =\ldots =(\alpha _{m+k^{\ast }-1},\beta _{m+k^{\ast }-1})\neq (\alpha
_{m+k^{\ast }},\beta _{m+k^{\ast }})=  \notag \\
& =(\alpha _{m+k^{\ast }+1},\beta _{m+k^{\ast }+1})=\ldots  \notag
\end{align}%
i.e., for the possible presence of changes in $\alpha _{i}$ and $\beta _{i}$
only. Note that these are anyway the parameters of interest, since the
stationarity (or lack thereof) of $y_{i}$ is not affected by $\omega _{i}$.

We require the following assumptions on $\mbox{\boldmath${\theta}$}_{0}$,
and on the innovations $\{\epsilon _{i},-\infty <i<\infty \}$.

\begin{assumption}
\label{aspos}It holds that: $\alpha _{0}>0$, $\beta _{0}>0$ and $\omega
_{0}>0$.
\end{assumption}

\begin{assumption}
\label{asin}It holds that: \textit{(i)} $\{\epsilon _{i},-\infty <i<\infty
\} $ are independent and identically distributed random variables; \textit{%
(ii)} $\epsilon _{0}^{2}$ is nondegenerate; \textit{(iii)} $E\epsilon _{0}=0$%
, $E\epsilon _{i}^{2}=1$, $0<\mbox{\rm var}(\epsilon _{0}^{2})$, and $%
E|\epsilon _{0}|^{\kappa }<\infty $ with some $\kappa >4$.
\end{assumption}

Assumptions \ref{aspos} and \ref{asin} are standard. In particular, it is
worth noting that, in Assumptions \ref{aspos}, there is no requirement as to
the stationarity properties of $\left\{ y_{i},1\leq i\leq m\right\} $: the
observations in the training sample can belong to a stationary or explosive
volatility regime.

\vspace{-12pt}

\subsection{Estimation and monitoring schemes\label{scheme}}

As stated in the Introduction, the main purpose of our analysis is to offer
a detection scheme which finds changes in the parameters of a GARCH(1,1)
model as soon as possible after the training period. In this section, we
propose several detectors, all based on the CUSUM process of the
quasi-Fisher scores.

As is typical, we estimate the parameter $\mbox{\boldmath${\theta}$}_{0}$,
using the training sample, by Quasi Maximum Likelihood (QML). The $\log $
likelihood function is given by 
\begin{equation*}
\bar{\ell}_{i}(\mbox{\boldmath${\theta}$})=\log \bar{\sigma}_{i}^{2}(%
\mbox{\boldmath${\theta}$})+\frac{y_{i}^{2}}{\bar{\sigma}_{i}^{2}(%
\mbox{\boldmath${\theta}$})},\;\;\;\;\;1\leq i\leq m,
\end{equation*}%
where $\mbox{\boldmath${\theta}$}=(\omega ,\alpha ,\beta )^{\top }$, $\omega
>0,\alpha >0$ and $\beta >0$, and the random functions $\bar{\sigma}_{i}^{2}(%
\mbox{\boldmath${\theta}$})$ given by the recursion 
\begin{equation*}
\bar{\sigma}_{i}^{2}(\mbox{\boldmath${\theta}$})=\omega +\alpha
y_{i-1}^{2}+\beta \bar{\sigma}_{i-1}^{2}(\mbox{\boldmath${\theta}$}%
),\;\;\;\;\;1\leq i\leq m;
\end{equation*}%
the recursion starts from the initial values $y_{0}$ and $\bar{\sigma}%
_{0}^{2}$. The QML estimator computed from the historical sample is denoted
as $\hat{\mbox{\boldmath${\theta}$}}_{m}$, with 
\begin{equation*}
\hat{\mbox{\boldmath${\theta}$}}_{m}=\mbox{\rm argmax}\left\{ \sum_{i=1}^{m}%
\bar{\ell}_{i}(\mbox{\boldmath${\theta}$}):\;\mbox{\boldmath${\theta}$}\in %
\mbox{\boldmath${\theta}$}\right\} ,
\end{equation*}%
where the (compact) space $\mbox{\boldmath${\theta}$}$ is defined as%
\begin{equation*}
\mbox{\boldmath${\theta}$}=\left\{ \mbox{\boldmath${\theta}$}:\underline{%
\omega }\leq \omega \leq \overline{\omega },\underline{\alpha }\leq \alpha
\leq \overline{\alpha },\underline{\beta }\leq \beta \leq \overline{\beta }%
\right\} ,
\end{equation*}%
$0<\underline{\omega },\overline{\omega },\underline{\alpha },\overline{%
\alpha },\underline{\beta },\overline{\beta }<\infty $.

Starting with the initial values ${y}_{m}$ and ${\sigma }_{m}^{2}$, we
define the random functions $\bar{\sigma}_{m+k}^{2}(\mbox{\boldmath${%
\theta}$})$ based on the observations after the historical sample by the
recursion 
\begin{equation*}
\bar{\sigma}_{m+k}^{2}(\mbox{\boldmath${\theta}$})=\omega +\alpha
y_{m+k-1}^{2}+\beta \bar{\sigma}_{m+k-1}^{2}(\mbox{\boldmath${\theta}$}%
),\;\;\;\;\;k\geq 1,
\end{equation*}%
with the log likelihood function given by 
\begin{equation*}
\bar{\ell}_{m+k}(\mbox{\boldmath${\theta}$})=\log \bar{\sigma}_{m+k}^{2}(%
\mbox{\boldmath${\theta}$})+\frac{y_{m+k}^{2}}{\bar{\sigma}_{m+k}^{2}(%
\mbox{\boldmath${\theta}$})},\;\;\;\;\;k\geq 1.
\end{equation*}%
Hence, we define the CUSUM\ process of the quasi-Fisher scores as 
\begin{equation}
{\mathcal{r}}_{m,k}(\mbox{\boldmath${\theta}$})=\sum_{i=m+1}^{m+k}\left( 
\frac{\partial \bar{\ell}_{i}(\mbox{\boldmath${\theta}$})}{\partial \alpha },%
\frac{\partial \bar{\ell}_{i}(\mbox{\boldmath${\theta}$})}{\partial \beta }%
\right) ^{\top }\text{, \ \ }k\geq 1.  \label{detector}
\end{equation}%
Heuristically, under the null of no change, the scores have zero mean;
hence, the partial sum process ${\mathcal{r}}_{m,k}(\mbox{\boldmath${%
\theta}$})$ calculated at $\hat{\mbox{\boldmath${\theta}$}}_{m}$\ should
also fluctuate around zero with increasing variance. Conversely, in the
presence of a break (at, say, $k^{\ast }$), $\hat{\mbox{\boldmath${\theta}$}}%
_{m}$ is a biased estimator for the \textquotedblleft new\textquotedblright\
parameter $\mbox{\boldmath${\theta}$}_{m+k^{\ast }}$; thus, ${\mathcal{r}}%
_{m,k}(\mbox{\boldmath${\theta}$})$, calculated at $\hat{\mbox{\boldmath${%
\theta}$}}_{m}$, should have a drift term. In the light of these heuristic
considerations, we propose the following \textit{detector}%
\begin{equation}
{\mathcal{D}}_{m}(k)={\mathcal{r}}_{m,k}^{\top }(\hat{\mbox{\boldmath${%
\theta}$}}_{m})\hat{\mathbf{D}}_{m}^{-1}{\mathcal{r}}_{m,k}(\hat{%
\mbox{\boldmath${\theta}$}}_{m}),  \label{detect}
\end{equation}%
where 
\begin{equation*}
\hat{\mathbf{D}}_{m}=\frac{1}{m}\sum_{i=1}^{m}\left( \frac{\partial \bar{\ell%
}_{i}(\hat{\mbox{\boldmath${\theta}$}}_{m})}{\partial \alpha },\frac{%
\partial \bar{\ell}_{i}(\hat{\mbox{\boldmath${\theta}$}}_{m})}{\partial
\beta }\right) \left( \frac{\partial \bar{\ell}_{i}(\hat{\mbox{\boldmath${%
\theta}$}}_{m})}{\partial \alpha },\frac{\partial \bar{\ell}_{i}(\hat{%
\mbox{\boldmath${\theta}$}}_{m})}{\partial \beta }\right) ^{\top }.
\end{equation*}%
Based on (\ref{detect}), a break is flagged as soon as the detector ${%
\mathcal{D}}_{m}(k)$ exceeds a threshold. We call such a threshold the 
\textit{boundary function}. Similarly to \citet{CSW96}, \citet{lajos04}, %
\citet{horvath2020sequential}, \citet{homm2012testing}, we use the boundary
function, designed for a \textit{closed-ended procedure} - i.e., for a
procedure which terminates at a certain time, say ${\mathcal{n}}$, if there
is no change 
\begin{equation}
g_{m}(k)={{\mathcal{c}}{\mathcal{n}}(k/{\mathcal{n}})^{\eta }},\quad \;\;%
\mbox{with}\;\;0\leq \eta <1.  \label{boundary}
\end{equation}%
On account of (\ref{detect}) and (\ref{boundary}), a changepoint is found at
a stopping time $\tau _{m}$ defined as 
\begin{equation}
\tau _{m}=\left\{ 
\begin{array}{ll}
\min \left\{ k:\in \lbrack 1,2,\ldots ,{\mathcal{n}}-1],\;\;\;{\mathcal{D}}%
_{m}(k)\geq g_{m}(k)\right\} \vspace{0.3cm} &  \\ 
{\mathcal{n}},\;\mbox{if}\;\;{\mathcal{D}}_{m}(k)<g_{m}(k)\;\;\;%
\mbox{for
all}\;\;1\leq k\leq {\mathcal{n}}-1. & 
\end{array}%
\right.  \label{tau}
\end{equation}

The (user-chosen) parameter $\eta $ in (\ref{boundary}) determines the
timeliness of changepoint detection of our sequential monitoring procedure. %
\citet{aue2004delay} and \citet{aue2008monitoring} show that, as $\eta $
approaches $1$, changepoints are detected with a smaller and smaller delay
depending on their location; on the other hand, different values of $\eta $
work better for different changepoint locations, as also pointed out in a
recent contribution by \citet{kirch2022sequential}. In particular, values of 
$0\leq \eta <1$ are able to offer short detection delays for breaks that do
not occur \textquotedblleft too early\textquotedblright\ after $m$.

In order to detect earlier changes, \citet{ghezzi2024fast} (building on
previous contributions by \citealp{horvath2021}, and \citealp{horvathmiller}%
, developed for in-sample changepoint detection) suggest using R\'{e}nyi
type statistics, with stopping rule 
\begin{equation}
\bar{\tau}_{m}=\left\{ 
\begin{array}{ll}
\min \left\{ k:\in \lbrack r,2,\ldots ,{\mathcal{n}}-1],:\;\;{\mathcal{D}}%
_{m}(k)\geq \bar{g}_{m}(k)\right\} \vspace{0.3cm} &  \\ 
{\mathcal{n}},\;\mbox{if}\;\;{\mathcal{D}}_{m}(k)<\bar{g}_{m}(k)\;\;\;%
\mbox{for all}\;\;r\leq k\leq {\mathcal{n}}-1, & 
\end{array}%
\right.   \label{tau-renyi}
\end{equation}%
where $r$ is a trimming sequence specified in Assumption \ref{asrr}, and 
\begin{equation}
\bar{g}_{m}(k)={{\mathcal{c}}r(k/r)^{\eta }},\quad \;\;\mbox{with}\;\;\eta
>1.  \label{renyiboundaryw}
\end{equation}

We note that (\ref{tau}) and (\ref{tau-renyi}) exclude the case $\eta =1$.
Indeed, \citet{aue2004delay} show that, as far as stopping times under the
alternative are concerned, using $\eta =1$ would produce the shortest
detection time. The case $\eta =1$ requires to be studied separately.
Following \citet{lajos07}, we modify the boundary function. Let 
\begin{equation*}
a(x)=(2\log x)^{1/2}\;\quad \;\mbox{and}\;\quad b_{2}(x)=2\log x+\log \log x.
\end{equation*}%
We use the boundary functions 
\begin{equation}
g_{m}^{\ast }(k)=k\left( \frac{{\mathcal{c}}+b_{2}(\log {\mathcal{n}})}{%
a(\log {\mathcal{n}})}\right) ^{2},  \label{boundary-de1}
\end{equation}%
and%
\begin{equation}
\bar{g}_{m}^{\ast }(k)=k\left( \frac{{\mathcal{c}}+b_{2}(\log ({\mathcal{n}}%
/r))}{a(\log ({\mathcal{n}}/r))}\right) ^{2}.  \label{boundary-de2}
\end{equation}%
The corresponding stopping times - denoted as $\tau _{m}^{\ast }$ and $\bar{%
\tau}_{m}^{\ast }$ - are defined exactly in the same way as $\tau _{m}$ and $%
\bar{\tau}_{m}$ in (\ref{tau}) and (\ref{tau-renyi}), respectively, using
now the boundaries $g_{m}^{\ast }(k)$ and $\bar{g}_{m}^{\ast }(k)$.\newline


As mentioned above, we consider a closed-ended scheme, which terminates ${%
\mathcal{n}}$ periods after $m$. The following assumptions characterise the
length of the monitoring horizon and of the trimming sequence $r$ defined in
(\ref{tau-renyi}); in particular, Assumption \ref{assh2} is designed in
order to consider only early changepoint detection.

\begin{assumption}
\label{assh2}It holds that ${\mathcal{n}}\rightarrow \infty $, and ${%
\mathcal{n}}/m\rightarrow 0$.
\end{assumption}

\begin{assumption}
\label{asrr}It holds that, in equation (\ref{tau-renyi}), $r\rightarrow
\infty $ and $r/{\mathcal{n}}\rightarrow 0$.
\end{assumption}

\section{Asymptotics\label{asymptotics}}

In this section, we investigate the asymptotic behaviour of our monitoring
schemes under the null and under the alternative hypotheses.

\subsection{Asymptotics under the null\label{asy-null}}

Let $\mathbf{W}(t)=(W_{1}(t),W_{2}(t)),t\geq 0$ be a two dimensional
standard Wiener process - i.e.,\ $\{W_{1}(t),t\geq 0\}$ and $%
\{W_{2}(t),t\geq 0\}$ are two independent Gaussian processes with $%
EW_{1}(t)=EW_{2}(t)=0$, and covariance kernel $EW_{1}(t)W_{1}(s)$ $=$ $%
EW_{2}(t)W_{2}(s)$ $=\min (t,s)$.

\begin{theorem}
\label{ma1} We assume that $H_{0}$ of (\ref{null-h}) and Assumptions \ref%
{aspos}--\ref{assh2} hold, and that $E\log (\alpha _{0}\epsilon
_{0}^{2}+\beta _{0})\neq 0$.\newline
(i) If $0\leq \eta <1$, then we have 
\begin{equation*}
\lim_{m\rightarrow \infty }P\{\tau _{m}={\mathcal{n}}\}=P\left\{
\sup_{0<t\leq 1}\frac{1}{t^{\eta }}\left\Vert \mathbf{W}(t)\right\Vert
^{2}\leq {\mathcal{c}}\right\} .
\end{equation*}%
(ii) If in addition, Assumption \ref{asrr} also holds and $\eta >1$, then we
have 
\begin{equation*}
\lim_{m\rightarrow \infty }P\{\bar{\tau}_{m}={\mathcal{n}}\}=P\left\{
\sup_{1\leq t<\infty }\frac{1}{t^{\eta }}\left\Vert \mathbf{W}(t)\right\Vert
^{2}\leq {\mathcal{c}}\right\} .
\end{equation*}
\end{theorem}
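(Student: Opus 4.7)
The plan is to rewrite the event $\{\tau_m = \mathcal{n}\}$ as $\{\max_{1 \leq k \leq \mathcal{n}-1}\,\mathcal{D}_m(k)/g_m(k) < 1\}$, and analogously for $\bar\tau_m$, and to identify the limit distribution of this maximum through a functional invariance principle for the score CUSUM process together with the continuous mapping theorem. Substituting $g_m(k) = \mathcal{c}\mathcal{n}(k/\mathcal{n})^{\eta}$ and changing variable to $t = k/\mathcal{n}$, part (i) reduces to showing
\begin{equation*}
\sup_{1/\mathcal{n} \leq t \leq (\mathcal{n}-1)/\mathcal{n}} \frac{\mathcal{D}_m(\lfloor \mathcal{n} t\rfloor)}{\mathcal{n}\, t^{\eta}} \;\convd\; \sup_{0 < t \leq 1} \frac{\|\mathbf{W}(t)\|^2}{t^{\eta}}.
\end{equation*}
For part (ii), the substitution $t = k/r$ in $\bar g_m(k) = \mathcal{c} r(k/r)^{\eta}$, combined with Assumption \ref{asrr}, leads to the analogous statement with supremum range stretching from $[1, \mathcal{n}/r]$ to $[1, \infty)$.

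The technical ingredients are threefold. (a) A Taylor linearisation
\begin{equation*}
\mathcal{r}_{m,k}(\hat{\mbox{\boldmath${\theta}$}}_m) = \mathcal{r}_{m,k}(\mbox{\boldmath${\theta}$}_0) + k\, \mathbf{M}_m\bigl(\hat{\mbox{\boldmath${\theta}$}}_m - \mbox{\boldmath${\theta}$}_0\bigr) + R_{m,k};
\end{equation*}
since $\hat{\mbox{\boldmath${\theta}$}}_m - \mbox{\boldmath${\theta}$}_0 = \Op{m^{-1/2}}$ (valid in both stationary and explosive regimes by \citealp{jensen2004asymptotic}) and $k \leq \mathcal{n} = o(m)$ by Assumption \ref{assh2}, the estimation-error contribution is $\op{\mathcal{n}^{1/2}}$ uniformly in $k$, and may be absorbed into the remainder. (b) A strong approximation $\mathcal{r}_{m,k}(\mbox{\boldmath${\theta}$}_0) = \mathbf{D}^{1/2}\mathbf{W}(k) + o(k^{1/2-\delta})$ a.s.\ for some $\delta > 0$, where $\mathbf{D}$ is the information matrix: under $H_0$ the scores are square-integrable martingale differences with constant conditional variance (because $1 - y_i^2/\bar\sigma_i^2 = 1 - \epsilon_i^2$), and the assumption $E\log(\alpha_0\epsilon_0^2 + \beta_0) \neq 0$ rules out the delicate null-recurrent boundary, ensuring that the approximation holds uniformly in either regime. (c) The consistency $\hat{\mathbf{D}}_m \convP \mathbf{D}$. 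Combining (a)--(c) yields $\mathcal{D}_m(k) = \|\mathbf{W}(k)\|^2 + o_P(k)$ uniformly for $k \geq k_0$; rescaling $k \mapsto \mathcal{n} t$ in part (i), and $k \mapsto r t$ in part (ii), then produces interior functional convergence on any compact subinterval of $(0,1]$ and of $[1,\infty)$, respectively.

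The hard part will be controlling the supremum at the boundary of the respective domain. In part (i) the weight $t^{-\eta}$ blows up as $t \to 0^{+}$; although $\sup_{0<t\leq 1}\|\mathbf{W}(t)\|^2/t^{\eta}$ is a.s.\ finite for $\eta < 1$ by the law of the iterated logarithm at the origin, the discrete quantity $\max_{1 \leq k \leq k_0}\mathcal{D}_m(k)/(\mathcal{n}^{1-\eta} k^{\eta})$ must be shown to be asymptotically negligible as first $m \to \infty$ and then $k_0 \to \infty$, which requires a H\'ajek--R\'enyi type maximal inequality for the weighted score partial sums, transferred from the Brownian side via the strong approximation. In part (ii) the symmetric obstacle lies at $t \to \infty$: after coupling the discrete supremum on $[r, \mathcal{n}-1]$ with the continuous supremum on $[1, \mathcal{n}/r]$, one must stretch the range to $[1,\infty)$; for $\eta > 1$, the LIL at infinity gives $\sup_{t \geq T}\|\mathbf{W}(t)\|^2/t^{\eta} \convP 0$ as $T \to \infty$, which, together with the interior functional convergence from (b), closes the continuous mapping argument.
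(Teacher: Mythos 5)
Your proposal is correct in outline and follows the same backbone as the paper's proof: linearise $\mathcal{r}_{m,k}(\hat{\mbox{\boldmath${\theta}$}}_m)$ around $\mbox{\boldmath${\theta}$}_0$, kill the estimation-error term using $\hat{\mbox{\boldmath${\theta}$}}_m-\mbox{\boldmath${\theta}$}_0=O_P(m^{-1/2})$ together with ${\mathcal{n}}/m\rightarrow 0$, strongly approximate the score CUSUM by a Brownian motion with covariance $\mathbf{D}$, replace $\hat{\mathbf{D}}_m$ by $\mathbf{D}$, and finish with the scale transformation and continuous mapping. Where you differ is in the two technical ingredients you flag. First, for the boundary at $t\to 0^{+}$ (and $t\to\infty$ in part (ii)) you propose a truncation plus a H\'ajek--R\'enyi maximal inequality; the paper instead gets this for free, because its strong approximation (Lemma S2.1 of Aue et al., applied to the scores viewed as decomposable Bernoulli shifts) holds with error $O_P(k^{\zeta})$, $\zeta<1/2$, \emph{uniformly over} $1\leq k<\infty$, so that $\max_{1\leq k\leq{\mathcal{n}}}k^{\zeta}/({\mathcal{n}}^{1/2}(k/{\mathcal{n}})^{\gamma})\to 0$ for any $\gamma<1/2$ (and $\eta/2<1/2$ suffices) handles the whole range, origin included, in one stroke; your route is valid but does extra work. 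Second, you justify the invariance principle via a martingale-difference argument, whereas the paper routes everything through the Bernoulli-shift framework; note that your parenthetical claim of \emph{constant} conditional variance is not right --- the conditional variance of the score is $\mathrm{var}(\epsilon_0^2)$ times a stationary (or, in the explosive case, asymptotically stationary) multiplier --- but this does not affect the validity of the martingale approach. Two further points of imprecision worth fixing: the estimation-error term must be bounded as $O_P(k\,m^{-1/2})$, i.e.\ linearly in $k$, so that it is $o_P({\mathcal{n}}^{1/2}(k/{\mathcal{n}})^{\eta/2})$ uniformly; stating it only as $o_P({\mathcal{n}}^{1/2})$ is too weak for small $k$. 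And in the explosive historical regime the scores themselves are not stationary, so before any invariance principle one must first approximate them by the stationary sequence built from $(1-\epsilon_{m+j}^2)$ and the ${\mathcal{z}}$-weights (the paper's Lemmas on $\mathbf{q}_{m+k}$, following Jensen and Rahbek); your appeal to ``either regime'' glosses over this step, which is where most of the regime-robustness of the theorem actually lives.
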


Theorem \ref{ma1} offers a rule to calculate the asymptotic critical values;
for a given nominal level $\alpha $, the critical value ${\mathcal{c}}%
_{\alpha }$ is defined as%
\begin{equation*}
P\left\{ \sup_{0<t\leq 1}\frac{1}{t^{\eta }}\left\Vert \mathbf{W}%
(t)\right\Vert ^{2}\leq {\mathcal{c}}_{\alpha }\right\} =1-\alpha ,
\end{equation*}%
for all $0\leq \eta <1$, and 
\begin{equation*}
P\left\{ \sup_{1<t<\infty }\frac{1}{t^{\eta }}\left\Vert \mathbf{W}%
(t)\right\Vert ^{2}\leq {\mathcal{c}}_{\alpha }\right\} =1-\alpha ,
\end{equation*}%
for $\eta >1$. Using the scale transformation of the Wiener process, it
immediately follows that $\{\mathbf{W}(t),t>0\}$ $\overset{{\mathcal{D}}}{=}$
$\{t\mathbf{W}(1/t),t>0\}$; hence, for all $\eta >1$ 
\begin{equation*}
\sup_{1\leq t<\infty }\frac{1}{t^{\eta }}\Vert \mathbf{W}(t)\Vert ^{2}%
\overset{{\mathcal{D}}}{=}\sup_{0<t\leq 1}\frac{1}{t^{1-\eta }}\Vert \mathbf{%
W}(t)\Vert ^{2}.
\end{equation*}%
Theorem \ref{ma1} rules out the boundary case $E\log (\alpha _{0}\epsilon
_{0}^{2}+\beta _{0})=0$; this is because we would need an exact (and large
enough) rate of divergence for $\left\vert \sigma _{i}\right\vert $ as $%
i\rightarrow \infty $, but this result is not available in the case $E\log
(\alpha _{0}\epsilon _{0}^{2}+\beta _{0})=0$ (see \citealp{francq2012strict}%
, p. 823; and also Theorem 4 in \citealp{HT2019}, where a similar problem is
encountered, and the discussion thereafter).\newline


Theorem \ref{ma1} does not consider the case $\eta =1$, which corresponds to
the stopping times $\tau _{m}^{\ast }$ and $\bar{\tau}_{m}^{\ast }$ based on
the boundaries defined in (\ref{boundary-de1}) and (\ref{boundary-de2})
respectively. The case $\eta =1$ is studied separately in the following
theorem.

\begin{theorem}
\label{ma2} We assume that $H_{0}$ of (\ref{null-h}) and Assumptions \ref%
{aspos}--\ref{assh2} hold, and that $E\log (\alpha _{0}\epsilon
_{0}^{2}+\beta _{0})\neq 0$. Then, for all $-\infty <{\mathcal{c}}<\infty $,
it holds that\newline
(i) $\lim_{m\rightarrow \infty }P\{\tau _{m}^{\ast }={\mathcal{n}}\}=\exp
\left( -e^{-{\mathcal{c}}}\right) $.

(ii) If in addition Assumption \ref{asrr} also holds, then we have $%
\lim_{m\rightarrow \infty }P\{\bar{\tau}_{m}^{\ast }={\mathcal{n}}\}=\exp
\left( -e^{-{\mathcal{c}}}\right) .$
\end{theorem}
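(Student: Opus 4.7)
The plan is to reduce Theorem \ref{ma2} to the Darling--Erd\H{o}s extreme-value theorem for the two-dimensional Wiener process $\mathbf{W}$, starting from the same strong Gaussian approximation of the CUSUM of quasi-Fisher scores ${\mathcal{r}}_{m,k}(\hat{\mbox{\boldmath${\theta}$}}_{m})$ that already underlies Theorem \ref{ma1}. The choice of boundary functions $g_{m}^{\ast}$ and $\bar{g}_{m}^{\ast}$ is dictated precisely by the centering/scaling of a Darling--Erd\H{o}s limit, so the Gumbel law $\exp(-e^{-{\mathcal{c}}})$ is essentially encoded into (\ref{boundary-de1})--(\ref{boundary-de2}); the work lies in showing that the detector ${\mathcal{D}}_{m}(k)$ behaves like $\|\mathbf{W}(k)\|^{2}$ at the very tight Darling--Erd\H{o}s scale.

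For part (i), I would first rewrite
\begin{equation*}
\{\tau_{m}^{\ast} = {\mathcal{n}}\} = \left\{a(\log {\mathcal{n}})\max_{1 \leq k \leq {\mathcal{n}}-1}\sqrt{{\mathcal{D}}_{m}(k)/k} - b_{2}(\log {\mathcal{n}}) < {\mathcal{c}}\right\}.
\end{equation*}
From the arguments used for Theorem \ref{ma1} one obtains, on a suitable probability space, a two-dimensional standard Wiener process $\mathbf{W}$ with
\begin{equation*}
\hat{\mathbf{D}}_{m}^{-1/2}{\mathcal{r}}_{m,k}(\hat{\mbox{\boldmath${\theta}$}}_{m}) = \mathbf{W}(k) + R_{m,k},
\end{equation*}
where the remainder satisfies $\sup_{1 \leq k \leq {\mathcal{n}}-1}\|R_{m,k}\|/\sqrt{k} = \op{(\log\log {\mathcal{n}})^{-1/2}}$. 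This is exactly the refinement needed to absorb $R_{m,k}$ into the normalizations $a(\log {\mathcal{n}}) = \sqrt{2\log\log {\mathcal{n}}}$ and $b_{2}(\log {\mathcal{n}}) = 2\log\log {\mathcal{n}} + \log\log\log {\mathcal{n}}$. Invoking the Darling--Erd\H{o}s theorem for two-dimensional Brownian motion,
\begin{equation*}
\lim_{T \to \infty}P\left\{a(\log T)\sup_{1 \leq t \leq T}\frac{\|\mathbf{W}(t)\|}{\sqrt{t}} - b_{2}(\log T) \leq {\mathcal{c}}\right\} = \exp(-e^{-{\mathcal{c}}}),
\end{equation*}
with $T = {\mathcal{n}}$ then gives part (i); the passage from the continuous sup to the discrete max over integers costs only $\op{1}$ on the Darling--Erd\H{o}s scale.

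For part (ii), Brownian scaling $\{\mathbf{W}(rt), t \geq 0\} \overset{\mathcal{D}}{=} \{\sqrt{r}\,\mathbf{W}(t), t \geq 0\}$ yields
\begin{equation*}
\sup_{r \leq t \leq {\mathcal{n}}}\frac{\|\mathbf{W}(t)\|}{\sqrt{t}} \overset{\mathcal{D}}{=} \sup_{1 \leq s \leq {\mathcal{n}}/r}\frac{\|\mathbf{W}(s)\|}{\sqrt{s}}.
\end{equation*}
Since $r \to \infty$ and ${\mathcal{n}}/r \to \infty$ under Assumption \ref{asrr}, applying the Darling--Erd\H{o}s limit with $T = {\mathcal{n}}/r$, together with the same approximation (now only needed for $r \leq k \leq {\mathcal{n}}-1$), matches $\bar{g}_{m}^{\ast}$ exactly and delivers the claim.

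The main obstacle is the uniform strong approximation behind Step (i): the admissible error in $\|R_{m,k}\|/\sqrt{k}$ is only of order $1/\sqrt{\log\log {\mathcal{n}}}$, much tighter than what is needed for Theorem \ref{ma1}. Two points require genuine care. First, in the explosive regime $E\log(\alpha_{0}\epsilon_{0}^{2} + \beta_{0}) > 0$ the conditional variance $\sigma_{i}^{2}$ diverges exponentially, so the score partial sums are driven by entirely different dynamics than in the stationary regime, yet the coupling must still deliver the \emph{same} two-dimensional Wiener limit in both cases. Second, the behaviour near $k = 1$ must be controlled separately: one typically truncates at some slowly-growing cutoff $k_{0}$ and uses a direct tail bound to show that $\max_{1 \leq k \leq k_{0}}\|\mathbf{W}(k)\|^{2}/k$ cannot exceed the diverging level $(({\mathcal{c}} + b_{2}(\log {\mathcal{n}}))/a(\log {\mathcal{n}}))^{2}$, since the full Darling--Erd\H{o}s statement is insensitive to such endpoint effects only after this preliminary truncation.
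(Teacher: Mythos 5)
Your proposal follows essentially the same route as the paper's proof: a strong Gaussian approximation of the score CUSUM (carried out separately but in parallel for the stationary and explosive regimes), reduction to the Darling--Erd\H{o}s extreme-value limit for $\sup_{1\le t\le T}\left\Vert \mathbf{W}(t)\right\Vert /t^{1/2}$, a truncation at a slowly growing cutoff (the paper takes $a=(\log {\mathcal{n}})^{\phi }$ with $\phi >4$) to neutralize the contribution of small $k$, and Brownian scaling with $T={\mathcal{n}}/r$ for part (ii). The only imprecision is your displayed claim that $\sup_{1\leq k\leq {\mathcal{n}}-1}\Vert R_{m,k}\Vert /k^{1/2}=o_{P}\left( (\log \log {\mathcal{n}})^{-1/2}\right) $ uniformly down to $k=1$: the coupling only yields $O_{P}(k^{\zeta -1/2})$ with $\zeta <1/2$, hence $O_{P}(1)$ near the origin, and it is exactly the truncation you describe afterwards --- applied to the approximation error as well as to the Wiener maximum --- that repairs this, as in the paper.
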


Theorem \ref{ma2} stipulates that the asymptotic critical values, for a
given nominal level $\alpha $, can be calculated as%
\begin{equation}
{\mathcal{c}}_{\alpha }=\overline{{\mathcal{c}}}_{\alpha }=-\log \left(
-\log \left( 1-\alpha \right) \right) ,  \label{asy-crv-de-2}
\end{equation}%
using ${\mathcal{c}}_{\alpha }$\ or $\overline{{\mathcal{c}}}_{\alpha }$\
according as (\ref{boundary-de1}) or (\ref{boundary-de2}) is employed.
Although the theorem offers an explicit formula to compute asymptotic
critical values, these are bound to be inaccurate due to the slow
convergence to the Extreme Value distribution. In particular, simulations
show that, in finite samples, asymptotic critical values overstate the true
values thus leading to low power.

\subsection{Asymptotics under the alternative\label{asy-alt}}

We now study the behaviour of our monitoring schemes under the alternative,
focussing, in particular, on the limiting distribution of the detection
delay. We report the limiting distribution of the detection delay when using 
$0\leq \eta <1$ in Section \ref{delay-1}; in Section \ref{delay-2}, we
report the limiting distribution of the detection delay when using $\eta >1$.

In both cases, under the alternative $H_{A}$, of (\ref{alt}), the parameter $%
\mbox{\boldmath${\theta}$}_{0}=(\alpha _{0},\beta _{0},\omega _{0})^{\top }$
changes to $\mbox{\boldmath${\theta}$}_{A}=(\alpha _{A},\beta _{A},\omega
_{A})^{\top }$ satisfying

\begin{assumption}
\label{asalt}\;$\alpha_A>0, \beta_A>0, \omega_0>0$, $\bar{%
\mbox{\boldmath${\theta}$}}_0 \neq \bar{\mbox{\boldmath${\theta}$}}_A$, $%
\bar{\mbox{\boldmath${\theta}$}}_0=(\alpha_0,\beta_0)^\top$ and $\bar{%
\mbox{\boldmath${\theta}$}}_A=(\alpha_A, \beta_A)^\top$.
\end{assumption}

\subsubsection{Detection delays with $0\leq \protect\eta <1$\label{delay-1}}

We begin by investigating the asymptotic behaviour of the stopping time $%
\tau _{m}$ defined in (\ref{tau}). Whilst the result in Theorem \ref{thcons}
below is valid for all cases, we need to introduce some preliminary
notation, separately, for the two cases: (1) when the sequence is \textit{%
stationary} after the change and (2) when the sequence is \textit{explosive}
after the change.\newline


\qquad \textit{Preliminary notation}\newline


We begin by introducing some preliminary notation for the former case, i.e. 
\begin{equation}
E\log (\alpha _{A}\epsilon _{0}^{2}+\beta _{A})<0.  \label{stalt}
\end{equation}%
Under (\ref{stalt}), after the change the observations are exponentially
close to $\{\hat{x}_{i},-\infty <i<\infty \}$, a stationary GARCH (1,1)
sequence given by 
\begin{equation}
\hat{x}_{i}=\hat{h}_{i}\epsilon _{i}\;\;\;\mbox{and}\;\;\;\hat{h}%
_{i}^{2}=\omega +\alpha _{A}\hat{x}_{i-1}^{2}+\beta _{A}\hat{h}_{i-1}^{2}.
\label{hatx}
\end{equation}%
We also define the log likelihood function%
\begin{equation*}
\hat{\ell}_{i}(\mbox{\boldmath${\theta}$})=\log \hat{h}_{i}^{2}(%
\mbox{\boldmath${\theta}$})+\frac{\hat{x}_{i}^{2}}{\hat{h}_{i}^{2}(%
\mbox{\boldmath${\theta}$})},
\end{equation*}%
where $\hat{h}_{i}^{2}(\mbox{\boldmath${\theta}$})=\omega +\alpha \hat{x}%
_{i-1}^{2}+\beta \hat{h}_{i-1}^{2}(\mbox{\boldmath${\theta}$})$. Let 
\begin{equation*}
{{\mathcal{r}}}^{(1)}(\mbox{\boldmath${\theta}$})=E\left( \frac{\partial 
\hat{\ell}_{i}(\mbox{\boldmath${\theta}$})}{\alpha },\frac{\partial \hat{\ell%
}_{i}(\mbox{\boldmath${\theta}$})}{\beta }\right) ^{\top },
\end{equation*}%
and define the size of the change as 
\begin{equation}
\mbox{\boldmath${\Delta}$}={{\mathcal{r}}}^{(1)}(\mbox{\boldmath${\theta}$}%
_{0})\neq \mathbf{0}.  \label{deldef}
\end{equation}%
We define the covariance matrix 
\begin{equation}
\mbox{\boldmath${\Sigma}$}_{1}=E\left[ \left( \frac{\partial \hat{\ell}_{i}(%
\mbox{\boldmath${\theta}$}_{0})}{\alpha },\frac{\partial \hat{\ell}_{i}(%
\mbox{\boldmath${\theta}$}_{0})}{\beta }\right) -\mbox{\boldmath${\Delta}$}%
\right] ^{\top }\left[ \left( \frac{\partial \hat{\ell}_{i}(%
\mbox{\boldmath${\theta}$}_{0})}{\alpha },\frac{\partial \hat{\ell}_{i}(%
\mbox{\boldmath${\theta}$}_{0})}{\beta }\right) -\mbox{\boldmath${\Delta}$}%
\right] ,  \label{Sig1}
\end{equation}%
and 
\begin{equation}
t^{\ast }=\lim_{m\rightarrow \infty }k^{\ast }\Biggl/\left( {\mathcal{n}}%
^{1-\eta }\frac{{\mathcal{c}}}{A_{m}}\right) ^{1/(2-\eta )},  \label{tstar}
\end{equation}%
where $A_{m}=\mbox{\boldmath${\Delta}$}^{\top }\hat{\mathbf{D}}_{m}^{-1}%
\mbox{\boldmath${\Delta}$}$. Finally (as far as preliminary notation is
concerned), we define $u_{\mathcal{n}}>0$ as the unique solution of the
equation 
\begin{equation}
u_{\mathcal{n}}^{2}=\left( u_{\mathcal{n}}+k^{\ast }\Biggl/\left( {\mathcal{n%
}}^{1-\eta }\frac{{\mathcal{c}}}{A_{m}}\right) ^{1/(2-\eta )}\right) ^{\eta
},  \label{umdef}
\end{equation}%
and $u^{\ast }>0$ as the solution of 
\begin{equation}
u^{\ast }=\left( u^{\ast }+t^{\ast }\right) ^{\eta /2}.  \label{ustar}
\end{equation}%
It is easy to see that $u_{\mathcal{n}}\rightarrow u^{\ast }$, and that $%
u^{\ast }=1$, if $t^{\ast }=0$. We are now ready to introduce the main
notation to spell out the properties of the stopping time $\tau _{m}$ when
the observations change into a stationary sequence.\newline


We now introduce the preliminary notation for the case when the observations
turn into an explosive sequence after the time of change, i.e. 
\begin{equation}
E\log (\alpha _{A}\epsilon _{0}^{2}+\beta _{A})>0.  \label{nostalst}
\end{equation}%
\citet{jensen2004asymptotic} proved that 
\begin{equation*}
\lim_{k\rightarrow \infty }\frac{1}{k}\sum_{i=m+k^{\ast }+1}^{m+k^{\ast
}+k}E\left( \frac{\partial \bar{\ell}_{i}(\mbox{\boldmath${\theta}$})}{%
\alpha },\frac{\partial \bar{\ell}_{i}(\mbox{\boldmath${\theta}$})}{\beta }%
\right) ^{\top }={\mathcal{r}}^{(2)}(\mbox{\boldmath${\theta}$}),
\end{equation*}%
exists. Similarly to $\mbox{\boldmath${\Delta}$}$ in equation (\ref{deldef}%
), we define the size of the change under $H_{A}$ as $\mbox{\boldmath${%
\Upsilon}$}={\mathcal{r}}^{(2)}(\mbox{\boldmath${\theta}$}_{0})\neq \mathbf{0%
}$. Similarly to (\ref{tstar}), we define%
\begin{equation*}
\tilde{t}^{\ast }=\lim_{m\rightarrow \infty }k^{\ast }\Biggl/\left( {%
\mathcal{n}}^{1-\eta }\frac{{\mathcal{c}}}{B_{m}}\right) ^{1/(2-\eta )},
\end{equation*}%
where $B_{m}=\mbox{\boldmath${\Upsilon}$}^{\top }\hat{\mathbf{D}}_{m}^{-1}%
\mbox{\boldmath${\Upsilon}$}$. Finally, similarly to $u_{\mathcal{n}}$ and $%
u^{\ast }$ we define $\tilde{u}_{\mathcal{n}}$ and $\tilde{u}^{\ast }$ as
the solutions of the equations 
\begin{equation*}
\tilde{u}_{\mathcal{n}}^{2}=\left( \tilde{u}_{\mathcal{n}}+k^{\ast }\Biggl/%
\left( {\mathcal{n}}^{1-\eta }\frac{{\mathcal{c}}}{B_{m}}\right) \right)
^{\eta },\text{ \ \ and \ \ }\tilde{u}^{\ast }=\left( \tilde{u}^{\ast }+%
\tilde{t}^{\ast }\right) ^{\eta /2}.
\end{equation*}%
After the change in the parameters, the gradient of the likelihood function
is approximated with the sequences 
\begin{equation*}
{\mathcal{v}}_{m+k^{\ast }+i,1}=\sum_{j=1}^{\infty }\epsilon _{m+k^{\ast
}+i-j}^{2}\frac{1}{\beta _{A}}\sum_{k=1}^{j}\frac{\beta _{A}}{\alpha
_{A}\epsilon _{m+k^{\ast }+i-k}^{2}+\beta _{A}}\quad \;
\end{equation*}%
and 
\begin{equation*}
{\mathcal{v}}_{m+k^{\ast }+i,2}=\sum_{j=1}^{\infty }\frac{1}{\beta _{A}}%
\sum_{k=1}^{j}\frac{\beta _{A}}{\alpha _{A}\epsilon _{m+k^{\ast
}+i-k}^{2}+\beta _{A}}.
\end{equation*}%
Analogously to $\mbox{\boldmath${\Sigma}$}_{1}$ in (\ref{Sig1}), we finally
introduce 
\begin{equation}
\mbox{\boldmath${\Sigma}$}_{2}=E\left( 1-\epsilon _{m+k^{\ast
}+1}^{2}\right) ^{2}E\left( 
\begin{array}{ll}
\displaystyle{\mathcal{v}}_{m+k^{\ast }+i,1}\vspace{0.2cm} &  \\ 
\displaystyle{\mathcal{v}}_{m+k^{\ast }+i,2} & 
\end{array}%
\right) \left( {\mathcal{v}}_{m+k^{\ast }+i,1},{\mathcal{v}}_{m+k^{\ast
}+i,2}\right) .  \label{Sig2}
\end{equation}


\qquad \textit{Main notation}\newline


After spelling out the preliminary notation for the two cases of the
observations begin stationary and nonstationary, we now introduce the main
notation. As we will see in Theorem \ref{thcons} below, in several cases the
delay $\tau _{m}-k^{\ast }$ converges to a standard normal random variable
after being centered and rescaled; the centering and rescaling for $\tau
_{m}-k^{\ast }$ depend on whether $t^{\ast }<\infty $ or $t^{\ast }=\infty $
($\tilde{t}^{\ast }<\infty $ or $\tilde{t}^{\ast }=\infty $, equivalently).
In particular

\vspace{-12pt}

\begin{enumerate}
\item when $t^{\ast }<\infty $ ($\tilde{t}^{\ast }<\infty $, respectively), $%
\tau _{m}-k^{\ast }$ will be centered around 
\begin{equation*}
v_{1,{\mathcal{n}}}=\left\{ 
\begin{array}{ll}
u_{\mathcal{n}}\left( \displaystyle{\mathcal{n}}^{1-\eta }\frac{\displaystyle%
{\mathcal{c}}}{\displaystyle A_{m}}\right) ^{1/(2-\eta )},\;\;\;\mbox{if}%
\;\;\;E\log (\alpha _{A}\epsilon _{0}^{2}+\beta _{A})<0,\vspace{0.3cm} &  \\ 
\tilde{u}_{\mathcal{n}}\left( \displaystyle{\mathcal{n}}^{1-\eta }\frac{%
\displaystyle{\mathcal{c}}}{\displaystyle B_{m}}\right) ^{1/(2-\eta )},\;\;\;%
\mbox{if}\;\;\;E\log (\alpha _{A}\epsilon _{0}^{2}+\beta _{A})>0, & 
\end{array}%
\right.
\end{equation*}%
and rescaled by%
\begin{equation*}
v_{2,{\mathcal{n}}}=\left\{ 
\begin{array}{ll}
\left( t^{\ast }\mbox{\boldmath${\Delta}$}^{\top }\mathbf{D}^{-1}%
\mbox{\boldmath${\Delta}$}+\left( 1-\displaystyle\frac{\displaystyle t^{\ast
}}{\displaystyle t^{\ast }+u^{\ast }}\right) \mbox{\boldmath${\Delta}$}%
^{\top }\mathbf{D}^{-1}\mbox{\boldmath${\Sigma}$}_{1}\mathbf{D}^{-1}%
\mbox{\boldmath${\Delta}$}\right) ^{1/2}(t^{\ast }+u^{\ast })^{3/2}\vspace{%
0.3cm} &  \\ 
\hspace{1.5cm}\;\;\;\;\times \left( \displaystyle{\mathcal{n}}^{1-\eta }%
\frac{\displaystyle{\mathcal{c}}}{\displaystyle A_{m}}\right) ^{1/(4-2\eta
)},\;\;\;\mbox{if}\;\;\;E\log (\alpha _{A}\epsilon _{0}^{2}+\beta _{A})<0, & 
\\ 
\vspace{0.3cm}\left( \tilde{t}^{\ast }\mbox{\boldmath${\Upsilon}$}^{\top }%
\mathbf{D}^{-1}\mbox{\boldmath${\Upsilon}$}+\left( 1-\displaystyle\frac{%
\displaystyle\tilde{t}^{\ast }}{\displaystyle\tilde{t}^{\ast }+\tilde{u}%
^{\ast }}\right) \mbox{\boldmath${\Upsilon}$}^{\top }\mathbf{D}^{-1}%
\mbox{\boldmath${\Sigma}$}_{2}\mathbf{D}^{-1}\mbox{\boldmath${\Upsilon}$}%
\right) ^{1/2}(\tilde{t}^{\ast }+\tilde{u}^{\ast })^{3/2} &  \\ 
\vspace{0.3cm}\hspace{1.5cm}\;\;\;\;\times \displaystyle\left( \displaystyle{%
\mathcal{n}}^{1-\eta }\frac{\displaystyle{\mathcal{c}}}{\displaystyle B_{m}}%
\right) ^{1/(4-2\eta )},\;\;\;\mbox{if}\;\;\;E\log (\alpha _{A}\epsilon
_{0}^{2}+\beta _{A})>0. & 
\end{array}%
\right.
\end{equation*}

\item when $t^{\ast }=\infty $ ($\tilde{t}^{\ast }<\infty $, respectively), $%
\tau _{m}-k^{\ast }$ will be centered around%
\begin{equation*}
v_{3,{\mathcal{n}}}=\left\{ 
\begin{array}{ll}
\displaystyle\left( \frac{\displaystyle{\mathcal{c}}}{\displaystyle A_{m}}%
\displaystyle{\mathcal{n}}^{1-\eta }(k^{\ast })^{\eta }\right) ^{1/2},\;\;\;%
\mbox{if}\;\;\;E\log (\alpha _{A}\epsilon _{0}^{2}+\beta _{A})<0,\vspace{%
0.3cm} &  \\ 
\displaystyle\left( \displaystyle\frac{\displaystyle{\mathcal{c}}}{%
\displaystyle B_{m}}\displaystyle{\mathcal{n}}^{1-\eta }(k^{\ast })^{\eta
}\right) ^{1/2},\;\;\;\mbox{if}\;\;\;E\log (\alpha _{A}\epsilon
_{0}^{2}+\beta _{A})>0, & 
\end{array}%
\right.
\end{equation*}%
and rescaled by 
\begin{equation*}
v_{4,{\mathcal{n}}}=\left\{ 
\begin{array}{ll}
\displaystyle\frac{\displaystyle(\mbox{\boldmath${\Delta}$}^{\top }\mathbf{D}%
^{-1}\mbox{\boldmath${\Delta}$})^{1/2}}{\displaystyle A_{m}}(k^{\ast
})^{1/2},\;\;\;\mbox{if}\;\;\;E\log (\alpha _{A}\epsilon _{0}^{2}+\beta
_{A})<0, &  \\ 
\vspace{0.5cm} &  \\ 
\displaystyle\frac{\displaystyle(\mbox{\boldmath${\Upsilon}$}^{\top }\mathbf{%
D}^{-1}\mbox{\boldmath${\Upsilon}$})^{1/2}}{\displaystyle B_{m}}(k^{\ast
})^{1/2},\;\;\;\mbox{if}\;\;\;E\log (\alpha _{A}\epsilon _{0}^{2}+\beta
_{A})>0. & 
\end{array}%
\right.
\end{equation*}
\end{enumerate}

In order to present the limiting distribution for both cases, we define: the
Gaussian process $\{\mbox{\boldmath${\Gamma}$}(t),t\geq 0\}$, with $E%
\mbox{\boldmath${\Gamma}$}(t)=\mathbf{0}$ and $E\mbox{\boldmath${\Gamma}$}(t)%
\mbox{\boldmath${\Gamma}$}^{\top }(s)=\min (t,s)\mathbf{D}$; the random
variables 
\begin{equation}
{\mathcal{A}}_{1}=\sup_{0<t\leq 1}\frac{1}{t^{\eta }}\mbox{\boldmath${%
\Gamma}$}(t)\mathbf{D}^{-1}\mbox{\boldmath${\Gamma}$}(t),  \label{a1}
\end{equation}%
\begin{equation}
{\mathcal{A}}_{2}(x)=\left\{ 
\begin{array}{ll}
\displaystyle\sup_{0<t\leq x}(t\mbox{\boldmath${\Delta}$}+%
\mbox{\boldmath${\Gamma}$}(1))^{\top }\mathbf{D}^{-1}(t\mbox{\boldmath${%
\Delta}$}+\mbox{\boldmath${\Gamma}$}(1)),\;\;\;\mbox{if}\;\;\;E\log (\alpha
_{A}\epsilon _{0}^{2}+\beta _{A})<0,\vspace{0.3cm} &  \\ 
\displaystyle\sup_{0<t\leq x}(t\mbox{\boldmath${\Upsilon}$}+%
\mbox{\boldmath${\Gamma}$}(1))^{\top }\mathbf{D}^{-1}(t\mbox{\boldmath${%
\Upsilon}$}+\mbox{\boldmath${\Gamma}$}(1)),\;\;\;\mbox{if}\;\;\;E\log
(\alpha _{A}\epsilon _{0}^{2}+\beta _{A})>0; & 
\end{array}%
\right.  \label{a2}
\end{equation}%
and, finally, the asymptotic variances 
\begin{equation}
\bar{s}_{1}^{2}=\left\{ 
\begin{array}{ll}
\begin{array}{l}
t^{\ast }\mbox{\boldmath${\Delta}$}^{\top }\mathbf{D}^{-1}%
\mbox{\boldmath${\Delta}$}+(1-t^{\ast }/(t^{\ast }+u^{\ast }))%
\mbox{\boldmath${\Delta}$}^{\top }\mathbf{D}^{-1}\mbox{\boldmath${\Sigma}$}%
_{1}\mathbf{D}^{-1}\mbox{\boldmath${\Delta}$},\;\;\;\; \\ 
\;\ \ \ \ \ \ \ \ \ \ \ \ \ \ \ \ \ \ \ \ \ \ \ \ \ \text{if \ \ \ }\;E\log
(\alpha _{A}\epsilon _{0}^{2}+\beta _{A})<0,\vspace{0.3cm}%
\end{array}
&  \\ 
\begin{array}{l}
t^{\ast }\mbox{\boldmath${\Upsilon}$}^{\top }\mathbf{D}^{-1}%
\mbox{\boldmath${\Upsilon}$}+(1-t^{\ast }/(t^{\ast }+u^{\ast }))%
\mbox{\boldmath${\Upsilon}$}^{\top }\mathbf{D}^{-1}\mbox{\boldmath${\Sigma}$}%
_{2}\mathbf{D}^{-1}\mbox{\boldmath${\Upsilon}$},\;\;\;\; \\ 
\;\ \ \ \ \ \ \ \ \ \ \ \ \ \ \ \ \ \ \ \ \ \ \ \ \ \text{if \ \ }\;E\log
(\alpha _{A}\epsilon _{0}^{2}+\beta _{A})>0,%
\end{array}
& 
\end{array}%
\right.  \label{s1}
\end{equation}%
\begin{equation}
\bar{s}_{2}^{2}=\left\{ 
\begin{array}{ll}
\mbox{\boldmath${\Delta}$}^{\top }\mathbf{D}^{-1}\mbox{\boldmath${\Delta}$}%
,\;\;\;\mbox{if}\;\;\;E\log (\alpha _{A}\epsilon _{0}^{2}+\beta _{A})<0,%
\vspace{0.3cm} &  \\ 
\mbox{\boldmath${\Upsilon}$}^{\top }\mathbf{D}^{-1}\mbox{\boldmath${%
\Upsilon}$},\;\;\;\mbox{if}\;\;\;E\log (\alpha _{A}\epsilon _{0}^{2}+\beta
_{A})>0. & 
\end{array}%
\right.  \label{s2}
\end{equation}%
\newline


Let ${\mathcal{N}}$ denote a standard normal random variable, and let 
\begin{equation*}
\lim_{m\rightarrow \infty }\frac{k^{\ast }}{{\mathcal{n}}}=\bar{u}.
\end{equation*}

\begin{theorem}
\label{thcons}We assume that $H_{A}$ of (\ref{alt}) and Assumptions \ref%
{aspos}--\ref{assh2} and \ref{asalt} hold, and that $E\log (\alpha
_{0}\epsilon _{0}^{2}+\beta _{0})\neq 0$. Then, for all $0\leq \eta <1$%
\newline
(i) If 
\begin{equation}
\lim_{m\rightarrow \infty }\frac{k^{\ast }}{{\mathcal{n}}^{(1-\eta )/(2-\eta
)}}<\infty ,  \label{v-early}
\end{equation}%
then 
\begin{equation*}
\frac{\tau _{m}-k^{\ast }-v_{1,{\mathcal{n}}}}{v_{2,{\mathcal{n}}}}\overset{{%
\mathcal{D}}}{\rightarrow }\bar{s}_{1}{\mathcal{N}}(0,1).
\end{equation*}%
(ii) If 
\begin{equation}
\lim_{m\rightarrow \infty }\frac{k^{\ast }}{{\mathcal{n}}^{(1-\eta )/(2-\eta
)}}=\infty ,  \label{m-early}
\end{equation}%
%
%
%
%
%
%
%
%
%
%
%
%
and $\bar{u}=0$, then 
\begin{equation*}
\frac{\tau _{m}-k^{\ast }-v_{3,{\mathcal{n}}}}{v_{4,{\mathcal{n}}}}\overset{{%
\mathcal{D}}}{\rightarrow }\bar{s}_{2}{\mathcal{N}}(0,1).
\end{equation*}%
(iii) If $0<\bar{u}<1$, then 
\begin{equation*}
\lim_{m\rightarrow \infty }P\left\{ \frac{\tau _{m}-k^{\ast }}{(k^{\ast
})^{1/2}}>x\right\} =P\left\{ \bar{u}^{1-\eta }\max \left( {\mathcal{A}}_{1},%
{\mathcal{A}}_{2}(x)\right) \leq {\mathcal{c}}\right\} .
\end{equation*}
\end{theorem}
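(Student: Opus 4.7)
The plan is to analyse the first crossing $\tau_m$ via the identity $\mathcal{D}_m(\tau_m) = g_m(\tau_m)$ at the alarm, obtaining the leading deterministic order of $\tau_m - k^*$ and then its first-order random fluctuation. The common ingredient is an expansion of the CUSUM process $\mathcal{r}_{m,k}(\hat{\mbox{\boldmath${\theta}$}}_m)$ before and after the change. For $k \leq k^*$ the scores still have mean zero at $\mbox{\boldmath${\theta}$}_0$ under $H_A$, so the functional invariance principle underlying Theorem \ref{ma1} gives $\mathcal{n}^{-1/2}\mathcal{r}_{m,\lfloor s\mathcal{n}\rfloor}(\hat{\mbox{\boldmath${\theta}$}}_m) \overset{\mathcal{D}}{\rightarrow} \mbox{\boldmath${\Gamma}$}(s)$; Assumption \ref{assh2} ensures that the sampling error $\hat{\mbox{\boldmath${\theta}$}}_m - \mbox{\boldmath${\theta}$}_0 = O_P(m^{-1/2})$ is negligible at this scale. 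For $k > k^*$ the score picks up a drift. In the stationary post-change case $E\log(\alpha_A\epsilon_0^2 + \beta_A) < 0$ the observations are exponentially close to the stationary sequence (\ref{hatx}), so ergodic theorem plus a martingale CLT give, for $u \to \infty$, $\sum_{i=m+k^*+1}^{m+k^*+u}\nabla\bar{\ell}_i(\hat{\mbox{\boldmath${\theta}$}}_m) = u\mbox{\boldmath${\Delta}$} + \sqrt{u}\,\mbox{\boldmath${\Sigma}$}_1^{1/2}\mbox{\boldmath${\xi}$} + o_P(\sqrt{u})$, with $\mbox{\boldmath${\xi}$}$ a standard bivariate Gaussian vector independent of the pre-change history. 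In the explosive case $E\log(\alpha_A\epsilon_0^2 + \beta_A) > 0$ the analogous expansion follows from the \citet{jensen2004asymptotic} asymptotics quoted just before (\ref{Sig2}), producing $u\mbox{\boldmath${\Upsilon}$} + \sqrt{u}\,\mbox{\boldmath${\Sigma}$}_2^{1/2}\mbox{\boldmath${\xi}$} + o_P(\sqrt{u})$.

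For parts (i) and (ii) the detection window is an $o(\mathcal{n})$-long subinterval, so $\hat{\mathbf{D}}_m \to \mathbf{D}$ is essentially frozen and $g_m$ admits a local Taylor expansion. Writing $\tau_m = k^* + v_{1,\mathcal{n}} + w_m v_{2,\mathcal{n}}$ in case (i), respectively $\tau_m = k^* + v_{3,\mathcal{n}} + w_m v_{4,\mathcal{n}}$ in case (ii), substituting the expansion of $\mathcal{r}_{m,\tau_m}$ above, and using the defining identity (\ref{umdef}) of $u_{\mathcal{n}}$ (or its case-(ii) analogue), the leading $O(1)$ terms cancel and $w_m$ is expressed, up to $o_P(1)$, as a linear functional of the two independent Gaussian contributions $\mathcal{r}_{m,k^*}/\sqrt{k^*}$ (pre-change noise) and $\mbox{\boldmath${\Sigma}$}_j^{1/2}\mbox{\boldmath${\xi}$}$ (post-change noise). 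Their combined asymptotic variance equals $\bar{s}_1^2$ or $\bar{s}_2^2$: in case (i), $k^*$ is comparable with $v_{1,\mathcal{n}}$, so both contributions survive and appear with the explicit weights $t^*$ and $u^*/(t^*+u^*)$ produced by the linearization (the latter coming from the derivative of $g_m$ evaluated at $k^* + v_{1,\mathcal{n}}$); in case (ii), $k^* \gg v_{3,\mathcal{n}}$ and only the pre-change noise matters, collapsing the variance to $\bar{s}_2^2 = \mbox{\boldmath${\Delta}$}^\top\mathbf{D}^{-1}\mbox{\boldmath${\Delta}$}$ (or its $\mbox{\boldmath${\Upsilon}$}$-analogue). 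A standard monotonicity argument around the deterministic root then lifts $w_m \overset{\mathcal{D}}{\rightarrow} \bar{s}_j\mathcal{N}$ to the statement about $\tau_m$.

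For part (iii), $k^* \sim \bar{u}\mathcal{n}$ with $\bar{u} \in (0,1)$, so the delay has order $\sqrt{k^*} = o(\mathcal{n})$ and $(k/\mathcal{n})^\eta$ is asymptotically $\bar{u}^\eta$ throughout the detection window. The event $\{\tau_m - k^* > x\sqrt{k^*}\}$ factorises as ``no crossing on $[1,k^*]$'' and ``no crossing on $(k^*,k^*+x\sqrt{k^*}]$''. By the invariance principle of the first paragraph and Brownian rescaling $\mbox{\boldmath${\Gamma}$}(\bar{u}r) \overset{\mathcal{D}}{=} \sqrt{\bar{u}}\,\mbox{\boldmath${\Gamma}$}(r)$, the first piece converges to $\{\bar{u}^{1-\eta}\mathcal{A}_1 \leq \mathcal{c}\}$. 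On the post-change window the added centered noise $\sum_{i=m+k^*+1}^{m+k^*+t\sqrt{k^*}}(\nabla\bar{\ell}_i - \mbox{\boldmath${\Delta}$})$ has size $O_P((t\sqrt{k^*})^{1/2}) = O_P(\mathcal{n}^{1/4}) = o_P(\sqrt{\mathcal{n}})$, so the dominant randomness remains in $\mathcal{r}_{m,k^*}$ and $\mathcal{n}^{-1/2}\mathcal{r}_{m,k^*+t\sqrt{k^*}} \overset{\mathcal{D}}{\rightarrow} \sqrt{\bar{u}}\bigl(\mbox{\boldmath${\Gamma}$}(1) + t\mbox{\boldmath${\Delta}$}\bigr)$, jointly with the pre-change limit. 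Squaring under $\mathbf{D}^{-1}$, dividing by the limiting boundary value $\mathcal{c}\bar{u}^\eta$, and taking the sup over $t \in (0,x]$ gives the second piece $\{\bar{u}^{1-\eta}\mathcal{A}_2(x) \leq \mathcal{c}\}$. Both pieces are driven by the same limiting Gaussian process, yielding the claimed $P\{\bar{u}^{1-\eta}\max(\mathcal{A}_1,\mathcal{A}_2(x)) \leq \mathcal{c}\}$.

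The principal technical obstacle is the explosive post-change regime $E\log(\alpha_A\epsilon_0^2 + \beta_A) > 0$: there $\sigma_i$ diverges almost surely and exponentially, so neither stationarity nor an ergodic theorem is directly available, and the score sums $\sum_i\nabla\bar{\ell}_i$ must be controlled through the \citet{jensen2004asymptotic} representation, with both the drift $\mbox{\boldmath${\Upsilon}$}$ and the CLT variance $\mbox{\boldmath${\Sigma}$}_2$ arising from the auxiliary sequences $\mathcal{v}_{m+k^*+i,1}$ and $\mathcal{v}_{m+k^*+i,2}$. Establishing this uniformly over the (possibly diverging) index range of interest and then splicing it to the pre-change invariance principle across the changepoint is the hard part; with that in hand, the three regimes follow from the linearization/invariance-principle template above.
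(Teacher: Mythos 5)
Your proposal is correct and follows essentially the same route as the paper's proof: the drift-plus-noise decomposition of the detector, localization of the first crossing around a deterministic root, a CLT mixing the pre-change noise (covariance $\mathbf{D}$) with the centered post-change noise (covariance $\Sigma_1$ or $\Sigma_2$, the latter obtained via the Jensen--Rahbek stationary approximation of the score in the explosive regime), and, for part (iii), the split of the supremum across the changepoint with Brownian rescaling producing the factor $\bar{u}^{1-\eta}$. One cosmetic imprecision: in case (i) the weight $u^{\ast }/(t^{\ast }+u^{\ast })$ on the $\Sigma_1$ term of $\bar{s}_1^2$ arises from the fraction of the alarm time $\mathcal{M}$ that lies after $k^{\ast }$ (variance accounting between the two independent noise sources), not from the derivative of the boundary $g_m$, which instead enters the scaling sequence $v_{2,\mathcal{n}}$.
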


In order to understand the practical implications of Theorem \ref{thcons},
note that (up to some positive and finite constant) 
\begin{equation*}
v_{1,{\mathcal{n}}}\approx \left( m^{1-\eta }\right) ^{1/(2-\eta )},\quad
\;\;v_{2,{\mathcal{n}}}\approx \left( m^{1-\eta }\right) ^{1/(4-2\eta
)}\approx v_{1,{\mathcal{n}}}^{1/2},
\end{equation*}%
\begin{equation*}
v_{3,{\mathcal{n}}}\approx \left[ {\mathcal{n}}^{1-\eta }(k^{\ast })^{\eta }%
\right] ^{1/2}\;\;\;\mbox{and}\;\;\;v_{4,{\mathcal{n}}}\approx (k^{\ast
})^{1/2}.
\end{equation*}%
The case (\ref{v-early}) corresponds to a \textquotedblleft very
early\textquotedblright\ break; in this case, Theorem \ref{thcons} states
that the expected delay is approximately $v_{1,{\mathcal{n}}}$, i.e. that it
is approximately equal to $\left( m^{1-\eta }\right) ^{1/(2-\eta )}$.
Clearly, as $\eta $ increases, $v_{1,{\mathcal{n}}}$ decreases; the
dispersion around the expected delay, measured by $v_{2,{\mathcal{n}}}$,
also decreases, indicating that the choice of $\eta $ plays a role in
determining the delay in detecting (very early) changepoints, and that
larger values of $\eta $ reduce such a delay. In the presence of an
\textquotedblleft early, but not so early\textquotedblright\ break -
corresponding to case \textit{(ii)} of the theorem, where recall that $%
k^{\ast }=o\left( {\mathcal{n}}\right) $, the expected delay $v_{3,{\mathcal{%
n}}}$ still decreases as $\eta $ increases, as long as $k^{\ast }={\mathcal{n%
}}^{\gamma }$, for $\gamma >1/(2-\eta )$, but the dispersion around the
expected delay - given by the standardization $v_{4,{\mathcal{n}}}$ - does
not depend on $\eta $. Finally, the case of a late(r) change is studied in
part \textit{(iii)} of the theorem: in such a case, $\eta $ - and therefore
the weight function in the definition of the detector - does not play any
role.

\subsubsection{Detection delays when $\protect\eta >1$\label{delay-2}}

We now investigate the asymptotic behaviour of the stopping time $\tilde{\tau%
}_{m}$ defined in (\ref{tau-renyi}) - that is, when the detector is a R\'{e}%
nyi type statistic with $\eta >1$. In such a case, the asymptotic behaviour
of the detection delay uses the same notation irrrespective of whether $%
E\log (\alpha _{A}\epsilon _{0}^{2}+\beta _{A})<0$ or $>0$.

Let 
\begin{equation*}
{\mathcal{a}}=\lim_{m\rightarrow \infty }\frac{k^{\ast }}{r}\in \lbrack
0,\infty ].
\end{equation*}%
We define two independent normal random vectors $\mathbf{N}_{1}$ and $%
\mathbf{N}_{2}$ such that $E\mathbf{N}_{1}=\mathbf{0},E\mathbf{N}_{2}=%
\mathbf{0},E\mathbf{N}_{1}\mathbf{N}_{1}^{\top }=\mathbf{D}$ and 
\begin{equation*}
E\mathbf{N}_{2}\mathbf{N}_{2}^{\top }=\left\{ 
\begin{array}{ll}
\mbox{\boldmath${\Sigma}$}_{1},\;\;\;\;\mbox{if}\;\;\;E\log (\alpha
_{A}\epsilon _{0}^{2}+\beta _{A})<0,\vspace{0.3cm} &  \\ 
\mbox{\boldmath${\Sigma}$}_{2},\;\;\;\;\mbox{if}\;\;\;E\log (\alpha
_{A}\epsilon _{0}^{2}+\beta _{A})>0. & 
\end{array}%
\right.
\end{equation*}%
Similarly to ${\mathcal{A}}_{1}$ and ${\mathcal{A}}_{2}(x)$ in (\ref{a1})
and (\ref{a2}), we define 
\begin{equation}
{\mathcal{B}}_{1}=\sup_{1\leq t\leq {\mathcal{a}}}\frac{1}{t^{\eta }}%
\mbox{\boldmath${\Gamma}$}(t)\mathbf{D}^{-1}\mbox{\boldmath${\Gamma}$}(t),
\label{b1}
\end{equation}%
\begin{equation}
{\mathcal{B}}_{2}(x)=\left\{ 
\begin{array}{ll}
\begin{array}{l}
\displaystyle\frac{1}{{\mathcal{a}}^{\eta }}\sup_{0<t\leq x}(t%
\mbox{\boldmath${\Delta}$}+\mbox{\boldmath${\Gamma}$}({\mathcal{a}}))^{\top }%
\mathbf{D}^{-1}(t\mbox{\boldmath${\Delta}$}+\mbox{\boldmath${\Gamma}$}({%
\mathcal{a}})),\; \\ 
\;\ \ \ \ \ \ \ \ \ \ \ \ \ \ \ \ \ \ \ \ \ \ \ \text{ \ \ }\;\mbox{if}%
\;\;\;E\log (\alpha _{A}\epsilon _{0}^{2}+\beta _{A})<0,\vspace{0.3cm}%
\end{array}
&  \\ 
\begin{array}{l}
\displaystyle\frac{1}{{\mathcal{a}}^{\eta }}\sup_{0<t\leq x}(t%
\mbox{\boldmath${\Upsilon}$}+\mbox{\boldmath${\Gamma}$}({\mathcal{a}}%
))^{\top }\mathbf{D}^{-1}(t\mbox{\boldmath${\Upsilon}$}+\mbox{\boldmath${%
\Gamma}$}({\mathcal{a}})),\; \\ 
\;\text{\ \ \ \ \ \ \ \ \ \ \ \ \ \ \ \ \ \ \ \ \ \ \ \ \ }\;\mbox{if}%
\;\;\;E\log (\alpha _{A}\epsilon _{0}^{2}+\beta _{A})>0,%
\end{array}
& 
\end{array}%
\right.  \label{b2}
\end{equation}%
and the centering sequence

\begin{equation*}
v_{5,{\mathcal{n}}}=\left\{ 
\begin{array}{ll}
\displaystyle\left( \displaystyle\frac{\displaystyle{\mathcal{c}}}{%
\displaystyle A_{m}}\displaystyle\frac{\displaystyle(k^{\ast })^{\eta }}{%
\displaystyle r^{1-\eta }}\right) ^{1/2},\;\;\;\mbox{if}\;\;\;E\log (\alpha
_{A}\epsilon _{0}^{2}+\beta _{A})<0,\vspace{0.3cm} &  \\ 
\displaystyle\left( \displaystyle\frac{\displaystyle{\mathcal{c}}}{B_{m}}%
\displaystyle\frac{\displaystyle(k^{\ast })^{\eta }}{\displaystyle r^{1-\eta
}}\right) ^{1/2},\;\;\;\mbox{if}\;\;\;E\log (\alpha _{A}\epsilon
_{0}^{2}+\beta _{A})>0, & 
\end{array}%
\right.
\end{equation*}%
and the rescaling sequence $v_{6,{\mathcal{n}}}=(k^{\ast })^{1/2}$.

\begin{theorem}
\label{altre}We assume that $H_{A}$ of (\ref{alt}) and Assumptions \ref%
{aspos}--\ref{assh2} and \ref{asalt} hold, and that $E\log (\alpha
_{0}\epsilon _{0}^{2}+\beta _{0})\neq 0$. Then, for all $1<\eta <2$.\newline
(i) If $k^{\ast }\leq r$ and ${\mathcal{a}}=0$ hold, then 
\begin{equation*}
\lim_{m\rightarrow \infty }P\{\bar{\tau}_{m}=r\}=1.
\end{equation*}%
(ii) If $k^{\ast }\leq r$ and ${\mathcal{a}}>0$ hold, then 
\begin{equation*}
\lim_{m\rightarrow \infty }P\{\bar{\tau}_{m}=r\}=P\left\{ ({\mathcal{a}}%
^{1/2}\mathbf{N}_{1}+{\mathcal{a}}\mbox{\boldmath${\Delta}$}+\mathbf{N}%
_{2})^{\top }\mathbf{D}^{-1}({\mathcal{a}}^{1/2}\mathbf{N}_{1}+{\mathcal{a}}%
\mbox{\boldmath${\Delta}$}+\mathbf{N}_{2})>{\mathcal{c}}\right\} .
\end{equation*}%
(iii) If $k^{\ast }>r$ and ${\mathcal{a}}<\infty $, then 
\begin{equation*}
\lim_{m\rightarrow \infty }P\left\{ \bar{\tau}_{m}>k^{\ast
}+xr^{1/2}\right\} =P\left\{ \max \left( {\mathcal{B}}_{1},{\mathcal{B}}%
_{2}(x)\right) \leq {\mathcal{c}}\right\} .
\end{equation*}%
(iv) If $k^{\ast }>r,\bar{u}<1$ and ${\mathcal{a}}=\infty $ hold, then 
\begin{equation*}
\frac{\bar{\tau}_{m}-k^{\ast }-v_{5,{\mathcal{n}}}}{v_{6,{\mathcal{n}}}}%
\overset{{\mathcal{D}}}{\rightarrow }\bar{s}_{2}{\mathcal{N}}.
\end{equation*}
\end{theorem}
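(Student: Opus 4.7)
The overall strategy is to reduce $\bar{\tau}_m$ to the study of $\mathcal{D}_m(k)$ on $k\in[r,\mathcal{n}-1]$ and to treat each of the four regimes (i)--(iv) by a case-specific limit argument, all sharing a common preparation that parallels the null analysis of Theorem \ref{ma1} and the alternative analysis of Theorem \ref{thcons}.

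The common preparation has two steps. First, I would linearize $\mathcal{r}_{m,k}(\hat{\mbox{\boldmath${\theta}$}}_{m})=\mathcal{r}_{m,k}(\mbox{\boldmath${\theta}$}_{0})+\mathbf{H}_{m,k}(\hat{\mbox{\boldmath${\theta}$}}_{m}-\mbox{\boldmath${\theta}$}_{0})+o_P(k^{1/2})$, where $\mathbf{H}_{m,k}$ is the Hessian-like matrix of the quasi log-likelihood; using $\|\hat{\mbox{\boldmath${\theta}$}}_{m}-\mbox{\boldmath${\theta}$}_{0}\|=O_P(m^{-1/2})$ from QML consistency together with Assumption \ref{assh2} (so $k\leq\mathcal{n}=o(m)$), the estimation-error term is uniformly negligible on the monitoring window, and $\hat{\mathbf{D}}_m$ may be replaced by its population limit $\mathbf{D}$ by ergodicity on the historical sample. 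Second, split $\mathcal{r}_{m,k}(\mbox{\boldmath${\theta}$}_{0})$ at the change-point into a pre-change martingale on $\{i\leq m+\min(k,k^{\ast})\}$ with increment covariance $\mathbf{D}$, and a post-change piece on $\{m+k^{\ast}<i\leq m+k\}$. Under (\ref{stalt}) the geometric coupling with $\hat{x}_i$ of (\ref{hatx}) gives the post-change piece the form $(k-k^{\ast})\mbox{\boldmath${\Delta}$}$ plus a centered ergodic martingale with increment covariance $\mbox{\boldmath${\Sigma}$}_{1}$; under (\ref{nostalst}) the series representation through $\mathcal{v}_{m+k^{\ast}+i,1},\mathcal{v}_{m+k^{\ast}+i,2}$ and the Jensen--Rahbek Ces\`aro-type limit give drift $\mbox{\boldmath${\Upsilon}$}$ with centered covariance $\mbox{\boldmath${\Sigma}$}_{2}$. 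The pre- and post-change pieces are asymptotically independent.

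Case (i), $\mathcal{a}=0$: at $k=r$ the drift equals $(1-o(1))r\mbox{\boldmath${\Delta}$}$ (or $r\mbox{\boldmath${\Upsilon}$}$), so $\mathcal{D}_m(r)=(1+o_P(1))r^{2}\mbox{\boldmath${\Delta}$}^{\top}\mathbf{D}^{-1}\mbox{\boldmath${\Delta}$}$, which eventually exceeds $\bar{g}_m(r)=\mathcal{c}r$, giving $P\{\bar{\tau}_m=r\}\to 1$. Case (ii), $\mathcal{a}>0$: a joint CLT applied to $\mathcal{r}_{m,r}(\mbox{\boldmath${\theta}$}_{0})$ with the scaling appropriate to $k^{\ast}/r\to\mathcal{a}$ yields weak convergence to $\mathcal{a}^{1/2}\mathbf{N}_{1}+\mathcal{a}\mbox{\boldmath${\Delta}$}+\mathbf{N}_{2}$, and evaluating the quadratic form against $\mathcal{c}$ produces the stated probability. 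Case (iii), $k^{\ast}>r$ with $\mathcal{a}<\infty$: on $[r,\,k^{\ast}+O(r^{1/2})]$ the normalized detector process converges, by a strong invariance principle applied jointly to the pre-change martingale and the post-change centered martingale, to the Gaussian object $\mbox{\boldmath${\Gamma}$}$ on $[1,\mathcal{a}]$ (producing $\mathcal{B}_{1}$) and, after the change-point, to the drifted Gaussian yielding $\mathcal{B}_{2}(x)$; the continuous-mapping theorem applied to the first-passage event of the $(k/r)^{\eta}$-weighted maximum then gives the claim. Case (iv), $\mathcal{a}=\infty$: because $k^{\ast}\gg r$ the pre-change fluctuation is of smaller order than the drift at the critical crossing time $k^{\ast}+v_{5,\mathcal{n}}$; a local CLT around this deterministic crossing, combined with the explicit form of $\bar{g}_m(k)$ for $k$ in a neighborhood of $k^{\ast}+v_{5,\mathcal{n}}$, delivers the Gaussian limit with centering $v_{5,\mathcal{n}}$, scaling $v_{6,\mathcal{n}}=(k^{\ast})^{1/2}$ and asymptotic variance $\bar{s}_{2}^{2}$.

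The main obstacle is the explosive branch (\ref{nostalst}): here $\sigma_i\to\infty$ a.s., the stationary coupling is unavailable, and one must verify both the Ces\`aro convergence of the score means to $\mbox{\boldmath${\Upsilon}$}$ and a martingale CLT with covariance $\mbox{\boldmath${\Sigma}$}_{2}$ through the $\mathcal{v}$-series expansion, along the lines of \citet{jensen2004asymptotic}. A secondary technical difficulty, shared with the proof of Theorem \ref{thcons}(iii) but magnified here by the R\'enyi weight $(k/r)^{\eta}$ with $\eta>1$, is to obtain a strong invariance principle for the pre- and post-change martingale parts that is uniform in $k\in[r,\mathcal{n}-1]$ and is compatible with the heavier weighting; without such uniform control the first-passage statement in (iii) cannot be extracted from the finite-dimensional CLT alone, and the same mechanism is what rules out any contribution of the pre-change fluctuation in case (iv).
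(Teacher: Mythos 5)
Your overall architecture is the same as the paper's: write $\mathcal{r}_{m,k}(\hat{\mbox{\boldmath${\theta}$}}_{m})=(k-k^{\ast})\mbox{\boldmath${\Delta}$}+\mathbf{m}_{m,k}$ (or with $\mbox{\boldmath${\Upsilon}$}$ in the explosive branch), expand the quadratic form $\mathcal{D}_{m,k}$ into drift, cross and fluctuation terms, use the asymptotic independence of the pre-change partial sum $\mathcal{r}_{m,k^{\ast}}$ and the post-change centred piece, and then analyse the first-passage event at a case-specific boundary point $\mathcal{M}$; your treatment of cases (i)--(iii) and your identification of the two technical obstacles (the explosive branch via the $\mathcal{v}$-series, and the $r^{1/2}(k/r)^{\gamma}$-weighted uniform strong approximation needed for $\eta>1$) match the paper's Lemmas B.5--B.10 and the corresponding passage of the proof.

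There is, however, a genuine error in your case (iv): you assert that the argument ``rules out any contribution of the pre-change fluctuation.'' In fact the pre-change fluctuation is precisely the source of the limiting Gaussian in part (iv). With $\mathcal{M}=k^{\ast}+v_{5,\mathcal{n}}+x(k^{\ast})^{1/2}$ one has, from \eqref{la1} and \eqref{la2}, $(k^{\ast})^{1/2}\ll\mathcal{M}-k^{\ast}\ll k^{\ast}$. Hence in $\mathbf{m}_{m,\mathcal{M}}=\mathcal{r}_{m,k^{\ast}}(\hat{\mbox{\boldmath${\theta}$}}_{m})+[\mathcal{r}_{m,\mathcal{M}}-\mathcal{r}_{m,k^{\ast}}-(\mathcal{M}-k^{\ast})\mbox{\boldmath${\Delta}$}]$ the post-change centred term is $O_{P}((\mathcal{M}-k^{\ast})^{1/2})=o_{P}((k^{\ast})^{1/2})$, while the pre-change term is exactly of order $(k^{\ast})^{1/2}$ and dominates. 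The cross term $2(\mathcal{M}-k^{\ast})\mbox{\boldmath${\Delta}$}^{\top}\hat{\mathbf{D}}_{m}^{-1}\mathbf{m}_{m,\mathcal{M}}$ therefore reduces to $2(\mathcal{M}-k^{\ast})\mbox{\boldmath${\Delta}$}^{\top}\mathbf{D}^{-1}\mathcal{r}_{m,k^{\ast}}(\hat{\mbox{\boldmath${\theta}$}}_{m})(1+o_{P}(1))$, and since $(k^{\ast})^{-1/2}\mathcal{r}_{m,k^{\ast}}(\hat{\mbox{\boldmath${\theta}$}}_{m})$ converges to a normal vector with covariance $\mathbf{D}$, one obtains the variance $\mbox{\boldmath${\Delta}$}^{\top}\mathbf{D}^{-1}\mathbf{D}\mathbf{D}^{-1}\mbox{\boldmath${\Delta}$}=\bar{s}_{2}^{2}$ -- note that $\bar{s}_{2}^{2}$ is built from $\mathbf{D}$, the covariance of the \emph{pre-change} scores, not from $\mbox{\boldmath${\Sigma}$}_{1}$ or $\mbox{\boldmath${\Sigma}$}_{2}$. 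If you genuinely discard the pre-change fluctuation, the only remaining randomness at scale $(k^{\ast})^{1/2}$ vanishes and your ``local CLT'' has nothing to converge: you would get a degenerate limit for $(\bar{\tau}_{m}-k^{\ast}-v_{5,\mathcal{n}})/v_{6,\mathcal{n}}$, contradicting the statement. The correct statement of the mechanism is the reverse of yours: the drift dominates both fluctuations (which pins the crossing near $k^{\ast}+v_{5,\mathcal{n}}$), the post-change fluctuation is negligible relative to the pre-change one, and the pre-change partial sum supplies the $\bar{s}_{2}\mathcal{N}$ limit.
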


Similarly to Theorem \ref{thcons}, Theorem \ref{altre} describes the
detection delay when using R\'{e}nyi type statistics depending on the
location of the break; to the best of our knowledge, this is the first time
such a result has ever been derived. Part \textit{(i)} of the theorem is
also derived in \citet{ghezzi2024fast}, and, in essence, it states that if
the break occurs prior to the trimming sequence $r$ in the R\'{e}nyi type
statistics, then it is identified straight at $r$ - that is, as soon as the R%
\'{e}nyi type statistics starts the monitoring. Parts \textit{(ii)} and 
\textit{(iii) }of the theorem refine and extend the results in %
\citet{ghezzi2024fast}. Finally, part \textit{(iv)} states that, in the case
of a break occurring late - or, better, much later than $r$ - a large value
of $\eta $ could even be detrimental because the centering sequence $v_{5,{%
\mathcal{n}}}$ diverges with $k^{\ast }$, at a faster rate as $\tau $
increases. This confirms the common wisdom (see \citealp{kirch2022asymptotic}
and \citealp{kirch2022sequential}), and the findings in %
\citet{ghezzi2024fast}, that R\'{e}nyi type statistics are designed for the
fast detection of very early occurring breaks, whereas they may yield
suboptimal results for later breaks.

\section{Simulations\label{simulations}}

In this section, we assess the finite sample performance of our monitoring
procedures via Monte Carlo simulations. According to the theory in Section %
\ref{asymptotics}, we can have two classes of monitoring schemes, based on $%
\eta \neq 1$ (covered by Theorem \ref{ma1}) and $\eta=1$ (covered by Theorem %
\ref{ma2}). For the sake of brevity, here we only focus on the case $\eta
\neq 1$. We consider several data generating processes (DGP). We use three
lengths of the historical training sample $m=500,1000,5000$ and two lengths
of the monitoring $\mathcal{n}=250,500$. The sequential procedure is
performed $5,000$ times with independently generated samples, and the
percentage of simulations for which the detector crosses the boundary
functions is reported for several values of $\eta $. For the R\'{e}nyi type
statistic based on Theorem \ref{ma1}(\textit{ii}), we follow \cite%
{horvath2021} and set $r=\sqrt{\mathcal{n}}$. Guidelines on implementation
are provided in Section \ref{sec:implementation} of the Supplement. To
obtain critical values, we simulate two independent standard Wiener
processes $W_{1}(t)$ and $W_{2}(t)$ on a grid of $100,000$ equally spaced
points in the unit interval $\left[ 0,1\right] $ and compute $\sup_{0<t\leq
1}\frac{1}{t^{\eta }}\left( W_{1}^{2}(t)+W_{2}^{2}(t)\right) $ and $%
\sup_{0<t\leq 1}\frac{1}{t^{1-\eta }}\left( W_{1}^{2}(t)+W_{2}^{2}(t)\right) 
$. We repeat this by $100,000$ times and obtain the empirical $90\%$, $95\%$%
, and $99\%$ percentiles of the above two quantities, corresponding to the
critical values at $10\%$, $5\%$, and $1\%$ levels based on Theorem \ref{ma1}%
(\textit{i}) and (\textit{ii}). Critical values are in Table \ref%
{tab:cv_i_ii}.

\begin{table}[tbph]
\caption{Critical values}
\label{tab:cv_i_ii}\centering
{\scriptsize {\ 
\begin{tabular}{llcccp{1cm}llccc}
\multicolumn{5}{c}{Based on Theorem 3.1 (\textit{i})} &  & 
\multicolumn{5}{c}{Based on Theorem 3.1 (\textit{ii})} \\ 
\cmidrule{1-5}\cmidrule{7-11} $\eta/\alpha$ &  & 10\% & 5\% & 1\% &  & $%
\eta/\alpha$ &  & 10\% & 5\% & 1\% \\ 
\cmidrule{1-5}\cmidrule{7-11} $\eta=0.0$ &  & 5.838 & 7.215 & 10.474 &  & $%
\eta=1.3$ &  & 5.609 & 7.024 & 10.235 \\ 
$\eta=0.3$ &  & 6.173 & 7.556 & 10.819 &  & $\eta=1.5$ &  & 5.516 & 6.909 & 
10.090 \\ 
$\eta=0.5$ &  & 6.537 & 7.934 & 11.188 &  & $\eta=1.7$ &  & 5.436 & 6.822 & 
10.014 \\ 
$\eta=0.7$ &  & 7.191 & 8.622 & 11.861 &  & $\eta=2.0$ &  & 5.340 & 6.715 & 
9.913 \\ 
\bottomrule \bottomrule &  &  &  &  &  &  &  &  &  & 
\end{tabular}
} }
\end{table}

The boundary functions in Section \ref{model} are designed for the case $%
m\rightarrow \infty $. However, preliminary simulations show that the
empirical sizes based on those boundary functions tend to be larger than the
nominal levels in finite samples, in particular for DGPs with the Student's $%
t$ errors. To make our monitoring schemes more practical under small finite
samples, we suggest to \textquotedblleft tune\textquotedblright\ the
boundary functions as 
\begin{align}
\mathfrak{g}_{m}(k)& =\mathcal{c}\mathcal{n}\left( 1+\dfrac{1}{\log (m)}%
\right) ^{2}\left( 1+\dfrac{k}{m}\right) ^{2}\left( \dfrac{k}{\mathcal{n}}%
\right) ^{\eta },\quad \;\; & \mbox{with}\;\;0& \leq \eta <1,
\label{eq:tune3} \\
\bar{\mathfrak{g}}_{m}(k)& =\mathcal{c}r\left( 1+\dfrac{1}{\log (m)}\right)
^{2}\left( 1+\dfrac{k}{m}\right) ^{2}\left( \frac{k}{r}\right) ^{\eta
},\quad \;\; & \mbox{with}\;\;\eta & >1,  \label{eq:tune4}
\end{align}%
for (\ref{boundary}) and (\ref{renyiboundaryw}) respectively. The intuition
underpinning the term $\left( 1+1/\log (m)\right) ^{2}$ is to boost the
boundary function in small samples. The term $\left( 1+k/m\right) ^{2}$ as
is typically employed when the monitoring horizon is \textquotedblleft
long\textquotedblright\ (see \citealp{horvath2020sequential}, %
\citealp{horvath2021monitoring}, and \citealp{zenhya}). Although this term
is inconsequential for the asymptotic theory in our set-up, we find that it
can further improve the empirical size. Both tuning terms are asymptotically
negligible. and only play a role in finite samples to achieve better size
control at no expense for power. The proposed tuning is tailored to DGPs
with Student's $t$ errors, rather than Gaussian errors; indeed, heavy tails
are a well-known stylised fact of financial returns.

\subsection{Empirical size under the null}

Under the null hypothesis, the realisation of GARCH(1,1) in the historical
training period ($1\leqslant i\leqslant m$) and in the monitoring period ($%
m+1\leqslant i\leqslant m+\mathcal{n}$) is 
\begin{equation*}
y_{i}=\sigma _{i}\epsilon _{i},\qquad \mbox{and}\qquad \sigma _{i}=\omega
_{0}+\alpha _{0}y_{i-1}^{2}+\beta _{0}\sigma _{i-1}^{2},\qquad \ 1\leq i\leq
m+\mathcal{n},
\end{equation*}%
where $\epsilon _{i}$ follows a standard normal distribution or the
Student's $t$ distribution with $7$ degrees of freedom. Since our monitoring
procedure does not require the historical sample to be stationary or not, we
choose the following two set of GARCH(1,1) parameters, taken from %
\citet{francq2012strict}: (\textit{i}) $(\omega _{0},\alpha _{0},\beta
_{0})=(0.10,0.18,0.80)$, which represents the stationary case; (\textit{ii}) 
$(\omega _{0},\alpha _{0},\beta _{0})=(0.10,0.30,0.80)$, corresponding to
the nonstationary case since $E\log (\alpha \epsilon _{0}^{2}+\beta )>0$
under errors following either the standard normal or the Student's $t$
distributions. \newline

Table \ref{tab:size_H0_5pct} reports the empirical sizes at $5\%$
significance level for the monitoring scheme based on Theorem \ref{ma1}(%
\textit{i}) for different values of $\eta $. A noticeable feature is that a
larger $\eta $ results in a higher rejection rates, and a smaller $\eta $ is
more conservative in rejection. Under the Student's $t$ errors, $\eta =0.3$
is a good choice because the monitoring procedure has reasonably good
empirical sizes when $m=1,000$, and the empirical sizes for $m=5,000$ are
closer to the theoretical level of $5\%$. Under Gaussian errors, the
monitoring procedure is slightly under-sized, which is mainly due to the
additionally tuning we imposed in \eqref{eq:tune3}. For practical use, the
tendency to under-reject with Gaussian errors may not necessarily be a
concern, because the empirical power does not seem to be affected, as shown
in Section \ref{sec:HA}. Lastly, the simulation results show that our
monitoring schemed works reasonably well for both stationary and
nonstationary GARCH(1,1) models.\newline

\begin{table}[htbp]
\caption{Empirical size based on Theorem 3.1(\textit{i})}
\label{tab:size_H0_5pct}\centering
{\scriptsize {\ }}
\par
{\scriptsize 
\begin{tabular}{llrrrp{1cm}rrr}
\toprule \toprule &  & \multicolumn{3}{c}{$\epsilon_i \sim \mathcal{N} (0,1)$%
} &  & \multicolumn{3}{c}{$\epsilon_i \sim $ Student's $t$} \\ 
&  & \multicolumn{1}{l}{$m= 500$} & \multicolumn{1}{l}{$m= 1000$} & 
\multicolumn{1}{l}{$m= 5000$} &  & \multicolumn{1}{l}{$m= 500$} & 
\multicolumn{1}{l}{$m= 1000$} & \multicolumn{1}{l}{$m= 5000$} \\ 
\midrule &  & \multicolumn{7}{c}{\underline{Stationary GARCH(1,1)}} \\ 
&  & \multicolumn{7}{c}{$\mathcal{n}=250$} \\ 
\cmidrule{3-9} $\eta=0.0$ &  & 6.5\% & 4.8\% & 3.5\% &  & 8.3\% & 8.0\% & 
6.2\% \\ 
$\eta=0.3$ &  & 7.1\% & 5.5\% & 3.8\% &  & 10.1\% & 8.9\% & 7.5\% \\ 
$\eta=0.5$ &  & 8.5\% & 6.2\% & 4.5\% &  & 11.3\% & 10.8\% & 9.1\% \\ 
$\eta=0.7$ &  & 10.1\% & 8.8\% & 6.6\% &  & 14.5\% & 13.2\% & 10.9\% \\ 
&  & \multicolumn{7}{c}{$\mathcal{n}=500$} \\ 
\cmidrule{3-9} $\eta=0.0$ &  & 5.2\% & 4.3\% & 2.7\% &  & 8.3\% & 5.4\% & 
4.6\% \\ 
$\eta=0.3$ &  & 5.8\% & 4.8\% & 2.9\% &  & 9.3\% & 6.1\% & 5.2\% \\ 
$\eta=0.5$ &  & 6.6\% & 5.4\% & 3.7\% &  & 11.4\% & 8.2\% & 6.9\% \\ 
$\eta=0.7$ &  & 8.7\% & 7.4\% & 5.6\% &  & 14.3\% & 10.5\% & 8.8\% \\ 
\midrule &  & \multicolumn{7}{c}{\underline{Nonstationary GARCH(1,1)}} \\ 
&  & \multicolumn{7}{c}{$\mathcal{n}=250$} \\ 
\cmidrule{3-9} $\eta=0.0$ &  & 5.4\% & 4.3\% & 3.7\% &  & 8.8\% & 5.2\% & 
5.0\% \\ 
$\eta=0.3$ &  & 6.2\% & 5.2\% & 4.2\% &  & 10.7\% & 6.5\% & 5.5\% \\ 
$\eta=0.5$ &  & 7.5\% & 6.7\% & 5.1\% &  & 12.3\% & 8.5\% & 6.8\% \\ 
$\eta=0.7$ &  & 9.9\% & 9.4\% & 5.9\% &  & 14.7\% & 12.2\% & 9.1\% \\ 
&  & \multicolumn{7}{c}{$\mathcal{n}=500$} \\ 
\cmidrule{3-9} $\eta=0.0$ &  & 2.6\% & 2.9\% & 2.9\% &  & 5.7\% & 4.4\% & 
4.0\% \\ 
$\eta=0.3$ &  & 3.8\% & 3.2\% & 3.0\% &  & 7.2\% & 5.3\% & 4.1\% \\ 
$\eta=0.5$ &  & 4.5\% & 3.9\% & 3.6\% &  & 8.7\% & 7.7\% & 5.3\% \\ 
$\eta=0.7$ &  & 7.4\% & 6.0\% & 5.1\% &  & 13.0\% & 9.6\% & 8.1\% \\ 
\bottomrule \bottomrule &  &  &  &  &  &  &  & 
\end{tabular}
}
\end{table}

Table \ref{tab:size_renyi_5pct} in the Supplement contains the empirical
sizes for R\'{e}nyi type statistics. The rejection rates are slightly higher
than the $5\%$ nominal level. We note that, in principle, it would be
possible to design a different tuning for R\'{e}nyi type statistics.

\subsection{Empirical power under $H_{A}$\label{sec:HA}}

We now turn to the analysis of the empirical power. Under the alternative,
the data is generated by 
\begin{equation*}
y_{i}=\sigma _{i}\epsilon _{i},
\end{equation*}%
and 
\begin{equation*}
\sigma _{i}=%
\begin{cases}
\omega _{0}+\alpha _{0}y_{i-1}^{2}+\beta _{0}\sigma _{i-1}^{2},\qquad  & %
\mbox{ if }1\leq i<m+k^{\ast }, \\ 
\omega _{A}+\alpha _{A}y_{i-1}^{2}+\beta _{A}\sigma _{i-1}^{2},\qquad  & %
\mbox{ if }m+k^{\ast }\leq i\leq m+\mathcal{n},%
\end{cases}%
\end{equation*}%
where the parameter $\boldmath{\theta }_{0}$$=(\alpha _{0},\beta _{0},\omega
_{0})^{\top }$ changes to $\boldmath{\theta }_{A}$$=(\alpha _{A},\beta
_{A},\omega _{A})^{\top }$ at time $m+k^{\ast }$. We consider two scenarios
for the time of change: (\textit{a}) $k^{\ast }=\lfloor \sqrt{\mathcal{n}}%
\rfloor $ corresponds to a change occurring \textquotedblleft early, but not
too early\textquotedblright\ after the historical sample; (\textit{b}) $%
k^{\ast }=0.5{\mathcal{n}}$ indicates a change happening much later than $r$%
. \newline

There are many possible ways of changes under the alternative. To keep our
results clean, we set $\omega _{0}=\omega _{A}=0.1$ and $\alpha _{0}=\alpha
_{A}=0.18$, and concentrate on a change in $\beta $ under the following four
representative alternatives:

\begin{itemize}
\item[$H_{A,1}$:] $\beta_0=0.8, \beta_A=0.6$, i.e. a change from a
stationary to another stationary regime,

\item[$H_{A,2}$:] $\beta_0=0.8, \beta_A=0.9$, i.e. a change from a
stationary to an explosive regime,

\item[$H_{A,3}$:] $\beta_0=0.9, \beta_A=0.8$, i.e. a change from an
explosive to a stationary regime,

\item[$H_{A,4}$:] $\beta_0=0.9, \beta_A=1.0$, i.e. a change from an
explosive to another explosive regime.\newline
\end{itemize}

Tables \ref{tab:power_standard_early} and \ref{tab:power_standard_middle}
show the empirical power of the monitoring scheme based on Theorem \ref{ma1}(%
\textit{i}) at $5\%$ significance level when $\mathcal{n}=500$ for a change
at $k^{\ast }=\lfloor \sqrt{\mathcal{n}}\rfloor $ and $k^{\ast }=\lfloor 0.5%
\mathcal{n}\rfloor $, respectively.\footnote{%
The empirical power of $\mathcal{n}=250$ (not reported) is marginally lower
than the empirical power of $\mathcal{n}=500$.} There are five major
observations. First, our monitoring scheme is highly effective in detecting
changes under $H_{A,2}$ and $H_{A,3}$ for both early and late changes. These
alternatives result in a change between a stationary regime and an explosive
regime, which is relatively easy to detect. Second, the monitoring scheme
exhibits high power in detecting early changes under $H_{A,1}$ and $H_{A,4}$%
. These alternatives represent a change within either a stationary or an
explosive regime. Third, there is a deterioration in power when detecting
late changes under $H_{A,1}$ and $H_{A,4}$, although satisfactory levels can
be achieved by using a large(r) training sample size of $m=5,000$. Fourth,
the power is relatively lower when using the Student's $t$ distribution
errors compared to normal errors. Lastly, there is only a marginal decline
observed in the power with a larger value of $\eta $.\newline

\begin{table}[htbp]
\caption{Empirical power based on Theorem 3.1(\textit{i}) for a change at $%
k^*=\lfloor\protect\sqrt{\mathcal{n}}\rfloor$}
\label{tab:power_standard_early}\centering
{\scriptsize {\ 
\begin{tabular}{llrrrrrrr}
\toprule \toprule $\mathcal{n}=500$ &  & \multicolumn{3}{c}{$H_{A,1}$} &  & 
\multicolumn{3}{c}{$H_{A,2}$} \\ 
\cmidrule{3-5}\cmidrule{7-9} before $k^*=\lfloor\sqrt{\mathcal{n}}\rfloor$ & 
& \multicolumn{3}{c}{$\beta_0=0.80$} &  & \multicolumn{3}{c}{$\beta_0=0.80$}
\\ 
after $k^*=\lfloor\sqrt{\mathcal{n}}\rfloor$ &  & \multicolumn{3}{c}{$%
\beta_1=0.60$} &  & \multicolumn{3}{c}{$\beta_1=0.90$} \\ 
\midrule $\epsilon_i \sim \mathcal{N} (0,1)$ &  & \multicolumn{1}{c}{$m=500$}
& \multicolumn{1}{c}{$m=1000$} & \multicolumn{1}{c}{$m=5000$} &  & 
\multicolumn{1}{c}{$m=500$} & \multicolumn{1}{c}{$m=1000$} & 
\multicolumn{1}{c}{$m=5000$} \\ 
\midrule $\eta=0.0$ &  & 96.76\% & 99.94\% & 100.00\% &  & 99.76\% & 100.00\%
& 100.00\% \\ 
$\eta=0.3$ &  & 96.38\% & 99.94\% & 100.00\% &  & 99.76\% & 100.00\% & 
100.00\% \\ 
$\eta=0.5$ &  & 96.06\% & 99.94\% & 100.00\% &  & 99.74\% & 100.00\% & 
100.00\% \\ 
$\eta=0.7$ &  & 95.44\% & 99.84\% & 100.00\% &  & 99.72\% & 100.00\% & 
100.00\% \\ 
\midrule $\epsilon_i \sim $ Student's $t$ &  &  &  &  &  &  &  &  \\ 
\midrule $\eta=0.0$ &  & 80.26\% & 96.22\% & 99.92\% &  & 96.28\% & 99.44\%
& 100.00\% \\ 
$\eta=0.3$ &  & 79.16\% & 95.80\% & 99.92\% &  & 96.04\% & 99.36\% & 100.00\%
\\ 
$\eta=0.5$ &  & 78.20\% & 95.40\% & 99.92\% &  & 95.94\% & 99.34\% & 100.00\%
\\ 
$\eta=0.7$ &  & 76.88\% & 94.48\% & 99.92\% &  & 95.76\% & 99.18\% & 99.98\%
\\ 
\midrule \midrule $\mathcal{n}=500$ &  & \multicolumn{3}{c}{$H_{A,3}$} &  & 
\multicolumn{3}{c}{$H_{A,4}$} \\ 
\cmidrule{3-5}\cmidrule{7-9} before $k^*=\lfloor\sqrt{\mathcal{n}}\rfloor$ & 
& \multicolumn{3}{c}{$\beta_0=0.90$} &  & \multicolumn{3}{c}{$\beta_0=0.90$}
\\ 
after $k^*=\lfloor\sqrt{\mathcal{n}}\rfloor$ &  & \multicolumn{3}{c}{$%
\beta_1=0.80$} &  & \multicolumn{3}{c}{$\beta_1=1.00$} \\ 
\midrule $\epsilon_i \sim \mathcal{N} (0,1)$ &  & \multicolumn{1}{c}{$m=500$}
& \multicolumn{1}{c}{$m=1000$} & \multicolumn{1}{c}{$m=5000$} &  & 
\multicolumn{1}{c}{$m=500$} & \multicolumn{1}{c}{$m=1000$} & 
\multicolumn{1}{c}{$m=5000$} \\ 
\midrule $\eta=0.0$ &  & 100.00\% & 100.00\% & 100.00\% &  & 99.82\% & 
100.00\% & 100.00\% \\ 
$\eta=0.3$ &  & 100.00\% & 100.00\% & 100.00\% &  & 99.74\% & 100.00\% & 
100.00\% \\ 
$\eta=0.5$ &  & 100.00\% & 100.00\% & 100.00\% &  & 99.70\% & 100.00\% & 
100.00\% \\ 
$\eta=0.7$ &  & 100.00\% & 100.00\% & 100.00\% &  & 99.60\% & 100.00\% & 
100.00\% \\ 
\midrule $\epsilon_i \sim $ Student's $t$ &  &  &  &  &  &  &  &  \\ 
\midrule $\eta=0.0$ &  & 99.96\% & 100.00\% & 100.00\% &  & 94.00\% & 98.90\%
& 100.00\% \\ 
$\eta=0.3$ &  & 99.94\% & 100.00\% & 100.00\% &  & 93.76\% & 98.84\% & 
100.00\% \\ 
$\eta=0.5$ &  & 99.94\% & 100.00\% & 100.00\% &  & 93.40\% & 98.70\% & 
100.00\% \\ 
$\eta=0.7$ &  & 99.94\% & 100.00\% & 100.00\% &  & 92.98\% & 98.42\% & 
99.98\% \\ 
\bottomrule \bottomrule &  &  &  &  &  &  &  & 
\end{tabular}
} }
\end{table}

\begin{table}[htbp]
\caption{Empirical power based on Theorem 3.1(\textit{i}) for a change at $%
k^*=\lfloor0.5\mathcal{n}\rfloor$}
\label{tab:power_standard_middle}\centering
{\scriptsize {\ 
\begin{tabular}{llrrrrrrr}
\toprule \toprule &  & \multicolumn{3}{c}{$H_{A,1}$} &  & \multicolumn{3}{c}{%
$H_{A,2}$} \\ 
\cmidrule{3-5}\cmidrule{7-9} before $k^*=\lfloor0.5\mathcal{n}\rfloor$ &  & 
\multicolumn{3}{c}{$\beta_0=0.80$} &  & \multicolumn{3}{c}{$\beta_0=0.80$}
\\ 
after $k^*=\lfloor0.5\mathcal{n}\rfloor$ &  & \multicolumn{3}{c}{$%
\beta_1=0.60$} &  & \multicolumn{3}{c}{$\beta_1=0.90$} \\ 
\midrule $\epsilon_i \sim \mathcal{N} (0,1)$ &  & \multicolumn{1}{c}{$m=500$}
& \multicolumn{1}{c}{$m=1000$} & \multicolumn{1}{c}{$m=5000$} &  & 
\multicolumn{1}{c}{$m=500$} & \multicolumn{1}{c}{$m=1000$} & 
\multicolumn{1}{c}{$m=5000$} \\ 
\midrule $\eta=0.0$ &  & 50.26\% & 77.88\% & 98.46\% &  & 99.76\% & 100.00\%
& 100.00\% \\ 
$\eta=0.3$ &  & 47.84\% & 76.16\% & 98.12\% &  & 99.76\% & 100.00\% & 
100.00\% \\ 
$\eta=0.5$ &  & 46.38\% & 74.32\% & 97.56\% &  & 99.74\% & 100.00\% & 
100.00\% \\ 
$\eta=0.7$ &  & 43.26\% & 70.56\% & 96.60\% &  & 99.72\% & 100.00\% & 
100.00\% \\ 
\midrule $\epsilon_i \sim $ Student's $t$ &  &  &  &  &  &  &  &  \\ 
\midrule $\eta=0.0$ &  & 25.68\% & 43.10\% & 72.88\% &  & 96.28\% & 99.44\%
& 100.00\% \\ 
$\eta=0.3$ &  & 25.02\% & 41.18\% & 70.48\% &  & 96.04\% & 99.36\% & 100.00\%
\\ 
$\eta=0.5$ &  & 24.80\% & 39.46\% & 68.18\% &  & 95.94\% & 99.34\% & 100.00\%
\\ 
$\eta=0.7$ &  & 25.18\% & 37.68\% & 64.14\% &  & 95.76\% & 99.18\% & 99.98\%
\\ 
\midrule \midrule &  & \multicolumn{3}{c}{$H_{A,3}$} &  & \multicolumn{3}{c}{%
$H_{A,4}$} \\ 
\cmidrule{3-5}\cmidrule{7-9} before $k^*=\lfloor0.5\mathcal{n}\rfloor$ &  & 
\multicolumn{3}{c}{$\beta_0=0.90$} &  & \multicolumn{3}{c}{$\beta_0=0.90$}
\\ 
after $k^*=\lfloor0.5\mathcal{n}\rfloor$ &  & \multicolumn{3}{c}{$%
\beta_1=0.80$} &  & \multicolumn{3}{c}{$\beta_1=1.00$} \\ 
\midrule $\epsilon_i \sim \mathcal{N} (0,1)$ &  & \multicolumn{1}{c}{$m=500$}
& \multicolumn{1}{c}{$m=1000$} & \multicolumn{1}{c}{$m=5000$} &  & 
\multicolumn{1}{c}{$m=500$} & \multicolumn{1}{c}{$m=1000$} & 
\multicolumn{1}{c}{$m=5000$} \\ 
\midrule $\eta=0.0$ &  & 100.00\% & 100.00\% & 100.00\% &  & 76.78\% & 
93.14\% & 99.90\% \\ 
$\eta=0.3$ &  & 100.00\% & 100.00\% & 100.00\% &  & 75.64\% & 92.36\% & 
99.88\% \\ 
$\eta=0.5$ &  & 100.00\% & 100.00\% & 100.00\% &  & 74.40\% & 91.58\% & 
99.86\% \\ 
$\eta=0.7$ &  & 100.00\% & 100.00\% & 100.00\% &  & 71.98\% & 89.88\% & 
99.66\% \\ 
\midrule $\epsilon_i \sim $ Student's $t$ &  &  &  &  &  &  &  &  \\ 
\midrule $\eta=0.0$ &  & 99.96\% & 100.00\% & 100.00\% &  & 64.16\% & 78.82\%
& 94.30\% \\ 
$\eta=0.3$ &  & 99.94\% & 100.00\% & 100.00\% &  & 63.22\% & 78.14\% & 
93.76\% \\ 
$\eta=0.5$ &  & 99.94\% & 100.00\% & 100.00\% &  & 62.38\% & 77.00\% & 
93.26\% \\ 
$\eta=0.7$ &  & 99.94\% & 100.00\% & 100.00\% &  & 61.28\% & 75.30\% & 
92.16\% \\ 
\bottomrule \bottomrule &  &  &  &  &  &  &  & 
\end{tabular}
} }
\end{table}

Tables \ref{tab:power_renyi_early} and \ref{tab:power_renyi_middle} in the
Supplement provide the empirical power for the R\'{e}nyi type statistics
based on Theorem \ref{ma1}(\textit{ii}) under the same setting. When
detecting early changes at $k^{\ast }=\lfloor \sqrt{\mathcal{n}}\rfloor $,
similar observations as above apply; the monitoring schemes with $\eta =1.3$
and $1.5$ proves to be effective. However, one noticeable difference is that
a larger value of $\eta $ is detrimental in the power. In particular, $\eta
=1.7$ and $2$ suffer a remarkable loss of power under $H_{A,1}$. As far as
late changes ($k^{\ast }=\lfloor 0.5\mathcal{n}\rfloor $) are concerned, the
R\'{e}nyi type statistics become much less effective, as predicted by the
theory. This is because R\'{e}nyi type statistics are devised for the fast
detection of very early changes, whilst being suboptimal for late changes. 
\newline

It is also worthwhile to examine the stopping time $\tau _{m}$ and $\bar{\tau%
}_{m}$ in order to investigate the detection delays of our monitoring
procedures. Figure \ref{fig:delay_m500} shows the boxplot of the detection
delays of $\tau _{m}$ and $\bar{\tau}_{m}$ for a change at $k^{\ast
}=\lfloor \sqrt{\mathcal{n}}\rfloor $ under $H_{A,2}$ when $m=500$, $%
\mathcal{n}=500$. For the monitoring procedure based on Theorem \ref{ma1}(%
\textit{i}), it is consistent with our theory that larger values of $\eta $
reduce the detection delay. Considering the R\'{e}nyi type statistics based
on Theorem \ref{ma1}(\textit{ii}), there is only a marginal difference in
using various values of $\eta $. Comparing the detection delay between the
monitoring procedures based on Theorem \ref{ma1}(\textit{i}) and (\textit{ii}%
), we can clearly see the merit of the R\'{e}nyi type statistics for the
fast detection of early changes, as evidenced by shorter detection delays.

\begin{figure}[tbp]
\centering
\includegraphics[width=0.45\linewidth]{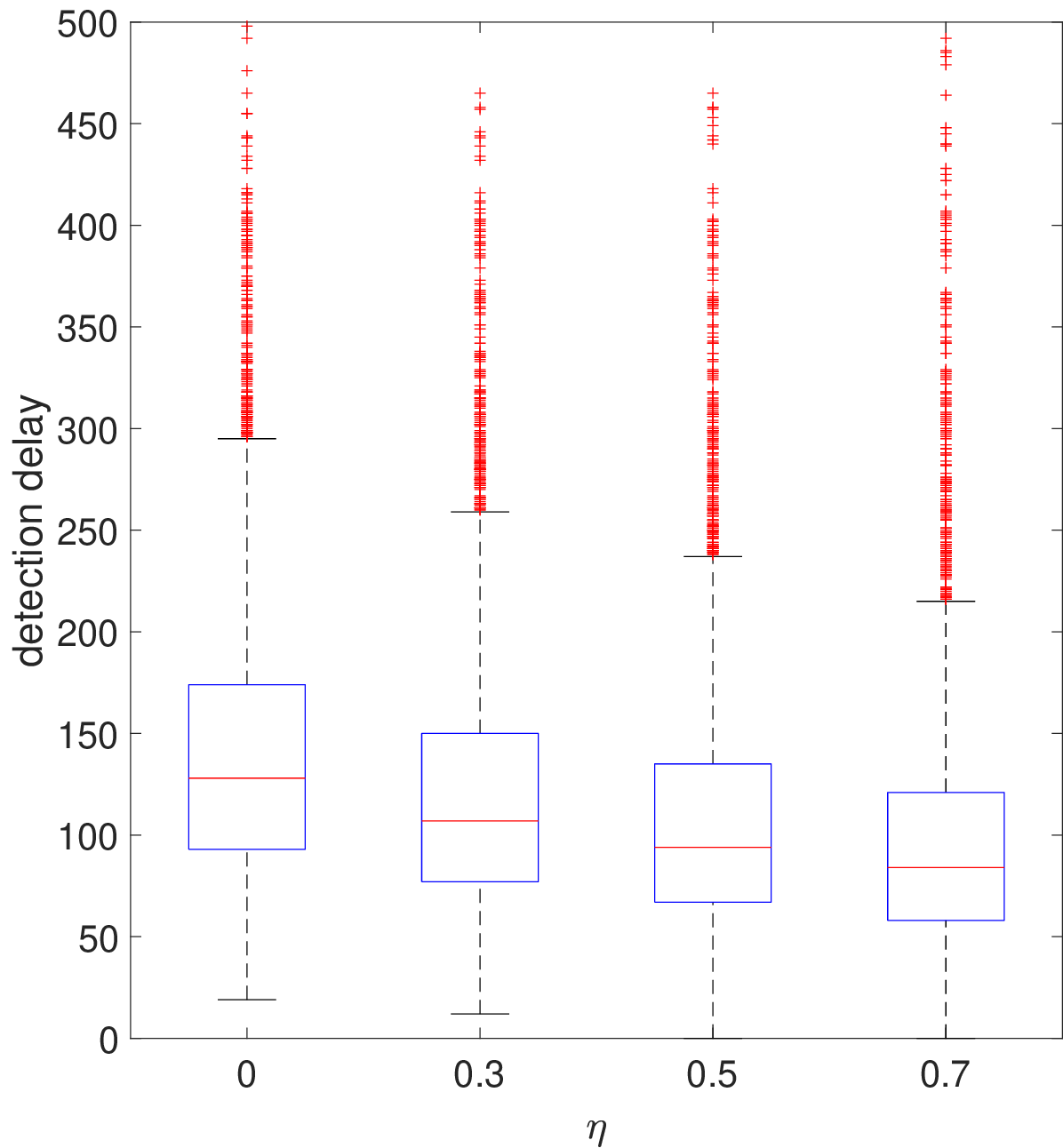} %
\includegraphics[width=0.45\linewidth]{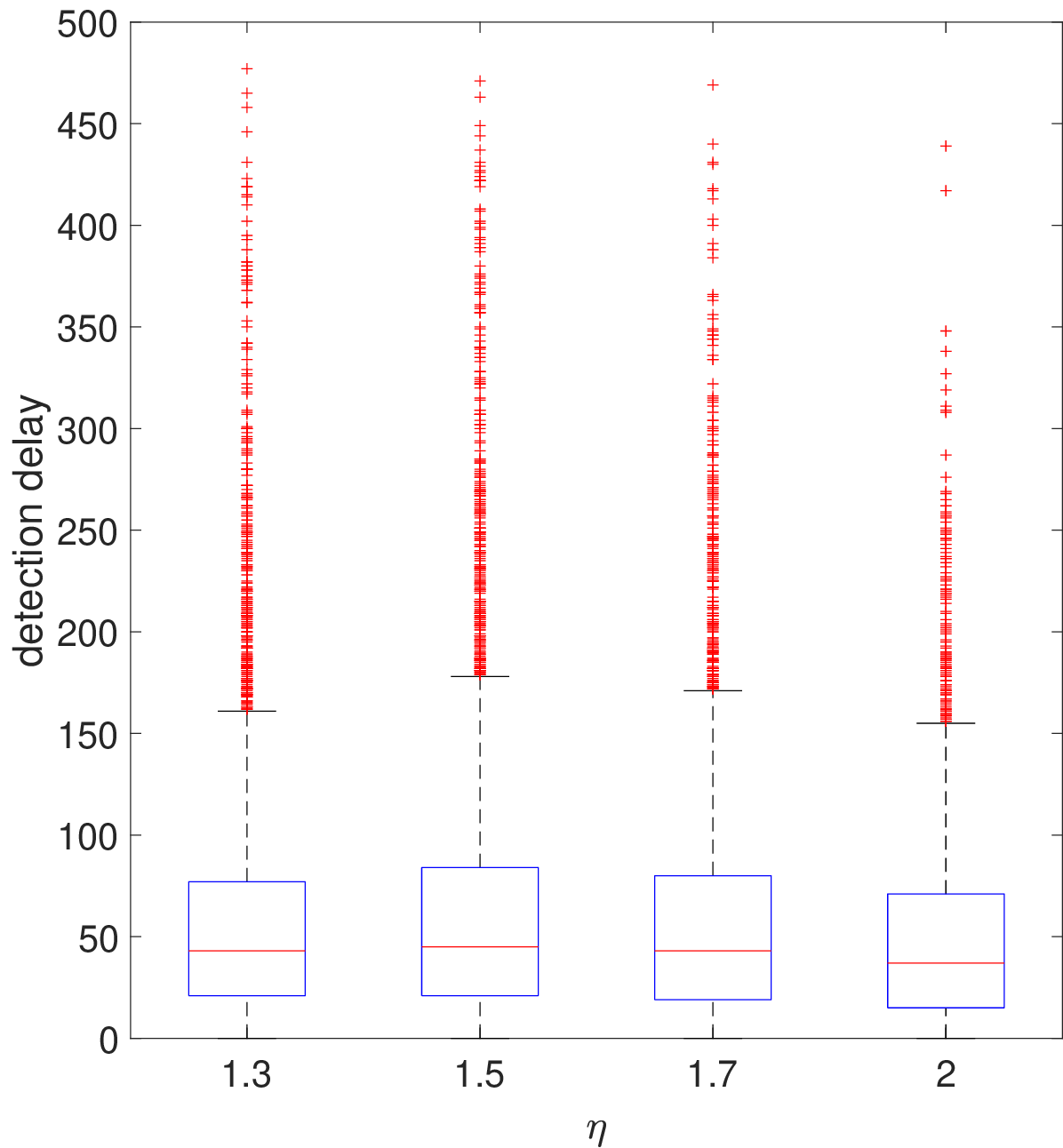}
\caption{Boxplot of detection Delays when $m=500$, $\mathcal{n}=500$ for a
change at $k^*=\lfloor\protect\sqrt{\mathcal{n}}\rfloor$ under $H_{A,2}$.
Left Panel: the monitoring procedure based on Theorem 3.1(\textit{i}); Right
Panel: the monitoring procedure based on Theorem 3.1(\textit{ii}).}
\label{fig:delay_m500}
\end{figure}



\section{Empirical illustration\label{empirics}}

We illustrate our monitoring procedures using daily returns of individual
stocks. We focus on four stocks: Apple Inc. (ticker: AAPL, Permno: 14593),
Middlefield Banc Corp. (ticker: MBCN, Permno: 14932), Genetic Technologies
Ltd (ticker: GENE, Permno: 90899), and NTS Realty Holdings LP (ticker: NLP,
Permno: 90508). We download daily returns (without dividend) from the CRSP
database.\footnote{%
We choose to use the daily returns without dividend, rather than log
difference of prices, to avoid the complication due to stock splits.} We
consider two periods in order to showcase the detection for four types of
changes (in the sense of the four different alternatives in our simulation).
Depending on the specific purpose of the researcher, one can choose between
the monitoring procedures based on Theorem \ref{ma1}(\textit{i}) and (%
\textit{ii}). Based on our simulations,if the aim is to quickly detect very
early changes, we suggest using the R\'{e}nyi type statistics based on
Theorem \ref{ma1}\textit{(ii)}, with some tolerance for the compromise in
size and power; conversely, if the purpose is to have good size control and
high power, it is recommended to use the monitoring procedure based on
Theorem \ref{ma1}(\textit{i}). In this application, our preference is to
have a good balance of size and power, and the procedure based on Theorem %
\ref{ma1}(\textit{i}) (with the choice of $\eta =0.3$) delivers a good
performance with sample sizes similar to the dataset used in this section.%
\footnote{%
We relegate the results using R\'{e}nyi weights in Section \ref%
{renyiempirics} of the Supplement.} Before applying our monitoring
procedure, it is necessary to ensure there is no change during the
historical training sample. To this end, we use the test developed by %
\citeauthor{horvath2024detecting} (\citeyear{horvath2024detecting}, labeled
as HW(2024) hereinafter). Their test is to detect changes in GARCH(1,1)
processes without assuming stationarity, which can accommodate either
stationary or nonstationary historical sample. A rejection of their test
indicates there is no change of $(\alpha ,\beta )$ in the GARCH(1,1) during
the historical sample. At the same time, we are keen to understand which the
type of four changes may occur. Thus, we firstly examine whether our
historical sample is stationary or not by employing the nonstationarity test
developed by \citeauthor{francq2012strict} (\citeyear{francq2012strict},
labeled as FZ(2012) hereinafter). At the end of our monitoring horizon, we
use the FZ(2012) test again to check the stationarity of the samples after
the change (if there is one).

\subsection{Change from a stationary regime}

To illustrate changepoint detection from a stationary regime, we choose the
training period of 2016--2019 (1007 trading days) and the monitoring period
of 2020--2021 (507 trading days). The training period is before the outbreak
of COVID-19, while the monitoring period is in the pandemic. We apply our
monitoring procedure for the stocks of AAPL and MBCN during this period.
Table \ref{tab:empircal} (Columns 1 and 2) reports the results of the
sequential monitoring procedure, as well as other information, including
HW(2024) test, FZ(2012) test, and parameter estimates. HW(2024) test
indicates that there is no parameter change during the training sample for
AAPL and MBCN. The nonstationarity test of FZ(2012) indicates that they are
both stationary during the training sample. Our sequential monitoring
detects a change of AAPL on July 31st, 2020 and a change of MBCN on May 8th,
2020. Based on the FZ(2012) test for the sample after the change, we can
conclude that AAPL experienced a change from a stationary to another
stationary regime, while MBCN shifted from a stationary regime to a
nonstationary one. Figure \ref{fig:AAPL_MBCN} contains returns series (upper
panel) during the monitoring period and the detector versus the boundary
function (lower panel).

\begin{table}[htbp]
\caption{Monitoring results of the four stocks}
\label{tab:empircal}\centering
{\scriptsize {\ 
\begin{tabular}{lllll}
\toprule \toprule & AAPL & MBCN & GENE & NLP \\ 
\multicolumn{5}{l}{\textbf{\underline{Training Sample}}} \\ 
Start Date & 2016-01-04 & 2016-01-04 & 2007-01-03 & 2007-01-03 \\ 
End Date & 2019-12-31 & 2019-12-31 & 2010-12-31 & 2010-12-31 \\ 
Sample Size & 1007 & 1007 & 1011 & 1011 \\ 
&  &  &  &  \\ 
\underline{HW(2024) Test} &  &  &  &  \\ 
Test stat & 1.469 & 1.351 & 1.179 & 1.088 \\ 
Rejection & Not Rej. & Not Rej. & Not Rej. & Not Rej. \\ 
&  &  &  &  \\ 
\underline{FZ(2012) NS Test} &  &  &  &  \\ 
$p$-value & 0.00\% & 0.00\% & 35.29\% & 33.81\% \\ 
Stationary or not & Stationary & Stationary & Nonstationary & Nonstationary
\\ 
&  &  &  &  \\ 
\underline{Parameter estimates} &  &  &  &  \\ 
$\hat{\alpha}_0$ & 0.135 & 0.183 & 0.287 & 0.099 \\ 
$\hat{\beta}_0$ & 0.745 & 0.579 & 0.816 & 0.916 \\ 
\multicolumn{5}{l}{\textbf{\underline{Monitoring Sample}}} \\ 
Start Date & 2020-01-02 & 2020-01-02 & 2011-01-03 & 2011-01-03 \\ 
End Date & 2021-12-31 & 2021-12-31 & 2012-12-31 & 2012-12-31 \\ 
Sample Size & 507 & 507 & 505 & 505 \\ 
&  &  &  &  \\ 
\underline{Our Sequential Monitoring} &  &  &  &  \\ 
Rejection & Rej. & Rej. & Rej. & Rej. \\ 
Time of Change & 2020-07-31 & 2020-05-08 & 2011-04-27 & 2012-09-04 \\ 
&  &  &  &  \\ 
\multicolumn{5}{l}{\textit{\textbf{After the change}}} \\ 
\underline{FZ(2012) NS Test} &  &  &  &  \\ 
$p$-value & 0.00\% & 10.38\% & 0.00\% & 100.00\% \\ 
Stationary or not & Stationary & Nonstationary & Stationary & Nonstationary
\\ 
&  &  &  &  \\ 
\underline{Parameter estimates} &  &  &  &  \\ 
$\hat{\alpha}_A$ & 0.052 & 0.091 & 0.488 & 0.001 \\ 
$\hat{\beta}_A$ & 0.935 & 0.913 & 0.528 & 1.053 \\ 
\bottomrule \bottomrule &  &  &  & 
\end{tabular}
} }
\end{table}

\begin{figure}[htbp]
\centering
\includegraphics[width=0.48\linewidth]{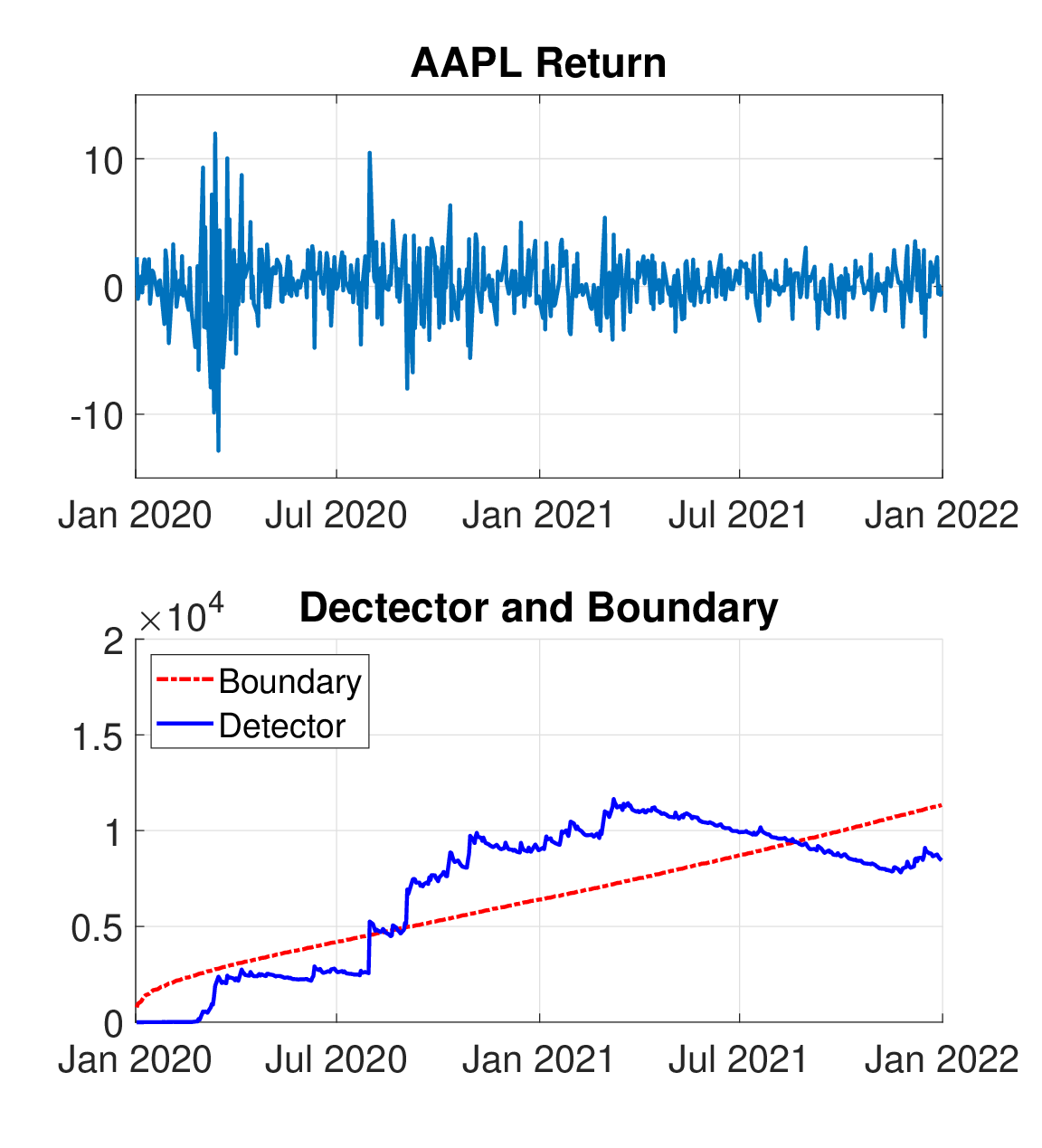} \includegraphics[width=0.48%
\linewidth]{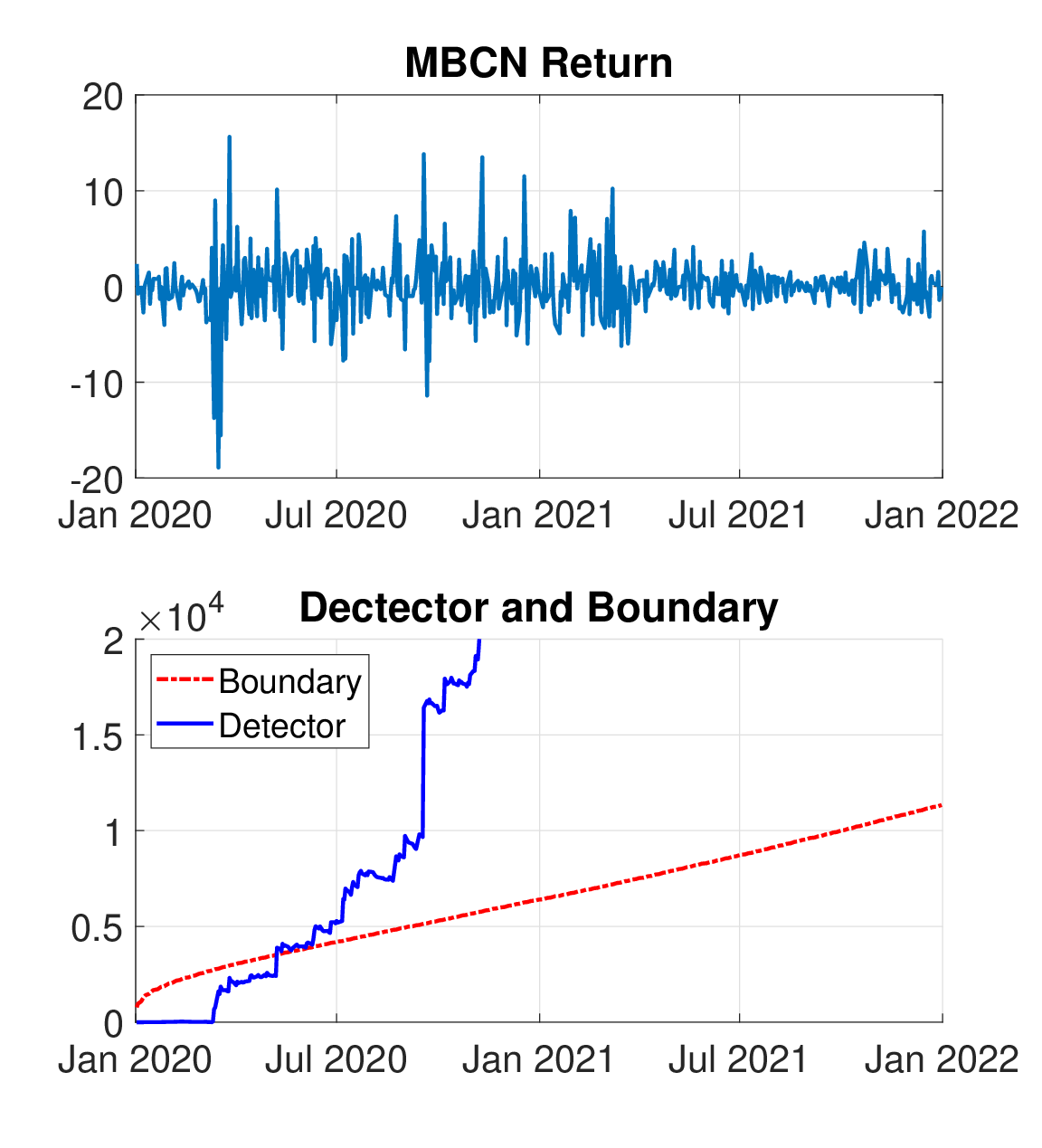}
\caption{Upper Panel: the return series of AAPL (left) and MBCN (right)
during the monitoring period; Lower Panel: the detector $\mathcal{D}_m(k)$
versus the boundary function $\mathfrak{g}_m(k)$ of AAPL (left) and MBCN
(right).}
\label{fig:AAPL_MBCN}
\end{figure}

\subsection{Change from a nonstationary regime}

We now consider detection from a nonstationary regime, and use 2007--2010
(1011 trading days) as the training period and 2011-2022 (505 trading days)
as the monitoring period. The training period covers the global financial
crisis (GFC), while the monitoring period follows the GFC but includes the
European debt crisis. In this period, we monitor GENE and NLP. The results
of the sequential monitoring procedure, alongside other supporting
information, are displayed in Columns 3 and 4 of Table \ref{tab:empircal}.
Based on the HW(2024) test, we cannot reject that the return series of GENE
and NLP have change in the training period. As evidenced by the
nonstationarity test of FZ(2012), both stocks are in the nonstationary
regime during the training period. Our sequential monitoring procedure
reveals a change of GENE on April 27th, 2011 and a change of NLP on
September 4th, 2012. After applying FZ(2012) nonstationarity test on the
sample after the change, it is found that the change of GENE is from a
nonstationary regime to a stationary regime, whilst the change of NLP is
from a nonstationary to another nonstationary regime. It is also interesting
to note that GENE after the change is in a strict stationary regime, but not
in a second-order stationary regime. Figure \ref{fig:GENE_NLP} shows their
returns series (upper panel) during the monitoring period and the detector
versus the boundary function (lower panel).

\begin{figure}[htbp]
\centering
\includegraphics[width=0.48\linewidth]{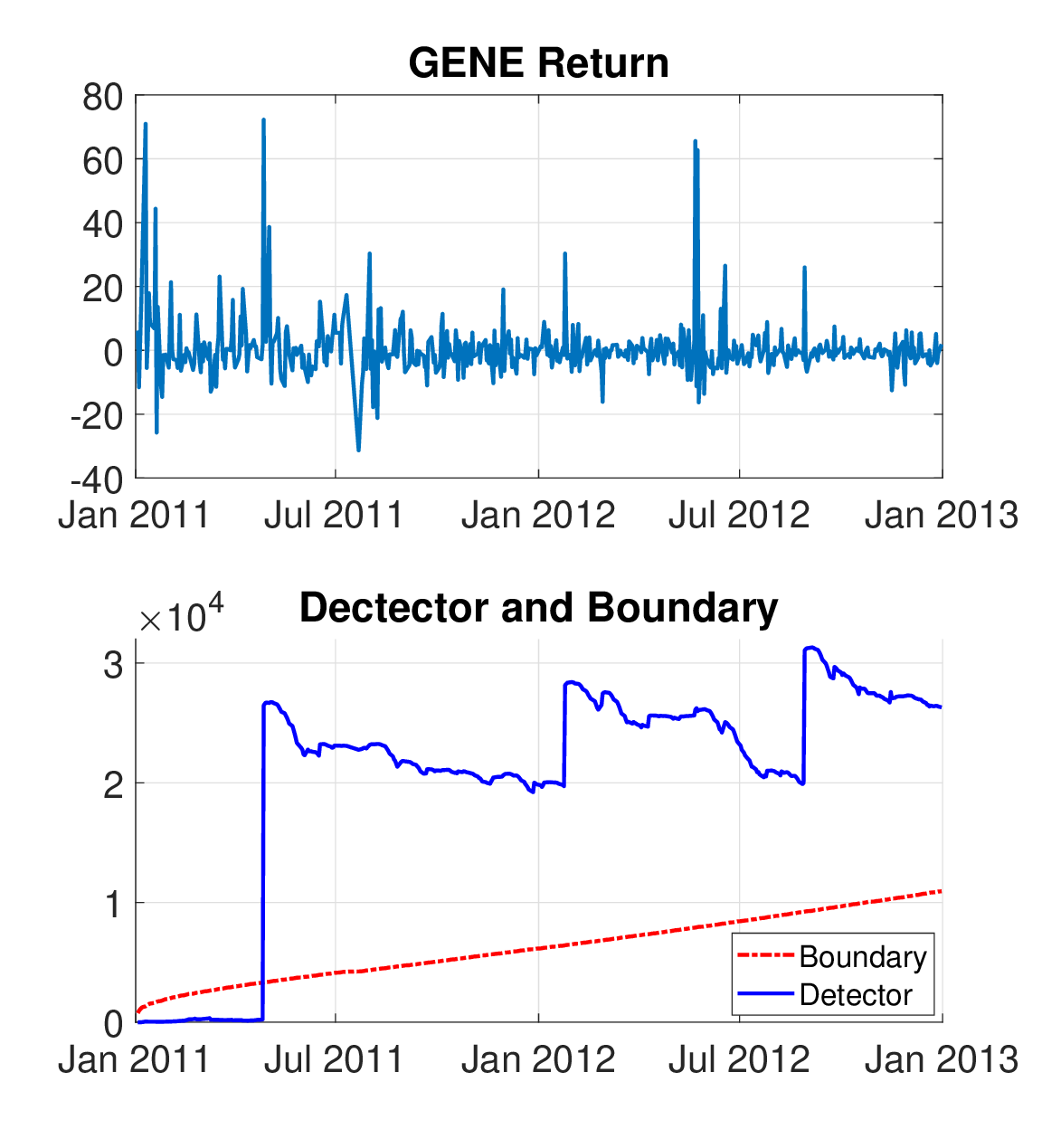} \includegraphics[width=0.48%
\linewidth]{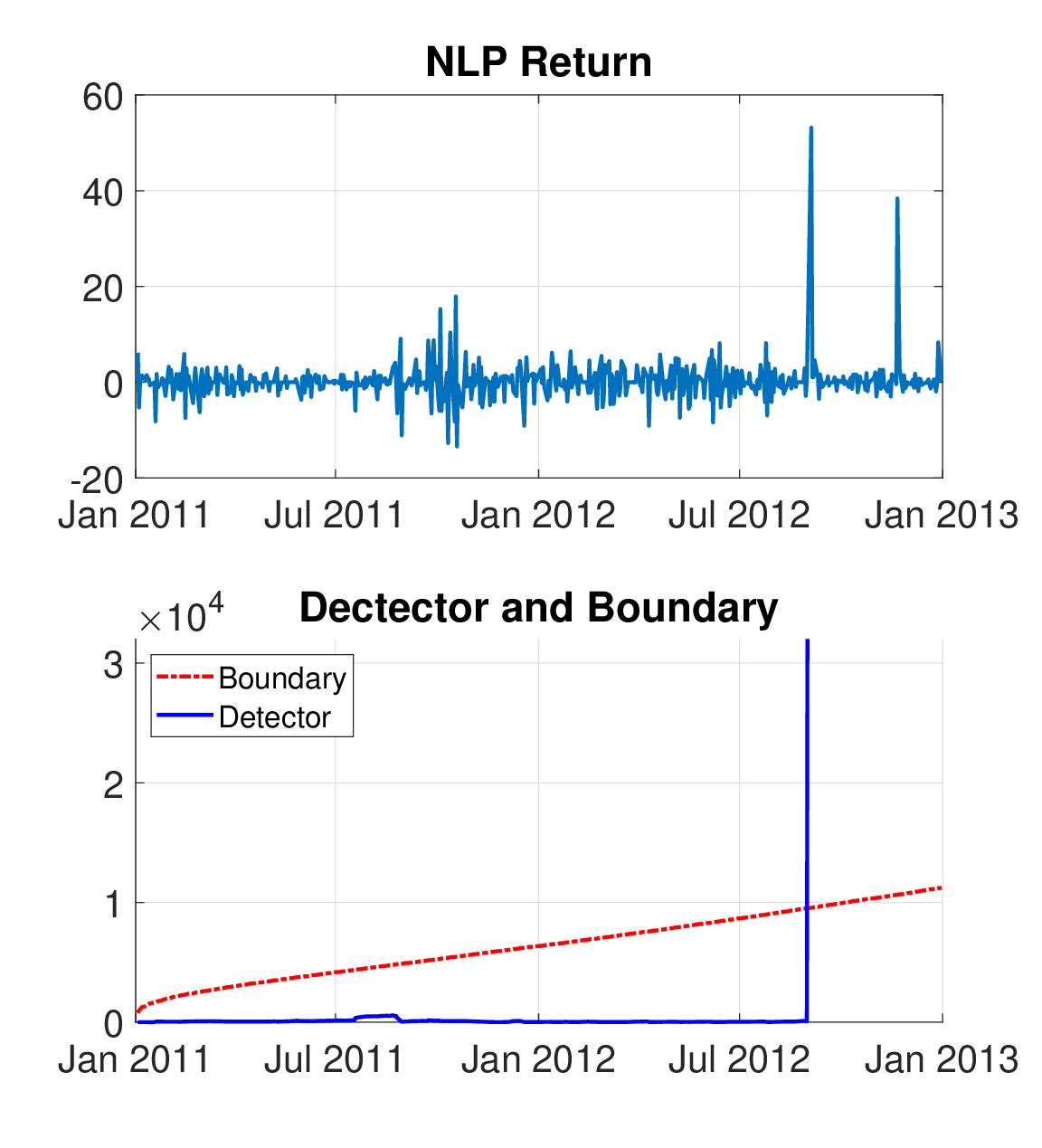}
\caption{Upper Panel: the return series of GENE (left) and NLP (right)
during the monitoring period; Lower Panel: the detector $\mathcal{D}_m(k)$
versus the boundary function $\mathfrak{g}_m(k)$ of GENE (left) and NLP
(right).}
\label{fig:GENE_NLP}
\end{figure}

\section{Conclusions and discussions \label{conclusion}}

In this paper, we complement the existing literature on (ex-ante) testing
for bubble phenomena by proposing a family of weighted, CUSUM-based
statistics to detect changes in the parameters of a GARCH(1,1) process. Our
monitoring procedure can be applied irrespective of whether, in the training
sample, the observations are stationary or explosive, and it is able to
detect all types of changes: (a) from a stationary to another stationary
regime (which is helpful in order to avoid the issues concerning the
consistent estimation of a GARCH(1,1) process spelt out in %
\citealp{hillebrand2005neglecting}); (b) from a stationary to an explosive
regime (which contains information on the possible inception of a bubble);
(c) from an explosive to a stationary regime (which, in the light of the
previous point, could shed light on the cooling off the turbulence
associated with a bubble on a financial market); and (d) from an explosive
to another explosive regime (which, depending on the direction of the change
- towards a more or a less explosive regime - could indicate whether
exuberant volatility is heating up or cooling down). Technically, we propose
two families of statistics, both based on weighted versions of the CUSUM
process of the quasi-Fisher scores: one family uses lighter weights, and it
is designed to detect, optimally, changes occurring not immediately after
the start of the monitoring horizon; the other family uses heavier, R\'{e}%
nyi-type weights, which make it more sensitive to changepoints occurring
immediately after the end of the training period. For both cases, we study
the limiting distribution of the detection delays; to the best of our
knowledge, no such results exist for the case of a GARCH(1,1) models, and no
results in general exist for the case of R\'{e}nyi statistics. Given the
interest in the detection of bubble phenomena, and the scant amount of
contributions in the context of detection of changes in the volatility, we
believe that our paper should be a useful addition to the toolbox of the
financial econometrician.


{\footnotesize {\ 
\bibliographystyle{chicago}
\bibliography{LTbiblio}
} }

\newpage

\clearpage

\renewcommand*{\thesection}{\Alph{section}}

\setcounter{section}{0} \setcounter{subsection}{-1} %
\setcounter{subsubsection}{-1} \setcounter{equation}{0} \setcounter{lemma}{0}
\setcounter{theorem}{0} \renewcommand{\theassumption}{A.\arabic{assumption}} 
\renewcommand{\thetheorem}{A.\arabic{theorem}} \renewcommand{\thelemma}{A.%
\arabic{lemma}} \renewcommand{\thecorollary}{A.\arabic{corollary}} %
\renewcommand{\theequation}{A.\arabic{equation}}

\section{Implementation guidelines, further Monte Carlo evidence, and
further empirical results\label{sec:further}}

\subsection{Practical guidance on the implementation\label%
{sec:implementation}}

In this section, we provide a detailed step-by-step guidance on implementing
our sequential monitoring procedures (based on Theorem \ref{ma1}), which
could be useful for researchers who want to apply it without studying the
underlying theory. The steps are as follows:

\begin{enumerate}
\item Use QMLE to estimate the parameters $\boldsymbol{\theta }$ using the
training sample 
\begin{equation*}
\hat{\boldsymbol{\theta }}_{m}=\mbox{\rm argmax}\left\{ \sum_{i=1}^{m}\bar{%
\ell}_{i}(\boldsymbol{\theta }):\;\boldsymbol{\theta }\in \boldsymbol{\Theta 
}\right\} ,
\end{equation*}%
where 
\begin{equation*}
\bar{\ell}_{i}(\boldsymbol{\theta })=\log \bar{\sigma}_{i}^{2}+\dfrac{%
y_{i}^{2}}{\bar{\sigma}_{i}^{2}},
\end{equation*}%
and 
\begin{equation*}
\bar{\sigma}_{i}^{2}(\boldsymbol{\theta })=\omega +\alpha y_{i-1}+\beta \bar{%
\sigma}_{i-1}^{2}(\boldsymbol{\theta }),\;\;\;\;\;1\leq i\leq m.
\end{equation*}

\item Calculate the quasi-Fisher scores of the log likelihood function 
\begin{equation*}
\frac{\partial \bar{\ell}_{i}(\hat{\boldsymbol{\theta }}_{m})}{\partial
\alpha }=\frac{\partial \bar{\ell}_{i}(\hat{\boldsymbol{\theta }}_{m})}{%
\partial \bar{\sigma}_{i}^{2}(\hat{\boldsymbol{\theta }}_{m})}\frac{\partial 
\bar{\sigma}_{i}^{2}(\hat{\boldsymbol{\theta }}_{m})}{\partial \alpha }=%
\frac{1}{\bar{\sigma}_{i}^{2}(\hat{\boldsymbol{\theta }}_{m})}\left( 1-\frac{%
y_{i}^{2}}{\bar{\sigma}_{i}^{2}(\hat{\boldsymbol{\theta }}_{m})}\right) 
\frac{\partial \bar{\sigma}_{i}^{2}(\hat{\boldsymbol{\theta }}_{m})}{%
\partial \alpha },\;\;\;\;\;1\leq i\leq m+\mathcal{n},
\end{equation*}

\begin{equation*}
\frac{\partial \bar{\ell}_{i}(\hat{\boldsymbol{\theta }}_{m})}{\partial
\beta }=\frac{\partial \bar{\ell}_{i}(\hat{\boldsymbol{\theta }}_{m})}{%
\partial \bar{\sigma}_{i}^{2}(\hat{\boldsymbol{\theta }}_{m})}\frac{\partial 
\bar{\sigma}_{i}^{2}(\hat{\boldsymbol{\theta }}_{m})}{\partial \beta }=\frac{%
1}{\bar{\sigma}_{i}^{2}(\hat{\boldsymbol{\theta }}_{m})}\left( 1-\frac{%
y_{i}^{2}}{\bar{\sigma}_{i}^{2}(\hat{\boldsymbol{\theta }}_{m})}\right) 
\frac{\partial \bar{\sigma}_{i}^{2}(\hat{\boldsymbol{\theta }}_{m})}{%
\partial \beta },\;\;\;\;\;1\leq i\leq m+\mathcal{n}.
\end{equation*}%
To implement the above calculation, we need the following recursive
equations: 
\begin{equation*}
\frac{\partial \bar{\sigma}_{i}^{2}(\hat{\boldsymbol{\theta }}_{m})}{%
\partial \alpha }=y_{i-1}^{2}+\beta \frac{\partial \bar{\sigma}_{i-1}^{2}(%
\hat{\boldsymbol{\theta }}_{m})}{\partial \alpha },
\end{equation*}%
\begin{equation*}
\frac{\partial \bar{\sigma}_{i}^{2}(\hat{\boldsymbol{\theta }}_{m})}{%
\partial \beta }=\bar{\sigma}_{i-1}^{2}(\hat{\boldsymbol{\theta }}%
_{m})+\beta \frac{\partial \bar{\sigma}_{i-1}^{2}(\hat{\boldsymbol{\theta }}%
_{m})}{\partial \beta },
\end{equation*}%
with initial values set to 
\begin{equation*}
\frac{\partial \bar{\sigma}_{0}^{2}(\hat{\boldsymbol{\theta }}_{m})}{%
\partial \alpha }=\frac{\partial \bar{\sigma}_{0}^{2}(\hat{\boldsymbol{%
\theta }}_{m})}{\partial \beta }=0,
\end{equation*}%
which follows \cite{fiorentini1996analytic} and \cite{pelagatti2009variance}.

\item Estimate the covariance matrix of the derivatives in the training
sample 
\begin{equation*}
\hat{\mathbf{D}}_{m}=\dfrac{1}{m}\sum_{i=i}^{m}\left( \frac{\partial \bar{%
\ell}_{i}(\hat{\boldsymbol{\theta }}_{m})}{\partial \alpha },\frac{\partial 
\bar{\ell}_{i}(\hat{\boldsymbol{\theta }}_{m})}{\partial \beta }\right)
^{\top }\left( \frac{\partial \bar{\ell}_{i}(\hat{\boldsymbol{\theta }}_{m})%
}{\partial \alpha },\frac{\partial \bar{\ell}_{i}(\hat{\boldsymbol{\theta }}%
_{m})}{\partial \beta }\right) .
\end{equation*}

\item Obtain the detector 
\begin{equation*}
\mathcal{D}_{m}(k)=\mathcal{r}_{m,k}^{\top }(\hat{\boldsymbol{\theta }}_{m})%
\hat{\mathbf{D}}_{m}^{-1}\mathcal{r}_{m,k}(\hat{\boldsymbol{\theta }}%
_{m}),\;\;\;\;\;m\leq k<m+\mathcal{n},
\end{equation*}%
where 
\begin{equation*}
\mathcal{r}_{m,k}(\boldsymbol{\theta })=\sum_{i=m+1}^{m+k}\left( \frac{%
\partial \bar{\ell}_{i}(\boldsymbol{\theta })}{\partial \alpha },\frac{%
\partial \bar{\ell}_{i}(\boldsymbol{\theta })}{\partial \beta }\right)
^{\top }.
\end{equation*}

\item Based on Theorem \ref{ma1}(\textit{i}), a break is flagged as soon as
the detector $\mathcal{D}_{m}(k)$ exceeds the tuned boundary function 
\begin{equation*}
\mathfrak{g}_{m}(k)=\mathcal{c}\mathcal{n}\left( 1+\dfrac{1}{\log (m)}%
\right) ^{2}\left( 1+\dfrac{k}{m}\right) ^{2}\left( \dfrac{k}{\mathcal{n}}%
\right) ^{\eta },\quad \;\;\mbox{where}\;\;1\leq k\leqslant \mathcal{n}-1;
\end{equation*}%
based on Theorem \ref{ma1}(\textit{ii}), a break is indicated as soon as the
detector $\mathcal{D}_{m}(k)$ goes above the tuned boundary function 
\begin{equation*}
\bar{\mathfrak{g}}_{m}(k)=\mathcal{c}r\left( 1+\dfrac{1}{\log (m)}\right)
^{2}\left( 1+\dfrac{k}{m}\right) ^{2}\left( \frac{k}{r}\right) ^{\eta
},\quad \;\;\mbox{where}\;\;r\leq k\leqslant \mathcal{n}-1;
\end{equation*}%
otherwise, no break is reported.
\end{enumerate}

\newpage \clearpage

\subsection{Additional simulation results\label{sec:additional}}

\begin{table}[b!]
\caption{Empirical size based on Theorem 3.1(\textit{ii})}
\label{tab:size_renyi_5pct}\centering
{\scriptsize {\ 
\begin{tabular}{llrrrrrrr}
\toprule \toprule &  & \multicolumn{3}{c}{$\epsilon_i \sim \mathcal{N} (0,1)$%
} &  & \multicolumn{3}{c}{$\epsilon_i \sim $ Student's $t$} \\ 
&  & \multicolumn{1}{l}{$m= 500$} & \multicolumn{1}{l}{$m= 1000$} & 
\multicolumn{1}{l}{$m= 5000$} &  & \multicolumn{1}{l}{$m= 500$} & 
\multicolumn{1}{l}{$m= 1000$} & \multicolumn{1}{l}{$m= 5000$} \\ 
\midrule &  & \multicolumn{7}{c}{Stationary GARCH(1,1)} \\ 
&  & \multicolumn{7}{c}{$\mathcal{n}=250$} \\ 
\cmidrule{3-9} $\tau=1.3$ &  & 14.2\% & 11.9\% & 9.4\% &  & 18.5\% & 13.8\%
& 11.1\% \\ 
$\tau=1.5$ &  & 12.4\% & 10.0\% & 7.7\% &  & 15.6\% & 11.4\% & 9.1\% \\ 
$\tau=1.7$ &  & 11.0\% & 8.7\% & 6.9\% &  & 14.0\% & 10.0\% & 7.8\% \\ 
$\tau=2.0$ &  & 9.4\% & 7.7\% & 5.8\% &  & 11.9\% & 8.7\% & 6.9\% \\ 
&  & \multicolumn{7}{c}{$\mathcal{n}=500$} \\ 
\cmidrule{3-9} $\tau=1.3$ &  & 14.8\% & 10.9\% & 8.8\% &  & 18.9\% & 15.2\%
& 12.0\% \\ 
$\tau=1.5$ &  & 12.8\% & 9.4\% & 7.1\% &  & 15.7\% & 12.9\% & 9.8\% \\ 
$\tau=1.7$ &  & 11.2\% & 8.3\% & 6.3\% &  & 13.9\% & 11.6\% & 8.6\% \\ 
$\tau=2.0$ &  & 10.0\% & 7.2\% & 5.4\% &  & 11.9\% & 10.0\% & 7.7\% \\ 
\midrule &  & \multicolumn{7}{c}{NonStationary GARCH(1,1)} \\ 
&  & \multicolumn{7}{c}{$\mathcal{n}=250$} \\ 
\cmidrule{3-9} $\tau=1.3$ &  & 13.8\% & 11.6\% & 10.1\% &  & 19.1\% & 14.5\%
& 11.2\% \\ 
$\tau=1.5$ &  & 11.4\% & 9.5\% & 8.5\% &  & 15.9\% & 12.1\% & 9.7\% \\ 
$\tau=1.7$ &  & 10.4\% & 8.3\% & 7.6\% &  & 14.0\% & 10.9\% & 8.3\% \\ 
$\tau=2.0$ &  & 8.7\% & 7.3\% & 6.6\% &  & 12.0\% & 9.5\% & 7.1\% \\ 
&  & \multicolumn{7}{c}{$\mathcal{n}=500$} \\ 
\cmidrule{3-9} $\tau=1.3$ &  & 13.3\% & 10.8\% & 9.7\% &  & 18.3\% & 14.2\%
& 11.0\% \\ 
$\tau=1.5$ &  & 11.8\% & 9.0\% & 8.3\% &  & 15.6\% & 12.4\% & 9.1\% \\ 
$\tau=1.7$ &  & 10.5\% & 7.9\% & 7.3\% &  & 14.0\% & 10.7\% & 8.0\% \\ 
$\tau=2.0$ &  & 9.1\% & 7.2\% & 6.6\% &  & 12.4\% & 9.5\% & 7.1\% \\ 
\bottomrule \bottomrule &  &  &  &  &  &  &  & 
\end{tabular}
} }
\end{table}

\begin{table}[tbph]
\caption{Empirical power based on Theorem 3.1(\textit{ii}) for a change at $%
k^{\ast }=\lfloor \protect\sqrt{\mathcal{n}}\rfloor $}
\label{tab:power_renyi_early}\centering
{\scriptsize {\ 
\begin{tabular}{llrrrrrrr}
\toprule \toprule $\mathcal{n}=500$ &  & \multicolumn{3}{c}{$H_{A,1}$} &  & 
\multicolumn{3}{c}{$H_{A,2}$} \\ 
\cmidrule{3-5}\cmidrule{7-9} before $k^*=\lfloor\sqrt{\mathcal{n}}\rfloor$ & 
& \multicolumn{3}{c}{$\beta_0=0.80$} &  & \multicolumn{3}{c}{$\beta_0=0.80$}
\\ 
after $k^*=\lfloor\sqrt{\mathcal{n}}\rfloor$ &  & \multicolumn{3}{c}{$%
\beta_1=0.60$} &  & \multicolumn{3}{c}{$\beta_1=0.90$} \\ 
\midrule $\epsilon_i \sim \mathcal{N} (0,1)$ &  & \multicolumn{1}{c}{$m=500$}
& \multicolumn{1}{c}{$m=1000$} & \multicolumn{1}{c}{$m=5000$} &  & 
\multicolumn{1}{c}{$m=500$} & \multicolumn{1}{c}{$m=1000$} & 
\multicolumn{1}{c}{$m=5000$} \\ 
\midrule $\eta=1.3$ &  & 83.18\% & 96.92\% & 100.00\% &  & 97.38\% & 100.00\%
& 100.00\% \\ 
$\eta=1.5$ &  & 61.78\% & 81.22\% & 99.62\% &  & 90.36\% & 98.68\% & 100.00\%
\\ 
$\eta=1.7$ &  & 41.70\% & 50.02\% & 83.42\% &  & 78.28\% & 92.08\% & 99.94\%
\\ 
$\eta=2.0$ &  & 23.40\% & 21.54\% & 21.56\% &  & 57.94\% & 65.18\% & 86.94\%
\\ 
\midrule $\epsilon_i \sim $ Student's $t$ &  &  &  &  &  &  &  &  \\ 
\midrule $\eta=1.3$ &  & 52.46\% & 71.82\% & 96.60\% &  & 90.46\% & 97.16\%
& 99.88\% \\ 
$\eta=1.5$ &  & 29.94\% & 37.48\% & 64.74\% &  & 82.34\% & 91.20\% & 99.06\%
\\ 
$\eta=1.7$ &  & 19.36\% & 16.00\% & 16.12\% &  & 72.08\% & 79.40\% & 93.50\%
\\ 
$\eta=2.0$ &  & 12.90\% & 8.98\% & 6.62\% &  & 55.80\% & 57.54\% & 62.86\%
\\ 
\midrule \midrule $\mathcal{n}=500$ &  & \multicolumn{3}{c}{$H_{A,3}$} &  & 
\multicolumn{3}{c}{$H_{A,4}$} \\ 
\cmidrule{3-5}\cmidrule{7-9} before $k^*=\lfloor\sqrt{\mathcal{n}}\rfloor$ & 
& \multicolumn{3}{c}{$\beta_0=0.90$} &  & \multicolumn{3}{c}{$\beta_0=0.90$}
\\ 
after $k^*=\lfloor\sqrt{\mathcal{n}}\rfloor$ &  & \multicolumn{3}{c}{$%
\beta_1=0.80$} &  & \multicolumn{3}{c}{$\beta_1=1.00$} \\ 
\midrule $\epsilon_i \sim \mathcal{N} (0,1)$ &  & \multicolumn{1}{c}{$m=500$}
& \multicolumn{1}{c}{$m=1000$} & \multicolumn{1}{c}{$m=5000$} &  & 
\multicolumn{1}{c}{$m=500$} & \multicolumn{1}{c}{$m=1000$} & 
\multicolumn{1}{c}{$m=5000$} \\ 
\midrule $\eta=1.3$ &  & 100.00\% & 100.00\% & 100.00\% &  & 96.02\% & 
99.78\% & 100.00\% \\ 
$\eta=1.5$ &  & 99.88\% & 100.00\% & 100.00\% &  & 87.06\% & 97.58\% & 
100.00\% \\ 
$\eta=1.7$ &  & 99.48\% & 99.98\% & 100.00\% &  & 73.94\% & 86.94\% & 99.68\%
\\ 
$\eta=2.0$ &  & 97.04\% & 99.70\% & 100.00\% &  & 53.06\% & 57.54\% & 77.02\%
\\ 
\midrule $\epsilon_i \sim $ Student's $t$ &  &  &  &  &  &  &  &  \\ 
\midrule $\eta=1.3$ &  & 99.64\% & 99.98\% & 100.00\% &  & 86.36\% & 94.92\%
& 99.70\% \\ 
$\eta=1.5$ &  & 99.26\% & 99.86\% & 100.00\% &  & 75.78\% & 86.50\% & 97.92\%
\\ 
$\eta=1.7$ &  & 98.02\% & 99.76\% & 100.00\% &  & 64.62\% & 72.14\% & 87.64\%
\\ 
$\eta=2.0$ &  & 94.98\% & 98.28\% & 99.98\% &  & 48.70\% & 47.94\% & 50.70\%
\\ 
\bottomrule \bottomrule &  &  &  &  &  &  &  & 
\end{tabular}
} }
\end{table}

\begin{table}[tbph]
\caption{Empirical power based on Theorem 3.1(\textit{ii}) for a change at $%
k^{\ast }=\lfloor 0.5\mathcal{n}\rfloor $}
\label{tab:power_renyi_middle}\centering
{\scriptsize {\ 
\begin{tabular}{llrrrrrrr}
\toprule \toprule $\mathcal{n}=500$ &  & \multicolumn{3}{c}{$H_{A,1}$} &  & 
\multicolumn{3}{c}{$H_{A,2}$} \\ 
\cmidrule{3-5}\cmidrule{7-9} before $k^*=\lfloor0.5\mathcal{n}\rfloor$ &  & 
\multicolumn{3}{c}{$\beta_0=0.80$} &  & \multicolumn{3}{c}{$\beta_0=0.80$}
\\ 
after $k^*=\lfloor0.5\mathcal{n}\rfloor$ &  & \multicolumn{3}{c}{$%
\beta_1=0.60$} &  & \multicolumn{3}{c}{$\beta_1=0.90$} \\ 
\midrule $\epsilon_i \sim \mathcal{N} (0,1)$ &  & \multicolumn{1}{c}{$m=500$}
& \multicolumn{1}{c}{$m=1000$} & \multicolumn{1}{c}{$m=5000$} &  & 
\multicolumn{1}{c}{$m=500$} & \multicolumn{1}{c}{$m=1000$} & 
\multicolumn{1}{c}{$m=5000$} \\ 
\midrule $\eta=1.3$ &  & 21.96\% & 31.82\% & 68.10\% &  & 55.94\% & 74.74\%
& 97.18\% \\ 
$\eta=1.5$ &  & 13.38\% & 11.70\% & 20.02\% &  & 36.16\% & 48.14\% & 80.00\%
\\ 
$\eta=1.7$ &  & 11.08\% & 7.88\% & 7.00\% &  & 22.76\% & 24.90\% & 38.76\%
\\ 
$\eta=2.0$ &  & 9.34\% & 6.60\% & 5.90\% &  & 12.82\% & 9.52\% & 7.22\% \\ 
\midrule $\epsilon_i \sim $ Student's $t$ &  &  &  &  &  &  &  &  \\ 
\midrule $\eta=1.3$ &  & 19.90\% & 17.22\% & 22.84\% &  & 54.20\% & 62.88\%
& 79.52\% \\ 
$\eta=1.5$ &  & 15.70\% & 11.72\% & 9.72\% &  & 39.02\% & 41.70\% & 52.72\%
\\ 
$\eta=1.7$ &  & 13.78\% & 9.82\% & 8.06\% &  & 27.32\% & 25.60\% & 24.78\%
\\ 
$\eta=2.0$ &  & 12.16\% & 8.30\% & 6.84\% &  & 16.74\% & 12.46\% & 8.14\% \\ 
\midrule \midrule $\mathcal{n}=500$ &  & \multicolumn{3}{c}{$H_{A,3}$} &  & 
\multicolumn{3}{c}{$H_{A,4}$} \\ 
\cmidrule{3-5}\cmidrule{7-9} before $k^*=\lfloor0.5\mathcal{n}\rfloor$ &  & 
\multicolumn{3}{c}{$\beta_0=0.90$} &  & \multicolumn{3}{c}{$\beta_0=0.90$}
\\ 
after $k^*=\lfloor0.5\mathcal{n}\rfloor$ &  & \multicolumn{3}{c}{$%
\beta_1=0.80$} &  & \multicolumn{3}{c}{$\beta_1=1.00$} \\ 
\midrule $\epsilon_i \sim \mathcal{N} (0,1)$ &  & \multicolumn{1}{c}{$m=500$}
& \multicolumn{1}{c}{$m=1000$} & \multicolumn{1}{c}{$m=5000$} &  & 
\multicolumn{1}{c}{$m=500$} & \multicolumn{1}{c}{$m=1000$} & 
\multicolumn{1}{c}{$m=5000$} \\ 
\midrule $\eta=1.3$ &  & 96.20\% & 99.36\% & 99.98\% &  & 49.88\% & 66.64\%
& 94.26\% \\ 
$\eta=1.5$ &  & 91.46\% & 97.62\% & 99.78\% &  & 31.74\% & 38.80\% & 68.92\%
\\ 
$\eta=1.7$ &  & 82.48\% & 92.58\% & 98.96\% &  & 19.56\% & 18.66\% & 26.42\%
\\ 
$\eta=2.0$ &  & 64.50\% & 75.26\% & 91.88\% &  & 12.50\% & 9.26\% & 7.16\%
\\ 
\midrule $\epsilon_i \sim $ Student's $t$ &  &  &  &  &  &  &  &  \\ 
\midrule $\eta=1.3$ &  & 92.50\% & 97.56\% & 99.84\% &  & 44.16\% & 51.12\%
& 71.38\% \\ 
$\eta=1.5$ &  & 85.94\% & 93.78\% & 98.94\% &  & 30.78\% & 31.32\% & 39.32\%
\\ 
$\eta=1.7$ &  & 77.12\% & 86.66\% & 96.38\% &  & 21.04\% & 17.84\% & 15.50\%
\\ 
$\eta=2.0$ &  & 61.24\% & 68.16\% & 83.06\% &  & 14.72\% & 10.76\% & 7.32\%
\\ 
\bottomrule \bottomrule &  &  &  &  &  &  &  & 
\end{tabular}
} }
\end{table}

\clearpage
\newpage

\subsection{Further results - empirical application using R\'{e}nyi weights\label{renyiempirics}}

Figure \ref{fig:renyi_emp} presents the detector $\mathcal{D}_m(k)$ (with
the choice of $\eta=1.5$) versus the boundary function $\bar{\mathfrak{g}}%
_m(k)$ based on Theorem 3.1 (\textit{ii}) for the four stocks during the
same periods analysed in Section 5. The monitoring procedure based on the
R\'enyi type statistics detects a change of AAPL on March 2nd, 2020, a
change of MBCN on March 13th, 2020, a change of GENE on April 27th, 2011,
and a change of NLP on September 4th, 2012. Such results illustrate the
merit of R\'enyi type statistics for the fast detection of very early
changes, in particular for AAPL (nearly 5 months earlier) and MBCN (about 2
months earlier), compared to the procedure based on Theorem 3.1(\textit{i}).

\begin{figure}[h]
\centering
\includegraphics[width=0.45\linewidth]{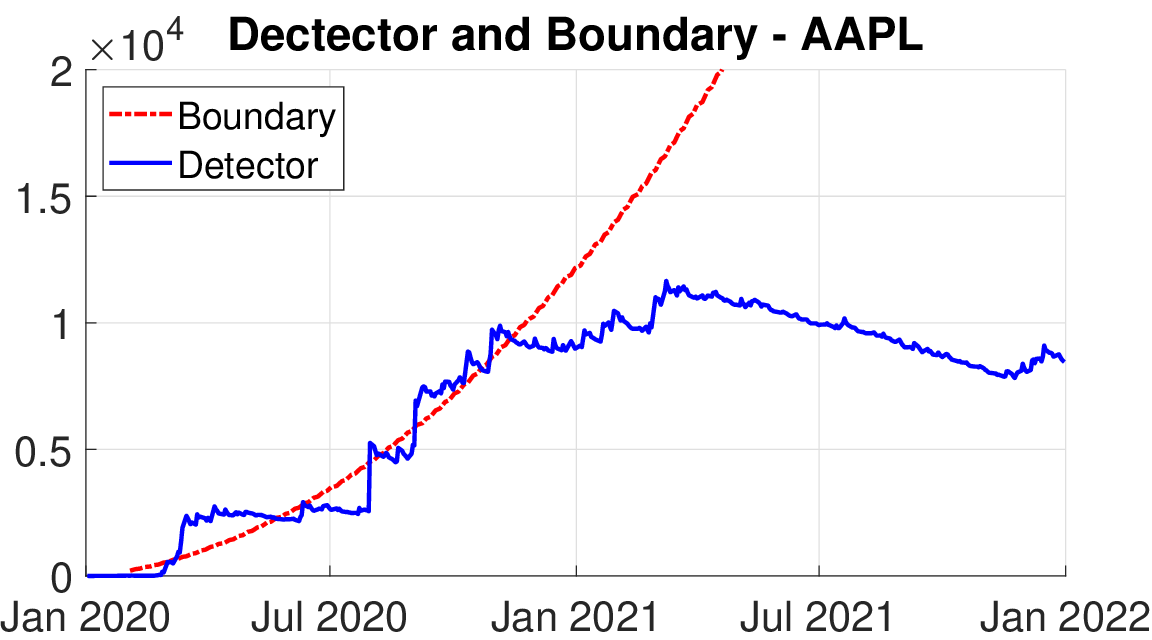}  %
\includegraphics[width=0.45\linewidth]{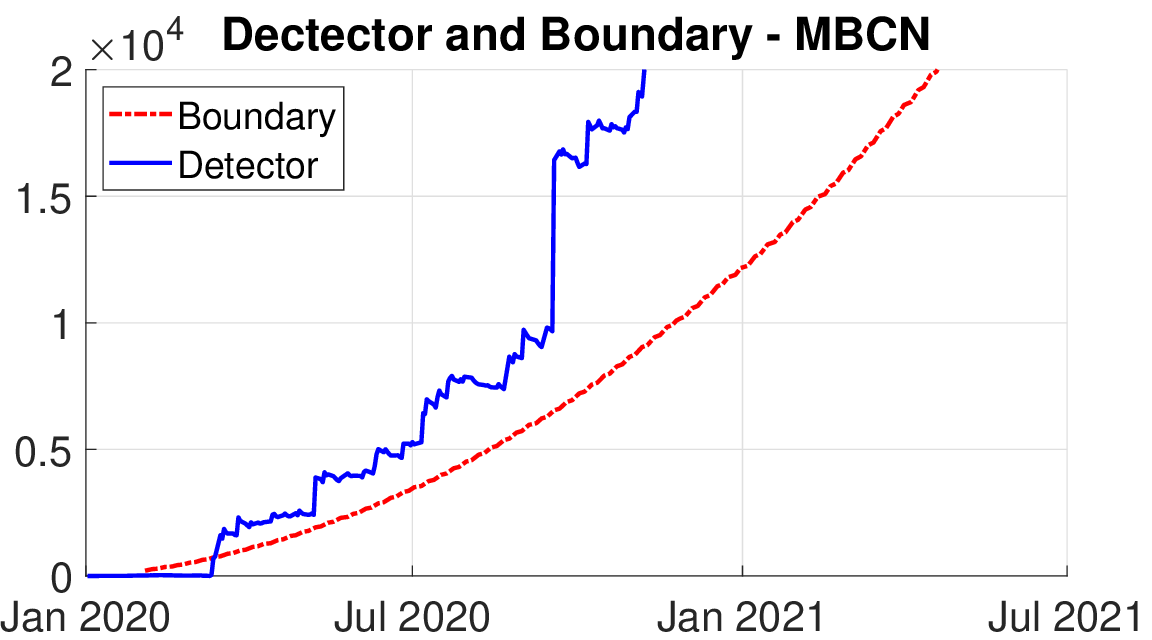}  \newline
\includegraphics[width=0.45\linewidth]{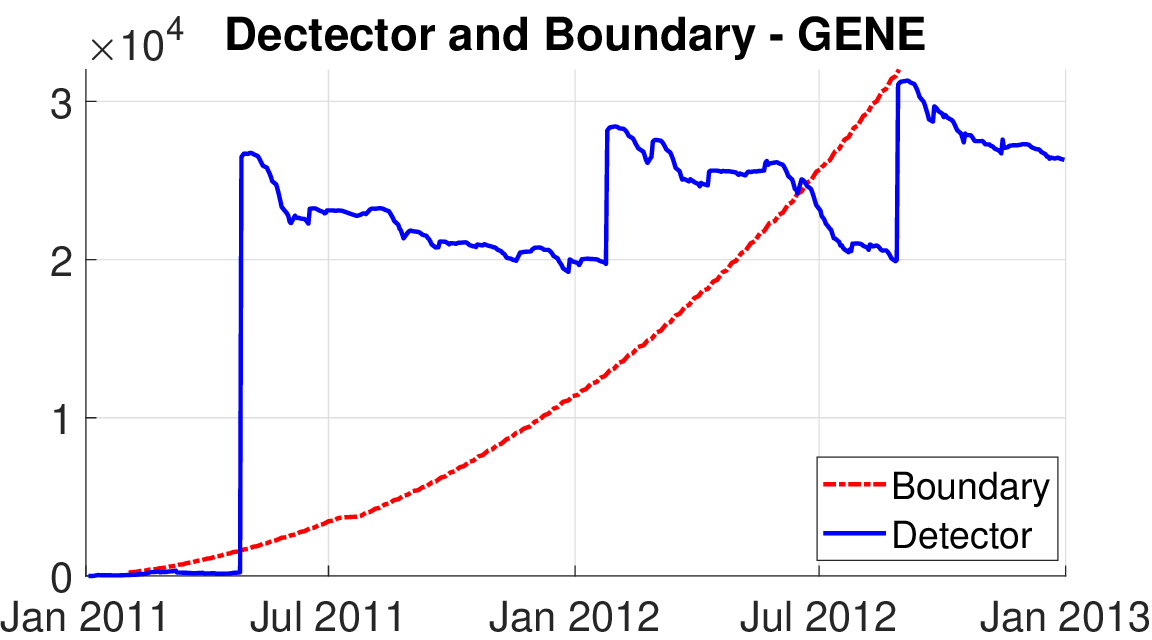}  %
\includegraphics[width=0.45\linewidth]{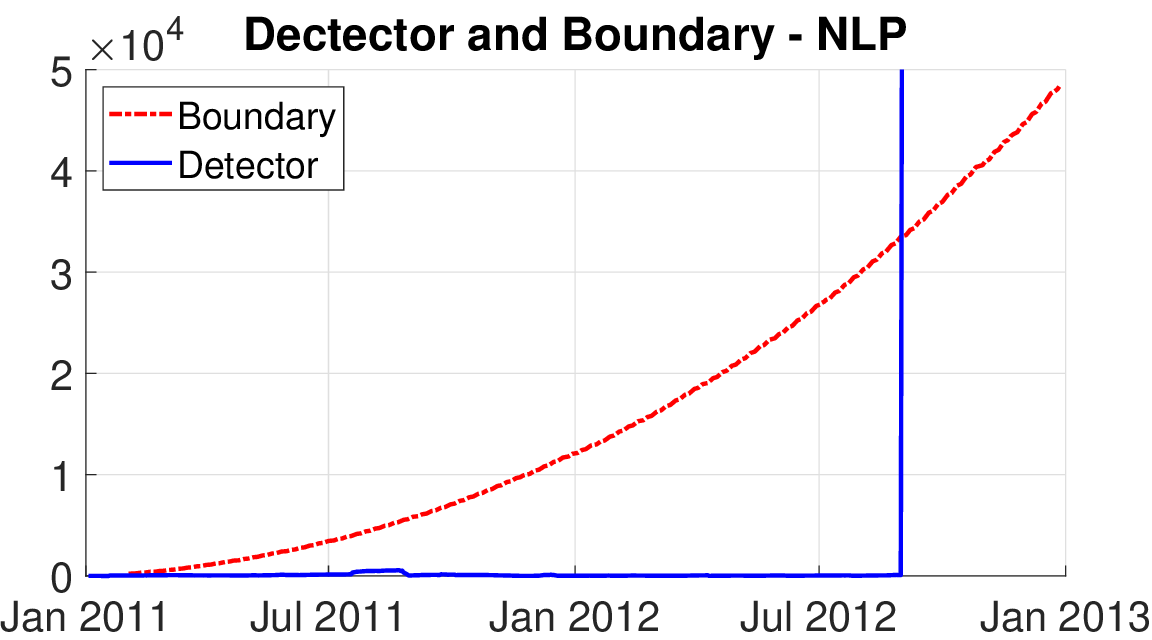}  
\caption{The detector $\mathcal{D}_m(k)$ versus the boundary function $\bar{%
\mathfrak{g}}_m(k)$ based on Theorem 3.1 (\textit{ii})}
\label{fig:renyi_emp}
\end{figure}

\newpage \clearpage

\renewcommand*{\thesection}{\Alph{section}}

\setcounter{subsection}{-1} %
\setcounter{subsubsection}{-1} \setcounter{equation}{0} \setcounter{lemma}{0}
\setcounter{theorem}{0} \renewcommand{\theassumption}{B.\arabic{assumption}} 
\renewcommand{\thetheorem}{B.\arabic{theorem}} \renewcommand{\thelemma}{B.%
\arabic{lemma}} \renewcommand{\thecorollary}{B.\arabic{corollary}} %
\renewcommand{\theequation}{B.\arabic{equation}}

\section{Technical lemmas \label{lemmas}}

Henceforth, we use $\nabla $ and $\nabla ^{2}$ for the gradient vector and
the Hessian matrix. We begin with some facts and notation which we will use
throughout the proofs.\newline


If \eqref{stalt} holds, then there is a unique, stationary, non anticipative
sequence $\{x_{i},-\infty <i<\infty \}$ satisfying 
\begin{equation*}
x_{i}=h_{i}\epsilon _{i},\quad \;\;h_{i}^{2}=\omega _{0}+\alpha
_{0}x_{i-1}^{2}+\beta _{0}h_{i-1}^{2},\quad \;\;-\infty <i<\infty
\end{equation*}%
(cf. \citealp{berkes2003garch}, and \citealp{francq2019garch}). Next we
define 
\begin{equation*}
\bar{h}_{i}^{2}(\mbox{\boldmath${\theta}$})=\omega +\alpha x_{i-1}^{2}+\beta 
\bar{h}_{i-1}^{2}(\mbox{\boldmath${\theta}$}),
\end{equation*}%
$\mbox{\boldmath${\theta}$}=(\omega ,\alpha ,\beta )^{\top }$. The
stationary version of the likelihood function is 
\begin{equation*}
\bar{f}_{i}(\mbox{\boldmath${\theta}$})=\log \bar{h}_{i}^{2}(%
\mbox{\boldmath${\theta}$})+\frac{x_{i}^{2}}{\bar{h}_{i}^{2}(%
\mbox{\boldmath${\theta}$})}
\end{equation*}%
and we define 
\begin{equation*}
{{\mathcal{f}}}_{k}(\mbox{\boldmath${\theta}$})=\sum_{i=1}^{k}\bar{f}_{i}(%
\mbox{\boldmath${\theta}$}).
\end{equation*}%
Similarly, the stationary version of ${\mathcal{r}}_{m,k}(%
\mbox{\boldmath${\theta}$})$ is 
\begin{equation}
\bar{{\mathcal{r}}}_{m,k}(\mbox{\boldmath${\theta}$})=\sum_{i=m+1}^{m+k}%
\left( \frac{\partial \bar{f}_{i}(\mbox{\boldmath${\theta}$})}{\partial
\alpha },\frac{\partial \bar{f}_{i}(\mbox{\boldmath${\theta}$})}{\partial
\beta }\right) ^{\top }.  \label{mcrsta}
\end{equation}%
\citet{berkes2003garch} and \citet{francq2019garch} showed that the matrix 
\begin{equation*}
{\mathbf{H}}=E\nabla ^{2}\bar{f}_{0}(\mbox{\boldmath${\theta}$}_{0}),
\end{equation*}%
exists and it is non singular.

After the change the observations are given by 
\begin{equation}
y_{i}=\sigma _{i}\epsilon _{i}\;\;\;\mbox{and}\;\;\;\sigma _{i}^{2}=\omega
_{A}+\alpha _{A}y_{i-1}^{2}+\beta _{A}\sigma _{i-1}^{2},\;\;\;i>m+k^{\ast }.
\label{alty}
\end{equation}

\begin{lemma}
\label{leone} If Assumptions \ref{aspos} and \ref{asin} hold, then we have 
\begin{equation*}
m\left( \hat{\mbox{\boldmath${\theta}$}}_{m}-\mbox{\boldmath${\theta}$}%
_{0}\right) =-{\mathbf{H}}^{-1}\nabla {{\mathcal{f}}}_{m}(%
\mbox{\boldmath${\theta}$}_{0})+O_{P}\left( \frac{1}{m}\right) .
\end{equation*}

\begin{proof}
The lemma is taken from \citet{berkes2003garch} and \citealp{francq2019garch}
(cf.\ also Lemma 5.3 in \citealp{horvath2024detecting}).
\end{proof}
\end{lemma}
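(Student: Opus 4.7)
The strategy is to mirror the classical quasi-maximum-likelihood expansion for GARCH(1,1) developed in \citet{berkes2003garch} and Chapter 7 of \citet{francq2019garch}, adapted to the present notation and to the possibility that the historical sample is nonstationary.

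\textbf{Main steps.} Under Assumptions \ref{aspos}--\ref{asin}, strong consistency of the QMLE gives $\hat{\boldsymbol{\theta}}_m\to\boldsymbol{\theta}_0$ a.s. (with a suitable substitute argument in the nonstationary regime $E\log(\alpha_0\epsilon_0^{2}+\beta_0)>0$); because $\boldsymbol{\theta}_0$ lies in the interior of the compact parameter set, for all large $m$ the estimator is interior and satisfies the first-order condition $\sum_{i=1}^{m}\nabla\bar{\ell}_i(\hat{\boldsymbol{\theta}}_m)=\mathbf{0}$. A componentwise mean-value expansion of the score around $\boldsymbol{\theta}_0$ then gives
\[
\hat{\boldsymbol{\theta}}_m-\boldsymbol{\theta}_0 \;=\; -\widehat{\mathbf{H}}_m^{-1}\sum_{i=1}^{m}\nabla\bar{\ell}_i(\boldsymbol{\theta}_0), \qquad \widehat{\mathbf{H}}_m:=\sum_{i=1}^{m}\nabla^{2}\bar{\ell}_i(\tilde{\boldsymbol{\theta}}),
\]
for an intermediate $\tilde{\boldsymbol{\theta}}$ with $\tilde{\boldsymbol{\theta}}\to\boldsymbol{\theta}_0$ a.s. The next step is to replace $\bar{\ell}_i$ by its stationary analogue $\bar{f}_i$ built from $\{x_i,h_i\}$: the discrepancies $\bar{\sigma}_i^{2}(\boldsymbol{\theta})-\bar{h}_i^{2}(\boldsymbol{\theta})$ and their first two $\boldsymbol{\theta}$-derivatives each carry a geometric factor $\beta^{i}$ with $\beta$ bounded strictly below $1$ uniformly on the parameter set, so the cumulative score and Hessian discrepancies are $O_P(1)$ in $m$. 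A uniform LLN for $\nabla^{2}\bar{f}_i$ on a shrinking neighborhood of $\boldsymbol{\theta}_0$, together with ergodicity of $\{x_i,h_i\}$, then yields $m^{-1}\widehat{\mathbf{H}}_m\to\mathbf{H}$ in probability. Inverting and multiplying through by $m$ leads to the decomposition
\[
m(\hat{\boldsymbol{\theta}}_m-\boldsymbol{\theta}_0)+\mathbf{H}^{-1}\nabla\mathcal{f}_m(\boldsymbol{\theta}_0) = -\widehat{\mathbf{H}}_m^{-1}\Bigl[\sum_{i=1}^{m}\nabla\bar{\ell}_i(\boldsymbol{\theta}_0)-\nabla\mathcal{f}_m(\boldsymbol{\theta}_0)\Bigr] - \bigl[\widehat{\mathbf{H}}_m^{-1}-\mathbf{H}^{-1}\bigr]\nabla\mathcal{f}_m(\boldsymbol{\theta}_0),
\]
which is what must be bounded at the stated rate.

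\textbf{Main obstacle.} The tightest part of the argument is this final remainder rate. The first bracket is $O_P(1)$ by the geometric decay of the initial-value discrepancy, and the second needs the sharpened rate $\widehat{\mathbf{H}}_m^{-1}-\mathbf{H}^{-1}=O_P(m^{-1/2})$, which comes from a Lipschitz modulus for $\nabla^{2}\bar{f}$ on a neighborhood of $\boldsymbol{\theta}_0$ combined with $\|\tilde{\boldsymbol{\theta}}-\boldsymbol{\theta}_0\|=O_P(m^{-1/2})$; multiplying by the martingale-CLT bound $\nabla\mathcal{f}_m(\boldsymbol{\theta}_0)=O_P(m^{1/2})$ then yields the $O_P(1/m)$ claimed in the lemma after the customary rescaling. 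For the nonstationary regime $E\log(\alpha_0\epsilon_0^{2}+\beta_0)>0$ permitted here, the stationary approximation above has no direct analogue; one must instead invoke the nonstationary QMLE theory of \citet{jensen2004asymptotic} and \citet{francq2012strict}, where only the subvector $(\alpha,\beta)$ is asymptotically identified and the arguments proceed through the exploding-volatility analysis, so the Hessian limit and the score CLT must both be re-derived in that framework. Adapting the above steps to that regime -- essentially what is carried out in Lemma 5.3 of \citet{horvath2024detecting} -- is the most delicate point of the proof, and is the reason the authors are content to cite the existing references.
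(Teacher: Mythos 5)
The paper offers no argument of its own for this lemma --- it is simply cited from \citet{berkes2003garch}, \citet{francq2019garch} and Lemma 5.3 of \citet{horvath2024detecting} --- so your reconstruction of the standard QMLE expansion (first-order condition, mean-value expansion of the score, replacement of $\bar{\ell}_i$ by the stationary analogue $\bar{f}_i$ via the geometrically decaying initial-value discrepancy, uniform LLN for the Hessian) is exactly the route those references take, and your observation that the nonstationary regime $E\log(\alpha_0\epsilon_0^2+\beta_0)>0$ requires the \citet{jensen2004asymptotic} framework (where $\omega$ is not identified and the stationary approximants $\bar{f}_i$, $\mathcal{f}_m$, $\mathbf{H}$ as defined in the appendix do not exist) is a legitimate point that the paper glosses over.

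There is, however, a concrete gap in your rate accounting. Your own bounds give: (a) the initial-value discrepancy in the cumulative score is $O_P(1)$, and (b) $\bigl[(m^{-1}\widehat{\mathbf{H}}_m)^{-1}-\mathbf{H}^{-1}\bigr]\nabla\mathcal{f}_m(\boldsymbol{\theta}_0)=O_P(m^{-1/2})\cdot O_P(m^{1/2})=O_P(1)$. Summing these yields $m(\hat{\boldsymbol{\theta}}_m-\boldsymbol{\theta}_0)+\mathbf{H}^{-1}\nabla\mathcal{f}_m(\boldsymbol{\theta}_0)=O_P(1)$, not the $O_P(1/m)$ asserted in the lemma; your closing claim that these estimates "yield the $O_P(1/m)$ \ldots after the customary rescaling" is a non sequitur, since the rescaling by $m$ has already been performed in your displayed decomposition. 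To reach a remainder of $O_P(1/m)$ (equivalently $O_P(m^{-2})$ for $\hat{\boldsymbol{\theta}}_m-\boldsymbol{\theta}_0$ itself) one needs a genuinely higher-order Bahadur-type representation --- controlling the third-derivative term in the Taylor expansion and the $O_P(m^{1/2})$ fluctuation of $\nabla^2\mathcal{L}_m(\boldsymbol{\theta}_0)$ around $m\mathbf{H}$ to second order --- which is the content of the sharper results in the Berkes--Horv\'ath efficiency work and is not delivered by the first-order argument you sketch. (For the uses the paper actually makes of this lemma, a remainder of $O_P(1)$, or even $o_P(m^{1/2})$, would suffice, but as a proof of the statement as written your argument falls two powers of $m$ short.) A minor additional point: $\widehat{\mathbf{H}}_m$ as you define it is the unnormalized Hessian sum, of order $m$, so the difference $\widehat{\mathbf{H}}_m^{-1}-\mathbf{H}^{-1}$ appearing in your decomposition should read $(m^{-1}\widehat{\mathbf{H}}_m)^{-1}-\mathbf{H}^{-1}$ for the identity to balance.
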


\begin{lemma}
\label{letwo} If Assumptions \ref{aspos}--\ref{assh2} hold, then we have%
\begin{equation}
\max_{1\leq k\leq {\mathcal{n}}}\frac{1}{k^{\zeta }}\sup_{%
\mbox{\boldmath${\theta}$}\in \mbox{\boldmath${\theta}$}}\left\Vert {%
\mathcal{r}}_{m,k}(\mbox{\boldmath${\theta}$})-\bar{{\mathcal{r}}}_{m,k}(%
\mbox{\boldmath${\theta}$})\right\Vert =O_{P}(1),  \label{letwo1}
\end{equation}%
with some $\zeta <1/2$ and 
\begin{equation}
\max_{1\leq k\leq {\mathcal{n}}}\frac{1}{{\mathcal{n}}^{1/2}(k/{\mathcal{n}}%
)^{\gamma }}\sup_{\mbox{\boldmath${\theta}$}\in \mbox{\boldmath${\theta}$}%
}\left\Vert {\mathcal{r}}_{m,k}(\mbox{\boldmath${\theta}$})-\bar{{\mathcal{r}%
}}_{m,k}(\mbox{\boldmath${\theta}$})\right\Vert =o_{P}(1),  \label{letwo2}
\end{equation}%
for all $0\leq \gamma <1/2$.

\begin{proof}
It is shown in \citet{berkes2003garch}, \citet{francq2004maximum}, and %
\citet{francq2019garch} that there is a constant $0<\rho <1$ such that 
\begin{equation*}
\sup_{\mbox{\boldmath${\theta}$}\in \mbox{\boldmath${\theta}$}}\left\vert 
\bar{\sigma}_{m+i}^{2}(\mbox{\boldmath${\theta}$})-\bar{h}_{m+i}^{2}(%
\mbox{\boldmath${\theta}$})\right\vert =O\left( \rho ^{i}\right) \quad \quad
a.s.
\end{equation*}%
\begin{equation*}
\sup_{\mbox{\boldmath${\theta}$}\in \mbox{\boldmath${\theta}$}}\left\Vert
\nabla \bar{\sigma}_{m+i}^{2}(\mbox{\boldmath${\theta}$})-\nabla \bar{h}%
_{m+i}^{2}(\mbox{\boldmath${\theta}$})\right\Vert =O\left( \rho ^{i}\right)
\quad \quad a.s.
\end{equation*}%
and 
\begin{equation*}
\left\vert y_{m+i}^{2}-x_{m+i}^{2}\right\vert =O\left( \rho ^{i}\right)
\quad \quad a.s.
\end{equation*}%
Since $\underline{\omega }\leq \bar{\sigma}_{h}^{2}(\mbox{\boldmath${%
\theta}$})$ and $\underline{\omega }\leq \bar{h}_{j}^{2}(\mbox{\boldmath${%
\theta}$})$ for all $j\geq 1$, elementary calculations yield \eqref{letwo1}.
Now \eqref{letwo1} implies 
\begin{equation*}
\max_{1\leq k\leq {\mathcal{n}}}\frac{1}{{\mathcal{n}}^{1/2}(k/{\mathcal{n}}%
)^{\gamma }}\sup_{\mbox{\boldmath${\theta}$}\in \mbox{\boldmath${\theta}$}%
}\left\Vert {\mathcal{r}}_{m,k}(\mbox{\boldmath${\theta}$})-\bar{{\mathcal{r}%
}}_{m,k}(\mbox{\boldmath${\theta}$})\right\Vert =O_{P}(1){\mathcal{n}}%
^{-1/2+\gamma }\max_{1\leq k\leq {\mathcal{n}}}k^{-\gamma +\zeta }=o_{P}(1).
\end{equation*}
\end{proof}
\end{lemma}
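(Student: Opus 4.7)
The plan is to exploit the standard fact that in a stable GARCH(1,1) recursion the influence of initial values decays exponentially fast, uniformly over the compact parameter space $\mbox{\boldmath${\theta}$}$. First, I would write the one-step score explicitly as
\begin{equation*}
\frac{\partial \bar{\ell}_{i}(\mbox{\boldmath${\theta}$})}{\partial \alpha}=\frac{1}{\bar{\sigma}_{i}^{2}(\mbox{\boldmath${\theta}$})}\left(1-\frac{y_{i}^{2}}{\bar{\sigma}_{i}^{2}(\mbox{\boldmath${\theta}$})}\right)\frac{\partial \bar{\sigma}_{i}^{2}(\mbox{\boldmath${\theta}$})}{\partial \alpha},
\end{equation*}
and analogously for $\beta$ and for the stationary version $\partial \bar{f}_{i}/\partial \alpha$ built from $\bar{h}_{i}^{2}$ and $x_{i}$. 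On the compact set $\mbox{\boldmath${\theta}$}$ the denominators are bounded below by $\underline{\omega}>0$, so I only need to control the numerators.

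Next, I would invoke the well-known a.s. geometric approximation results of \citet{berkes2003garch} and \citet{francq2019garch}: there exists $\rho\in(0,1)$ such that $\sup_{\mbox{\boldmath${\theta}$}}|\bar{\sigma}_{m+i}^{2}-\bar{h}_{m+i}^{2}|$, $\sup_{\mbox{\boldmath${\theta}$}}\|\nabla\bar{\sigma}_{m+i}^{2}-\nabla\bar{h}_{m+i}^{2}\|$, and $|y_{m+i}^{2}-x_{m+i}^{2}|$ are all $O(\rho^{i})$ almost surely. Expanding the difference $\partial\bar{\ell}_{m+i}/\partial\alpha-\partial\bar{f}_{m+i}/\partial\alpha$ as a rational function, a routine algebraic manipulation gives a bound of the form
\begin{equation*}
\sup_{\mbox{\boldmath${\theta}$}\in\mbox{\boldmath${\theta}$}}\left\|\nabla_{\alpha,\beta}\bar{\ell}_{m+i}(\mbox{\boldmath${\theta}$})-\nabla_{\alpha,\beta}\bar{f}_{m+i}(\mbox{\boldmath${\theta}$})\right\|\leq C\,\rho^{i}\,\xi_{m+i},
\end{equation*}
where $\xi_{m+i}$ is a random coefficient built out of polynomial expressions in $y_{m+i}^{2}$, $x_{m+i}^{2}$, $\bar{\sigma}_{m+i}^{2}$, $\bar{h}_{m+i}^{2}$ and their $\alpha,\beta$-derivatives, all of which have bounded (say, second) moments uniformly in $i$ under Assumption~\ref{asin} with $\kappa>4$.

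Summing this bound yields $\|{\mathcal{r}}_{m,k}-\bar{{\mathcal{r}}}_{m,k}\|\leq C\sum_{i=1}^{k}\rho^{i}\xi_{m+i}$, whose expectation is bounded uniformly in $k$, hence the whole sum is $O_{P}(1)$ uniformly in $1\leq k\leq{\mathcal{n}}$. This already gives \eqref{letwo1} with any $\zeta\in[0,1/2)$ (indeed $\zeta=0$ works). Finally, \eqref{letwo2} is a direct consequence: for any $\gamma\in[0,1/2)$,
\begin{equation*}
\max_{1\leq k\leq{\mathcal{n}}}\frac{\|{\mathcal{r}}_{m,k}-\bar{{\mathcal{r}}}_{m,k}\|}{{\mathcal{n}}^{1/2}(k/{\mathcal{n}})^{\gamma}}\leq {\mathcal{n}}^{\gamma-1/2}\max_{1\leq k\leq{\mathcal{n}}}k^{-\gamma}\cdot O_{P}(1)={\mathcal{n}}^{\gamma-1/2}O_{P}(1)=o_{P}(1).
\end{equation*}

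The main obstacle, I expect, is not the sum-over-$k$ step (which is immediate once the single-step bound is in place) but rather the careful verification that the geometric-decay bound holds \emph{uniformly} over $\mbox{\boldmath${\theta}$}$ and that the multiplicative remainder $\xi_{m+i}$ has a moment bounded in $i$. This requires keeping track of the ratios $\bar{\sigma}_{i}^{2}/\bar{h}_{i}^{2}$ and of the derivatives $\partial\bar{\sigma}_{i}^{2}/\partial\alpha$, $\partial\bar{\sigma}_{i}^{2}/\partial\beta$ (which themselves are GARCH-type recursions with perturbed initial values) and appealing to Assumption~\ref{asin}\textit{(iii)} to ensure enough moments of $\epsilon_{0}$ for the products to be integrable. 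Once this uniform exponential bound is secured, the rest of the argument is essentially a geometric series estimate.
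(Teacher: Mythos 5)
Your proposal follows essentially the same route as the paper: invoke the a.s.\ geometric approximation results of \citet{berkes2003garch} and \citet{francq2019garch} for $\bar{\sigma}_{m+i}^{2}-\bar{h}_{m+i}^{2}$, its gradient, and $y_{m+i}^{2}-x_{m+i}^{2}$, use the uniform lower bound $\underline{\omega}$ on the conditional variance functions to control the denominators, sum the resulting geometrically decaying bound to obtain \eqref{letwo1}, and then deduce \eqref{letwo2} by the same elementary $\mathcal{n}^{\gamma-1/2}$ estimate. The only (harmless) difference is that you claim the uniform bound with $\zeta=0$, whereas the paper hedges with ``some $\zeta<1/2$''; both versions suffice for \eqref{letwo2}, and your added detail on the moment control of the multiplicative remainder is exactly the content the paper compresses into ``elementary calculations.''
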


\begin{lemma}
\label{lethree} If Assumptions \ref{aspos}--\ref{assh2} hold, then we have 
\begin{equation*}
\max_{1\leq k\leq {\mathcal{n}}}\frac{1}{{\mathcal{n}}^{1/2}(k/{\mathcal{n}}%
)^{\gamma }}\left\Vert {\mathcal{r}}_{m,k}(\hat{\mbox{\boldmath${\theta}$}}%
_{m})-\bar{{\mathcal{r}}}_{m,k}(\mbox{\boldmath${\theta}$}_{0})\right\Vert
=o_{P}(1),
\end{equation*}%
for all $0\leq \gamma <1/2$.

\begin{proof}
Lemma \ref{letwo} yields 
\begin{equation*}
\max_{1\leq k\leq {\mathcal{n}}}\frac{1}{{\mathcal{n}}^{1/2}(k/{\mathcal{n}}%
)^{\gamma }}\sup_{\mbox{\boldmath${\theta}$}\in \mbox{\boldmath${\theta}$}%
}\left\Vert {\mathcal{r}}_{m,k}(\hat{\mbox{\boldmath${\theta}$}}_{m})-\bar{{%
\mathcal{r}}}_{m,k}(\hat{\mbox{\boldmath${\theta}$}}_{m})\right\Vert
=o_{P}(1).
\end{equation*}%
It is proven in \citet{berkes2003garch} that there is a neighbourhood of $%
\mbox{\boldmath${\theta}$}_{0}$, say $\mbox{\boldmath${\theta}$}_{0}$, such
that 
\begin{equation*}
\max_{1\leq k\leq {\mathcal{n}}}\sup_{\mbox{\boldmath${\theta}$}\in %
\mbox{\boldmath${\theta}$}_{0}}\frac{1}{k}\left\Vert \nabla ^{3}{\mathcal{f}}%
_{m,k}(\mbox{\boldmath${\theta}$})\right\Vert =O_{P}(1).
\end{equation*}%
Using a two term Taylor expansion coordinate-wise, we conclude that 
\begin{equation}
\nabla {\mathcal{f}}_{m,k}(\hat{\mbox{\boldmath${\theta}$}}_{m})-\nabla {%
\mathcal{f}}_{m,k}(\mbox{\boldmath${\theta}$}_{0})=\nabla ^{2}{\mathcal{f}}%
_{m,k}(\mbox{\boldmath${\theta}$}_{0})(\hat{\mbox{\boldmath${\theta}$}}_{m}-%
\mbox{\boldmath${\theta}$}_{0})+\mathbf{R}_{m,k},  \label{NN}
\end{equation}%
and 
\begin{equation*}
\max_{1\leq k\leq {\mathcal{n}}}\frac{1}{{\mathcal{n}}^{1/2}(k/{\mathcal{n}}%
)^{\gamma }}\Vert \mathbf{R}_{m,k}\Vert =O_{P}({\mathcal{n}}^{1/2})\Vert 
\hat{\mbox{\boldmath${\theta}$}}_{m}-\mbox{\boldmath${\theta}$}_{0}\Vert
^{2}.
\end{equation*}%
The central limit for decomposable Bernoulli shifts (cf. %
\citealp{horvath2024detecting}) yields that 
\begin{equation}
\Vert \nabla {\mathcal{f}}_{m}(\mbox{\boldmath${\theta}$}_{0})\Vert
=O_{P}\left( m^{1/2}\right) ,  \label{NNN}
\end{equation}%
and therefore Lemma \ref{leone} and Assumption \ref{assh2} give 
\begin{equation}
\max_{1\leq k\leq {\mathcal{n}}}\frac{1}{{\mathcal{n}}^{1/2}(k/{\mathcal{n}}%
)^{\gamma }}\Vert \mathbf{R}_{m,k}\Vert =o_{P}(1).  \label{N2}
\end{equation}%
According to the ergodic theorem (cf. \citealp{breiman1968}) 
\begin{equation}
\max_{1\leq k\leq {\mathcal{n}}}\frac{1}{k}\left\Vert \nabla ^{2}{\mathcal{f}%
}_{m,k}(\mbox{\boldmath${\theta}$}_{0})\right\Vert =O_{P}(1),  \label{N1}
\end{equation}%
and therefore Lemma \ref{leone} implies 
\begin{align*}
\max_{1\leq k\leq {\mathcal{n}}}\frac{1}{{\mathcal{n}}^{1/2}(k/{\mathcal{n}}%
)^{\gamma }}\left\Vert \nabla ^{2}{\mathcal{f}}_{m}(\mbox{\boldmath${%
\theta}$}_{0})(\hat{\mbox{\boldmath${\theta}$}}_{m}-\mbox{\boldmath${%
\theta}$}_{0})\right\Vert & =O_{P}\left( m^{-1/2}\right) \max_{1\leq k\leq {%
\mathcal{n}}}\frac{k}{{\mathcal{n}}^{1/2}(k/{\mathcal{n}})^{\gamma }} \\
& =O_{P}\left( \left( \frac{{\mathcal{n}}}{m}\right) ^{1/2}\right) =o_{P}(1).
\end{align*}%
Given that $\bar{{\mathcal{r}}}_{m,k}(\mbox{\boldmath${\theta}$})$ is made
of the first two coordinates of $\nabla {\mathcal{f}}_{m,k}(%
\mbox{\boldmath${\theta}$})$, the proof is complete.
\end{proof}
\end{lemma}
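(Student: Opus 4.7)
The plan is to split the target difference by the triangle inequality into two pieces,
\[
{\mathcal{r}}_{m,k}(\hat{\mbox{\boldmath${\theta}$}}_m) - \bar{{\mathcal{r}}}_{m,k}(\mbox{\boldmath${\theta}$}_0) = \bigl[{\mathcal{r}}_{m,k}(\hat{\mbox{\boldmath${\theta}$}}_m) - \bar{{\mathcal{r}}}_{m,k}(\hat{\mbox{\boldmath${\theta}$}}_m)\bigr] + \bigl[\bar{{\mathcal{r}}}_{m,k}(\hat{\mbox{\boldmath${\theta}$}}_m) - \bar{{\mathcal{r}}}_{m,k}(\mbox{\boldmath${\theta}$}_0)\bigr],
\]
and to control each piece separately. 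The first bracket is essentially the initial-value error: we replace the observed recursion driven by $y_i$ with the stationary recursion driven by $x_i$, but at the \emph{same} parameter value. This is exactly the content of estimate \eqref{letwo2} in Lemma \ref{letwo}, which is already uniform over $\mbox{\boldmath${\theta}$}\in\mbox{\boldmath${\Theta}$}$, so plugging in $\mbox{\boldmath${\theta}$}=\hat{\mbox{\boldmath${\theta}$}}_m$ immediately gives an $o_P(1)$ contribution.

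For the second bracket we perform a Taylor expansion of $\bar{{\mathcal{r}}}_{m,k}$ (which is, coordinatewise, a sub-vector of $\nabla{\mathcal{f}}_{m,k}$) around $\mbox{\boldmath${\theta}$}_0$ and write it as a linear term plus a quadratic remainder in $\hat{\mbox{\boldmath${\theta}$}}_m - \mbox{\boldmath${\theta}$}_0$. For the linear term, the ergodic theorem yields $\max_{1\le k\le {\mathcal{n}}}k^{-1}\|\nabla^2{\mathcal{f}}_{m,k}(\mbox{\boldmath${\theta}$}_0)\| = O_P(1)$, and Lemma \ref{leone} together with \eqref{NNN} delivers $\|\hat{\mbox{\boldmath${\theta}$}}_m - \mbox{\boldmath${\theta}$}_0\| = O_P(m^{-1/2})$. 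Combining these bounds and inserting the weight, the contribution is
\[
\max_{1\le k\le {\mathcal{n}}} \frac{1}{{\mathcal{n}}^{1/2}(k/{\mathcal{n}})^{\gamma}} \cdot O_P\!\left(\frac{k}{m^{1/2}}\right) = O_P\!\left(\frac{{\mathcal{n}}^{1/2}}{m^{1/2}}\right) = o_P(1),
\]
by Assumption \ref{assh2}. The quadratic remainder is treated similarly: once a uniform bound on the third derivative in a neighbourhood of $\mbox{\boldmath${\theta}$}_0$ is available from the standard GARCH QMLE theory (cf.\ \citealp{berkes2003garch}), it contributes at most $O_P(k/m)$, which after weighting is $O_P({\mathcal{n}}^{1/2}/m) = o_P(1)$.

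The conceptual backbone is therefore a mean-value/Taylor argument combined with the root-$m$ consistency of the QMLE, feeding off two \emph{separately} established ingredients: (i) the exponential stability of the GARCH recursion in initial values, which underwrites Lemma \ref{letwo}, and (ii) the Jensen-type domination of second and third score derivatives in a neighbourhood of $\mbox{\boldmath${\theta}$}_0$, which underwrites the Taylor expansion. The main obstacle I expect is the uniform-in-$k$ control of the Taylor remainder: the weight $(k/{\mathcal{n}})^{-\gamma}$ inflates small-$k$ terms, and one has to be careful that the bound on $\|\nabla^3{\mathcal{f}}_{m,k}\|$ be proportional to $k$ rather than merely locally $O_P(1)$. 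Since the third-derivative components of the stationary GARCH log-likelihood are stationary ergodic with finite expectation under Assumption \ref{asin} (in particular $E|\epsilon_0|^{\kappa}<\infty$ with $\kappa>4$), this is the place where the moment condition is actually used, and the balance between ${\mathcal{n}}$, $m$, and $\gamma<1/2$ in Assumption \ref{assh2} is precisely what makes the whole weighted maximum negligible.
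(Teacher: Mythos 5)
Your proposal is correct and follows essentially the same route as the paper: the same triangle-inequality split into the initial-value error (controlled by \eqref{letwo2} of Lemma \ref{letwo} uniformly in $\mbox{\boldmath${\theta}$}$) and a Taylor expansion of the stationary score around $\mbox{\boldmath${\theta}$}_{0}$, with the linear term bounded via the ergodic theorem and $\Vert\hat{\mbox{\boldmath${\theta}$}}_{m}-\mbox{\boldmath${\theta}$}_{0}\Vert=O_{P}(m^{-1/2})$ from Lemma \ref{leone} and \eqref{NNN}, and the quadratic remainder bounded via the third-derivative control of \citet{berkes2003garch}, both made negligible by ${\mathcal{n}}/m\rightarrow 0$. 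The rate computations ($O_{P}(({\mathcal{n}}/m)^{1/2})$ for the linear piece and $O_{P}({\mathcal{n}}^{1/2}/m)$ for the remainder) match the paper's exactly.
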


\begin{lemma}
\label{lefo} If Assumptions \ref{aspos}--\ref{assh2} hold, then, for any $m$%
, we can a define Gaussian process $\{\mbox{\boldmath${\Gamma}$}_{m}(t)\in
R^{2},t\geq 0\}$ with $E\mbox{\boldmath${\Gamma}$}_{m}(t)=\mathbf{0}$ and $E%
\mbox{\boldmath${\Gamma}$}_{m}(t)\mbox{\boldmath${\Gamma}$}_{m}(s)=\min (t,s)%
\mathbf{D}$, 
\begin{equation}
\mathbf{D}=E\left[ \left( \frac{\partial \bar{f}_{1}(\mbox{\boldmath${%
\theta}$}_{0})}{\partial \alpha },\frac{\partial \bar{f}_{1}(%
\mbox{\boldmath${\theta}$}_{0})}{\partial \beta }\right) ^{\top }\left( 
\frac{\partial \bar{f}_{1}(\mbox{\boldmath${\theta}$}_{0})}{\partial \alpha }%
,\frac{\partial \bar{f}_{1}(\mbox{\boldmath${\theta}$}_{0})}{\partial \beta }%
\right) \right] ,  \label{lefod}
\end{equation}%
such that for any $0\leq \gamma <1/2$ 
\begin{equation*}
\max_{1\leq k\leq {\mathcal{n}}}\frac{1}{{\mathcal{n}}^{1/2}(k/{\mathcal{n}}%
)^{\gamma }}\left\Vert \bar{{\mathcal{r}}}_{m,k}(\mbox{\boldmath${\theta}$}%
_{0})-\mbox{\boldmath${\Gamma}$}_{m}(k)\right\Vert =o_{P}(1),
\end{equation*}

\begin{proof}
It is shown in \citet{horvath2024detecting} that $\{\nabla {\mathcal{f}}%
_{m,k}(\mbox{\boldmath${\theta}$}_{0}),k\geq 1\}$ is a decomposable
Bernoulli shift. Hence $\{\bar{{\mathcal{r}}}_{m,k}(\mbox{\boldmath${%
\theta}$}_{0}),k\geq 1\}$ is also a decomposable Bernoulli shift, and
therefore by Lemma S2.1 of \citet{aue2014dependent} we can define Wiener
processes $\mbox{\boldmath${\Gamma}$}_{m}(t)\in R^{2}$ with $E%
\mbox{\boldmath${\Gamma}$}_{m}(t)=\mathbf{0}$ and $E\mbox{\boldmath${%
\Gamma}$}_{m}(t)\mbox{\boldmath${\Gamma}$}_{m}(s)=\min (t,s)\mathbf{D}$ such
that 
\begin{equation}
\sup_{1\leq k<\infty }\frac{1}{k^{\zeta }}\left\Vert \bar{{\mathcal{r}}}%
_{m,k}(\mbox{\boldmath${\theta}$}_{0})-\mbox{\boldmath${\Gamma}$}%
_{m}(k)\right\Vert =O_{P}(1),  \label{auho}
\end{equation}%
with some $\zeta <1/2$. Since 
\begin{equation*}
\sup_{1\leq k\leq {\mathcal{n}}}\frac{k^{\zeta }}{{\mathcal{n}}^{1/2}(k/{%
\mathcal{n}})^{\gamma }}=o(1),
\end{equation*}%
the result is proven.
\end{proof}
\end{lemma}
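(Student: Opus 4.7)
The plan is to realize $\{\bar{{\mathcal{r}}}_{m,k}(\bth_{0}),k\ge 1\}$ as a vector-valued partial sum of a weakly dependent stationary sequence, and then invoke an off-the-shelf Gaussian strong approximation. First, I would observe that, since $\bar{h}_{i}^{2}(\bth_{0})$ admits a geometrically convergent infinite-past representation in $\{x_{j},j\le i-1\}$ and hence in $\{\epsilon_{j},j\le i-1\}$, each coordinate of $(\partial \bar{f}_{i}(\bth_{0})/\partial\alpha,\partial \bar{f}_{i}(\bth_{0})/\partial\beta)^{\top}$ is a measurable functional of $\{\epsilon_{j},j\le i\}$ that can be approximated geometrically by its $q$-dependent truncation. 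This is precisely the \emph{decomposable Bernoulli shift} structure verified in \citet{horvath2024detecting}. Together with Assumption \ref{asin}(iii), which provides a moment of order $\kappa>4$ on $\epsilon_{0}$ (hence on $\epsilon_{0}^{2}/\bar{h}_{i}^{2}$), this sequence satisfies the regularity hypotheses of a strong invariance principle.

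Second, I would check that the score is centred and identify its covariance. Since $\{\epsilon_{i}\}$ is i.i.d.\ with $E\epsilon_{0}^{2}=1$, a direct differentiation shows
\begin{equation*}
\frac{\partial \bar{f}_{i}(\bth_{0})}{\partial\alpha}=\frac{1}{\bar{h}_{i}^{2}(\bth_{0})}\Bigl(1-\frac{x_{i}^{2}}{\bar{h}_{i}^{2}(\bth_{0})}\Bigr)\frac{\partial \bar{h}_{i}^{2}(\bth_{0})}{\partial\alpha}=(1-\epsilon_{i}^{2})\frac{1}{\bar{h}_{i}^{2}(\bth_{0})}\frac{\partial \bar{h}_{i}^{2}(\bth_{0})}{\partial\alpha},
\end{equation*}
and analogously for $\beta$, so conditional on $\mathcal{F}_{i-1}=\sigma(\epsilon_{j},j\le i-1)$ the score is a martingale difference whose stationary covariance matrix is exactly $\mathbf{D}$ in (\ref{lefod}). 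With the Bernoulli shift property and the moment bound in hand, Lemma S2.1 of \citet{aue2014dependent} supplies, on a rich enough probability space, a two-dimensional Wiener process $\{\bGam_{m}(t),t\ge 0\}$ with covariance kernel $\min(t,s)\mathbf{D}$ and the polynomial rate
\begin{equation*}
\sup_{1\le k<\infty}\frac{1}{k^{\zeta}}\bigl\|\bar{{\mathcal{r}}}_{m,k}(\bth_{0})-\bGam_{m}(k)\bigr\|=O_{P}(1)
\end{equation*}
for some $\zeta<1/2$.

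Third, I would convert this pointwise polynomial rate into the weighted maximum bound stated in the lemma. Writing
\begin{equation*}
\max_{1\le k\le{\mathcal{n}}}\frac{\|\bar{{\mathcal{r}}}_{m,k}(\bth_{0})-\bGam_{m}(k)\|}{{\mathcal{n}}^{1/2}(k/{\mathcal{n}})^{\gamma}}=O_{P}(1)\max_{1\le k\le{\mathcal{n}}}k^{\zeta-\gamma}{\mathcal{n}}^{\gamma-1/2},
\end{equation*}
the maximum is attained at $k=1$ if $\zeta<\gamma$, yielding ${\mathcal{n}}^{\gamma-1/2}=o(1)$ because $\gamma<1/2$, and at $k={\mathcal{n}}$ if $\zeta\ge\gamma$, yielding ${\mathcal{n}}^{\zeta-1/2}=o(1)$ because $\zeta<1/2$; in either case we get the desired $o_{P}(1)$. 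The substantive obstacle is the first step, namely reducing the GARCH scores to a decomposable Bernoulli shift with a useful moment: once that representation is in place, the quoted strong approximation of \citet{aue2014dependent} and the elementary interpolation above do the rest. The remaining routine checks are the martingale-difference/covariance identification and the harmless observation that ${\mathcal{n}}^{1/2}(k/{\mathcal{n}})^{\gamma}$ dominates $k^{\zeta}$ uniformly in $k\le{\mathcal{n}}$ at both endpoints.
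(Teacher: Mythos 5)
Your proposal is correct and follows essentially the same route as the paper: reduce the stationary scores to a decomposable Bernoulli shift (via \citet{horvath2024detecting}), apply Lemma S2.1 of \citet{aue2014dependent} to obtain the Wiener approximation at rate $k^{\zeta}$ with $\zeta<1/2$, and then observe that $k^{\zeta}$ is dominated by ${\mathcal{n}}^{1/2}(k/{\mathcal{n}})^{\gamma}$ uniformly over $1\leq k\leq{\mathcal{n}}$. The extra details you supply (the martingale-difference form of the score, the covariance identification, and the explicit endpoint analysis of the weighted maximum) are consistent with, and merely flesh out, the paper's terser argument.
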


\begin{lemma}
\label{lefi} If Assumptions \ref{aspos}--\ref{asrr} hold, then we have 
\begin{equation}
\max_{r\leq k\leq {\mathcal{n}}}\frac{1}{r^{1/2}(k/r)^{\gamma }}\sup_{%
\mbox{\boldmath${\theta}$}\in \mbox{\boldmath${\theta}$}}\left\Vert {%
\mathcal{r}}_{m,k}(\mbox{\boldmath${\theta}$})-\bar{{\mathcal{r}}}_{m,k}(%
\mbox{\boldmath${\theta}$})\right\Vert =o_{P}(1),  \label{lefi1}
\end{equation}%
\begin{equation}
\max_{r\leq k\leq {\mathcal{n}}}\frac{1}{r^{1/2}(k/r)^{\gamma }}\left\Vert 
\bar{{\mathcal{r}}}_{m,k}(\hat{\mbox{\boldmath${\theta}$}}_{m})-\bar{{%
\mathcal{r}}}_{m,k}(\mbox{\boldmath${\theta}$}_{0})\right\Vert =o_{P}(1),
\label{lefi2}
\end{equation}

and 
\begin{equation}
\max_{r\leq k\leq {\mathcal{n}}}\frac{1}{r^{1/2}(k/r)^{\gamma }}\left\Vert 
\bar{{\mathcal{r}}}_{m,k}(\mbox{\boldmath${\theta}$}_{0})-%
\mbox{\boldmath${\Gamma}$}_{m}(t)\right\Vert =o_{P}(1),  \label{lefi3}
\end{equation}

for all $\gamma >1/2$, where $\{\mbox{\boldmath${\Gamma}$}_{m}(t),t\geq 0\}$
is defined in Lemma \ref{lefo}.

\begin{proof}
It follows from \eqref{letwo1} that%
\begin{align*}
\max_{r\leq k\leq {\mathcal{n}}}& \frac{1}{r^{1/2}(k/r)^{\gamma }}\sup_{%
\mbox{\boldmath${\theta}$}\in \mbox{\boldmath${\theta}$}}\left\Vert {%
\mathcal{r}}_{m,k}(\mbox{\boldmath${\theta}$})-\bar{{\mathcal{r}}}_{m,k}(%
\mbox{\boldmath${\theta}$})\right\Vert \\
& =\max_{r\leq k\leq {\mathcal{n}}}\frac{k^{\zeta }}{r^{1/2}(k/r)^{\gamma }}%
\max_{1\leq k\leq {\mathcal{n}}}\sup_{\mbox{\boldmath${\theta}$}\in %
\mbox{\boldmath${\theta}$}}\frac{1}{k^{\zeta }}\left\Vert {\mathcal{r}}%
_{m,k}(\mbox{\boldmath${\theta}$})-\bar{{\mathcal{r}}}_{m,k}(%
\mbox{\boldmath${\theta}$})\right\Vert \\
& =O_{P}\left( r^{\zeta -1/2}\right) =o_{P}(1),
\end{align*}%
since $r\rightarrow \infty $; hence \eqref{lefi1} is proven. Turning to %
\eqref{lefi2}, the expansion in \eqref{NN} implies 
\begin{align*}
\max_{r\leq k\leq {\mathcal{n}}}\frac{1}{r^{1/2}(k/r)^{\gamma }}\Vert 
\mathbf{R}_{m,k}\Vert & =\left\{ 
\begin{array}{ll}
O_{P}\left( ({\mathcal{n}}/m)(r/{\mathcal{n}})^{\gamma }r^{-1/2}\right)
,\;\;\;\mbox{if}\;\;\;1/2<\gamma <1,\vspace{0.3cm} &  \\ 
O_{P}\left( r^{1/2}/m\right) ,\;\;\;\mbox{if}\;\;\;\gamma \geq 1, & 
\end{array}%
\right. \\
& =o_{P}(1).
\end{align*}%
Now Lemma \ref{leone}, \eqref{NN} and \eqref{NNN} yield 
\begin{align*}
\max_{r\leq k\leq {\mathcal{n}}}\frac{1}{r^{1/2}(k/r)^{\gamma }}\Vert \nabla 
\bar{{\mathcal{r}}}_{m,k}(\mbox{\boldmath${\theta}$}_{0})(\hat{%
\mbox{\boldmath${\theta}$}}_{m}-\mbox{\boldmath${\theta}$}_{0})\Vert &
=\left\{ 
\begin{array}{ll}
O_{P}\left( ({\mathcal{n}}/m)(r/{\mathcal{n}})^{\gamma -1/2}\right) ,\;\;\;%
\mbox{if}\;\;\;1/2<\gamma <1\vspace{0.3cm} &  \\ 
O_{P}\left( (r/{\mathcal{n}})^{\gamma -1/2}({\mathcal{n}}/m)^{1/2}\right)
,\;\;\;\mbox{if}\;\;\;\gamma \geq 1 & 
\end{array}%
\right. \\
& =o_{P}(1),
\end{align*}%
completing the proof of \eqref{lefi2}. Finally, the approximation in %
\eqref{lefi3} is an immediate consequence of Lemma \ref{lefo}, since 
\begin{align*}
\max_{r\leq k\leq {\mathcal{n}}}\frac{1}{r^{1/2}(k/r)^{\gamma }}\left\Vert 
\bar{{\mathcal{r}}}_{m,k}(\mbox{\boldmath${\theta}$}_{0})-%
\mbox{\boldmath${\Gamma}$}_{m}(t)\right\Vert & \leq \max_{r\leq k\leq {%
\mathcal{n}}}\frac{k^{\zeta }}{r^{1/2}(k/r)^{\gamma }}\max_{1\leq k\leq {%
\mathcal{n}}}\frac{1}{k^{\zeta }}\left\Vert \bar{{\mathcal{r}}}_{m,k}(%
\mbox{\boldmath${\theta}$}_{0})-\mbox{\boldmath${\Gamma}$}_{m}(t)\right\Vert
\\
& =o_{P}(1).
\end{align*}
\end{proof}
\end{lemma}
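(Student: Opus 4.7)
The plan is to reduce all three parts to estimates already at hand (Lemmas B.2, B.1, and B.4), and then to confront the key novelty of this lemma, namely that the Rényi-type weight $1/(r^{1/2}(k/r)^\gamma)$ with $\gamma>1/2$ is much heavier near the lower endpoint $k=r$ than the weight $1/(\mathcal{n}^{1/2}(k/\mathcal{n})^\gamma)$ used in Lemma B.3. Since $r\to\infty$ but $r/\mathcal{n}\to 0$ and $\mathcal{n}/m\to 0$, all the ``gain'' must come from trading powers of $r$, $\mathcal{n}$, and $m$ against each other.

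Parts (i) and (iii) are essentially one argument. From Lemma B.2 I have $\sup_\theta\|\mathcal{r}_{m,k}(\theta)-\bar{\mathcal{r}}_{m,k}(\theta)\|=O_P(k^\zeta)$ uniformly in $k\le\mathcal{n}$ for some $\zeta<1/2$, and from Lemma B.4 the same rate $O_P(k^\zeta)$ governs $\|\bar{\mathcal{r}}_{m,k}(\theta_0)-\mbox{\boldmath${\Gamma}$}_m(k)\|$. Inserting the Rényi weight gives
\[
\max_{r\le k\le\mathcal{n}}\frac{k^\zeta}{r^{1/2}(k/r)^\gamma}=\max_{r\le k\le\mathcal{n}} k^{\zeta-\gamma}r^{\gamma-1/2}.
\]
Because $\gamma>1/2>\zeta$, the exponent of $k$ is negative, so the maximum occurs at $k=r$ and equals $r^{\zeta-1/2}=o(1)$. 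This settles both (i) and (iii).

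Part (ii) is the main obstacle, because the natural Taylor expansion
\[
\bar{\mathcal{r}}_{m,k}(\hat{\mbox{\boldmath${\theta}$}}_m)-\bar{\mathcal{r}}_{m,k}(\mbox{\boldmath${\theta}$}_0)=\bigl(\nabla\bar{\mathcal{r}}_{m,k}(\mbox{\boldmath${\theta}$}_0)\bigr)(\hat{\mbox{\boldmath${\theta}$}}_m-\mbox{\boldmath${\theta}$}_0)+\mathbf{R}_{m,k}
\]
produces a linear term of size $O_P(k/m^{1/2})$ (ergodic theorem on $\nabla\bar{\mathcal{r}}_{m,k}$ together with the $O_P(m^{-1/2})$ rate for $\hat{\mbox{\boldmath${\theta}$}}_m-\mbox{\boldmath${\theta}$}_0$ from Lemma B.1 and equation (B.NNN) in the proof of Lemma B.3) and a remainder of size $O_P(k/m)$ (from a uniform bound on third derivatives near $\mbox{\boldmath${\theta}$}_0$). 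After dividing by $r^{1/2}(k/r)^\gamma$, the linear-term bound becomes $k^{1-\gamma}r^{\gamma-1/2}/m^{1/2}$, whose maximum over $k\in[r,\mathcal{n}]$ flips endpoints depending on whether $\gamma>1$ or $1/2<\gamma<1$. For $\gamma\ge 1$ the max is at $k=r$ and equals $(r/m)^{1/2}=o(1)$; for $1/2<\gamma<1$ the max is at $k=\mathcal{n}$ and equals $(\mathcal{n}/m)^{1/2}(r/\mathcal{n})^{\gamma-1/2}=o(1)$, using $\mathcal{n}/m\to 0$ and $r/\mathcal{n}\to 0$. The same endpoint split yields $o(1)$ for the remainder term.

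The one piece of ``bookkeeping care'' is precisely this $\gamma\gtrless 1$ dichotomy in Part (ii): the weighted maximum is not monotone in $k$, so the bound has to be split into two cases and each case has to be balanced against the two asymptotic regimes $r/\mathcal{n}\to 0$ and $\mathcal{n}/m\to 0$. Once this splitting is done, no genuinely new probabilistic input beyond Lemmas B.1, B.2, and B.4 is required, and the three estimates follow by straightforward algebra of exponents.
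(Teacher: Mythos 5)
Your proposal is correct and follows essentially the same route as the paper: parts (i) and (iii) by combining the $O_P(k^\zeta)$ bounds of Lemmas \ref{letwo} and \ref{lefo} with the observation that $k^{\zeta-\gamma}r^{\gamma-1/2}$ is maximised at $k=r$, and part (ii) via the Taylor expansion \eqref{NN} with the linear term controlled by the ergodic theorem and Lemma \ref{leone} and the remainder by the third-derivative bound, splitting the endpoint of the weighted maximum according to $\gamma\gtrless 1$. The exponent bookkeeping matches the paper's (your $(\mathcal{n}/m)^{1/2}(r/\mathcal{n})^{\gamma-1/2}$ for the linear term when $1/2<\gamma<1$ is the correct evaluation, and in any case agrees with the paper's displayed rates in being $o_P(1)$ under Assumptions \ref{assh2}--\ref{asrr}).
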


\begin{lemma}
\label{nost1} If Assumptions \ref{aspos} and \ref{asin} hold, then we have
for all $1<\overline{\omega }$ there is $0<\rho <1$ such that 
\begin{equation*}
\sup_{1\leq i<\infty }\sup_{1/\overline{\omega }\leq \omega \leq \overline{%
\omega }}\frac{1}{\rho ^{i}}|\bar{p}_{m+i,1}(\bar{\mbox{\boldmath${\theta}$}}%
_{0},\omega )-{\mathcal{z}}_{m+i,1}|=O_{P}(1)
\end{equation*}%
and 
\begin{equation*}
\sup_{1\leq i<\infty }\sup_{1/\overline{\omega }\leq \omega \leq \overline{%
\omega }}\frac{1}{\rho ^{i}}|\bar{p}_{m+i,2}(\bar{\mbox{\boldmath${\theta}$}}%
_{0},\omega )-{\mathcal{z}}_{m+i,2}|=O_{P}(1).
\end{equation*}

\begin{proof}
Using the proofs from \citet{jensen2004asymptotic}, %
\citet{horvath2024detecting} established the result as their Lemma 5.7.
\end{proof}
\end{lemma}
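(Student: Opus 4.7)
The statement is a uniform geometric-approximation bound whose role is to replace the score-type quantity $\bar p_{m+i,j}(\bar{\boldsymbol{\theta}}_0,\omega)$ (which depends on the nuisance $\omega$ and on the pre-sample initial values $y_m,\bar\sigma_m$) by the $\omega$-free, infinite-past stationary functional $\mathcal{z}_{m+i,j}$ with an error that is both exponentially small in $i$ and tight uniformly over the compact interval $[1/\overline\omega,\overline\omega]$. The plan is to follow \citet{jensen2004asymptotic}, adapted so that all error terms are shown to be multiples of a single geometrically decaying random variable whose tightness comes from Assumption \ref{asin}(iii).

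Step 1: explicit representation. Iterating $\bar\sigma_{m+i}^2=\omega+\alpha_0 y_{m+i-1}^2+\beta_0\bar\sigma_{m+i-1}^2$ yields
\begin{equation*}
\bar\sigma_{m+i}^2=\omega\sum_{j=0}^{i-1}\beta_0^{j}+\alpha_0\sum_{j=1}^{i}\beta_0^{j-1}y_{m+i-j}^2+\beta_0^{i}\bar\sigma_{m}^2,
\end{equation*}
and differentiation in $\alpha,\beta$ produces analogous finite sums. After substituting into the score expression, $\bar p_{m+i,j}(\bar{\boldsymbol{\theta}}_0,\omega)$ becomes a finite-lag functional of $(\epsilon_{k})_{k\leq m+i}$ in which $\omega$ enters only through the ratios $\omega/\bar\sigma_{m+i}^{2}$ and $\omega\beta_0^{i}/\bar\sigma_{m+i}^{2}$. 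The limit $\mathcal{z}_{m+i,j}$ is the corresponding infinite-lag version in which the sums run from $-\infty$ and the $\omega$- and initial-value pieces are absent, so that $\mathcal{z}_{m+i,j}$ depends neither on $\omega$ nor on $y_m,\bar\sigma_m$.

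Step 2: geometric contractivity. The building block is the ratio product
\begin{equation*}
\Pi_{i,j}:=\prod_{k=1}^{j}\frac{\beta_0}{\alpha_0\epsilon_{m+i-k}^2+\beta_0}.
\end{equation*}
Since $\alpha_0\epsilon_0^2>0$ almost surely and Assumption \ref{asin} supplies $E|\log(\alpha_0\epsilon_0^2+\beta_0)|<\infty$, Jensen's inequality gives $E\log\bigl\{\beta_0/(\alpha_0\epsilon_0^2+\beta_0)\bigr\}<0$; the strong law then forces $\Pi_{i,j}$ to decay a.s.\ at a geometric rate $\rho_1\in(0,1)$. This guarantees absolute convergence of the infinite series defining $\mathcal{z}_{m+i,j}$ and allows us to fix $\rho=\max(\beta_0,\rho_1)+\delta\in(0,1)$ for some small $\delta>0$.

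Step 3: error decomposition. I would split $\bar p_{m+i,j}-\mathcal{z}_{m+i,j}=T^{(a)}_i+T^{(b)}_i(\omega)+T^{(c)}_i(\omega)$ with $T^{(a)}_i$ the truncation tail of the $\mathcal{z}$-series from lag $i+1$ onward, $T^{(b)}_i(\omega)$ the contribution of the $\omega$-term in $\bar\sigma^2$, and $T^{(c)}_i(\omega)$ the contribution of the initial values $\beta_0^i \bar\sigma_m^2,\beta_0^i y_m^2$. By Step 2, $|T^{(a)}_i|\leq K_1\rho^i$ with $K_1$ a (random) tail constant. For $T^{(b)}_i(\omega)$, the uniform bound $\omega\in[1/\overline\omega,\overline\omega]$ and $\bar\sigma_{m+i}^{2}\geq \omega\geq 1/\overline\omega$ give $|T^{(b)}_i(\omega)|\leq \overline\omega^2\sum_{j=0}^{i-1}\beta_0^j\Pi_{i,j}\cdot(\text{bounded})$, which is again $O(\rho^i)$ uniformly in $\omega$. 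Finally, $T^{(c)}_i(\omega)$ is $\beta_0^i\bar\sigma_m^2/\bar\sigma_{m+i}^2$ times a bounded expression, and the identity $\bar\sigma_{m+i}^2/\bar\sigma_m^2\approx\prod_{k=1}^{i}(\alpha_0\epsilon_{m+k-1}^2+\beta_0)$ reveals that $\beta_0^i\bar\sigma_m^2/\bar\sigma_{m+i}^2\approx\Pi_{m+i,i}$, so $|T^{(c)}_i(\omega)|\leq K_2\rho^i$ uniformly in $\omega$.

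Step 4: upgrading to $O_P(1)$ uniformly in $i$. Since each bound is $\rho^i$ times a random variable whose distribution is shift-invariant in $i$ (depending on $(\epsilon_{m+i-k})_{k\geq 1}$) and has finite moment of some order $p>0$ by the $\kappa>4$ moments in Assumption \ref{asin}(iii), a Markov-plus-Borel--Cantelli argument applied to $\rho^{-i}(1+\delta')^{-i}|T^{(\cdot)}_i|$ for a sufficiently small $\delta'>0$ converts pointwise geometric decay into $\sup_{i\geq 1}\rho^{-i}|\bar p_{m+i,j}-\mathcal{z}_{m+i,j}|=O_P(1)$, with the sup over $\omega\in[1/\overline\omega,\overline\omega]$ absorbed since the bounds in Step 3 depend on $\omega$ only through the prefactors $\overline\omega$ and $\overline\omega^{-1}$.

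The main obstacle is the term $T^{(c)}_i(\omega)$: in the nonstationary regime $\bar\sigma_m^2$ itself diverges exponentially in $m$, so one cannot bound $\beta_0^i\bar\sigma_m^2$ by a deterministic constant. The cancellation between this divergent numerator and the similarly divergent denominator $\bar\sigma_{m+i}^2$ is what produces the $\Pi$-product whose logarithm has \emph{negative} mean, despite $E\log(\alpha_0\epsilon_0^2+\beta_0)>0$ driving the nonstationarity. Making this cancellation rigorous while maintaining uniformity in $\omega$ is the delicate technical heart of the lemma, and is precisely where one must invoke the careful bounds of \citet{jensen2004asymptotic}.
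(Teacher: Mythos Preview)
The paper does not prove this lemma independently; it simply cites Lemma~5.7 of \citet{horvath2024detecting}, which in turn adapts the arguments of \citet{jensen2004asymptotic}. Your sketch is a reconstruction of exactly that Jensen--Rahbek line of argument (explicit recursion for $\bar\sigma^2$, geometric decay of the ratio products $\Pi_{i,j}$, error decomposition into tail/intercept/initial-value pieces, Borel--Cantelli for the uniform-in-$i$ upgrade), so the approaches coincide.

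One small correction to Step~3: your displayed bound for $T_i^{(b)}(\omega)$ does not give $\rho^i$ decay as written --- the sum $\sum_{j=0}^{i-1}\beta_0^{j}\Pi_{i,j}$ is merely $O_P(1)$ in $i$, not $o_P(\rho^i)$. The actual reason the $\omega$-contribution vanishes geometrically in the nonstationary regime is that it appears divided by $\bar\sigma_{m+i}^{2}$, which itself grows exponentially at rate $\exp\{i\,E\log(\alpha_0\epsilon_0^2+\beta_0)\}$; this is the same numerator/denominator cancellation mechanism you correctly invoke for $T_i^{(c)}$, and is precisely the ``careful bounds of \citet{jensen2004asymptotic}'' you defer to at the end --- which is also all the paper does.
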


\begin{lemma}
\label{nost2} We assume that Assumptions \ref{aspos}--\ref{assh2} hold.%
\newline
(i) If $0\leq \gamma <1/2$, then we have 
\begin{equation*}
\max_{1\leq k\leq {\mathcal{n}}}\frac{1}{{\mathcal{n}}^{1/2}(k/{\mathcal{n}}%
)^{\gamma }}\left\Vert {\mathcal{r}}_{m,k}(\hat{\mbox{\boldmath${\theta}$}}%
_{m})-\mathbf{q}_{m+k}\right\Vert =o_{P}(1).
\end{equation*}%
(ii) If in addition, Assumption \ref{asrr} also holds, and $\gamma >1/2$,
then we have 
\begin{equation*}
\max_{r\leq k\leq {\mathcal{n}}}\frac{1}{r^{1/2}(k/r)^{\gamma }}\left\Vert {%
\mathcal{r}}_{m,k}(\hat{\mbox{\boldmath${\theta}$}}_{m})-\mathbf{q}%
_{m+k}\right\Vert =o_{P}(1).
\end{equation*}

\begin{proof}
Let $\bar{\mbox{\boldmath${\theta}$}}_{0}=(\alpha _{0},\beta _{0})$. %
\citet{jensen2004asymptotic} showed there is a neighbourhood of $\bar{%
\mbox{\boldmath${\theta}$}}_{0}$, say $\bar{\mbox{\boldmath${\theta}$}}_{0}$%
, such that 
\begin{equation}
\sup_{\underline{\omega }\leq \omega \leq \overline{\omega }}\sup_{(\alpha
,\beta )\in \bar{\mbox{\boldmath${\theta}$}}_{0}}\frac{1}{k}\left\Vert
\nabla {\mathcal{r}}_{m,k}(\alpha ,\beta ,\omega )\right\Vert =O_{P}(1).
\label{jr1}
\end{equation}%
Let $\bar{\mbox{\boldmath${\theta}$}}_{m}$ denote the vector of the first
two coordinates of $\hat{\mbox{\boldmath${\theta}$}}_{m}$. %
\citet{jensen2004asymptotic} proved that 
\begin{equation}
\sup_{\underline{\omega }\leq \omega \leq \overline{\omega }}\Vert \bar{%
\mbox{\boldmath${\theta}$}}_{m}-\bar{\mbox{\boldmath${\theta}$}}_{0}\Vert
=O_{P}\left( m^{-1/2}\right)  \label{jr2}
\end{equation}%
(note that the estimator $\bar{\mbox{\boldmath${\theta}$}}_{m}$ might depend
on $\omega )$. Using the mean value theorem coordinate--wise, we get from
Lemma \ref{nost1} that 
\begin{equation*}
\Vert {\mathcal{r}}_{m,k}(\hat{\mbox{\boldmath${\theta}$}}_{m})-\mathbf{q}%
_{m+k}\Vert \leq \sup_{\underline{\omega }\leq \omega \leq \overline{\omega }%
}\sup_{(\alpha ,\beta )\in \bar{\mbox{\boldmath${\theta}$}}_{0}}\left\Vert
\nabla {\mathcal{r}}_{m,k}(\alpha ,\beta ,\omega )\right\Vert \Vert \bar{%
\mbox{\boldmath${\theta}$}}_{m}-\bar{\mbox{\boldmath${\theta}$}}_{0}\Vert
^{1/2}
\end{equation*}%
with probability tending to 1. Thus we get from \eqref{jr1} and \eqref{jr2}
that 
\begin{align*}
\max_{1\leq k\leq {\mathcal{n}}}\frac{1}{{\mathcal{n}}^{1/2}(k/{\mathcal{n}}%
)^{\gamma }}\left\Vert {\mathcal{r}}_{m,k}(\hat{\mbox{\boldmath${\theta}$}}%
_{m})-\mathbf{q}_{m+k}\right\Vert & =O_{P}\left( m^{-1/2}{\mathcal{n}}%
^{-1/2+\gamma }\max_{1\leq k\leq {\mathcal{n}}}k^{1-\gamma }\right) \\
& =O_{P}\left( \left( \frac{{\mathcal{n}}}{m}\right) ^{1/2}\right) =o_{P}(1).
\end{align*}%
Similarly, 
\begin{align*}
\max_{r\leq k\leq {\mathcal{n}}}& \frac{1}{r^{1/2}(k/r)^{\gamma }}\left\Vert 
{\mathcal{r}}_{m,k}(\hat{\mbox{\boldmath${\theta}$}}_{m})-\mathbf{q}%
_{m+k}\right\Vert \\
& =O_{P}\left( m^{-1/2}r^{-1/2+\gamma }\max_{r\leq k\leq {\mathcal{n}}%
}k^{1-\gamma }\right) \\
& =\left\{ 
\begin{array}{ll}
O_{P}\left( ({\mathcal{n}}/m)^{1/2}(r/{\mathcal{n}})^{\gamma _{1}/2}\right)
,\quad \;\mbox{if}\;\;\;1/2<\gamma \leq 1\vspace{0.3cm} &  \\ 
O_{P}\left( ({\mathcal{n}}/m)^{1/2}\right) ,\quad \;\mbox{if}\;\;\;\gamma >1
& 
\end{array}%
\right. \\
& =o_{P}(1).
\end{align*}
\end{proof}
\end{lemma}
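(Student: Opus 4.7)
The plan is to control the two sources of discrepancy separately via the decomposition
\[
{\mathcal{r}}_{m,k}(\hat{\mbox{\boldmath${\theta}$}}_{m})-\mathbf{q}_{m+k}
= \bigl[{\mathcal{r}}_{m,k}(\hat{\mbox{\boldmath${\theta}$}}_{m})-{\mathcal{r}}_{m,k}(\mbox{\boldmath${\theta}$}_{0})\bigr]
+ \bigl[{\mathcal{r}}_{m,k}(\mbox{\boldmath${\theta}$}_{0})-\mathbf{q}_{m+k}\bigr].
\]
The second bracket is an approximation error at a \emph{fixed} parameter, and the first bracket is a plug-in error incurred by replacing $\mbox{\boldmath${\theta}$}_{0}$ with the QMLE $\hat{\mbox{\boldmath${\theta}$}}_{m}$.

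For the second bracket, I would use Lemma \ref{nost1}: each coordinate of the score at the true parameter differs from the corresponding ${\mathcal{z}}_{i,\cdot}$ by $O_{P}(\rho^{i})$ with $0<\rho<1$, uniformly in $\omega$. Summing the geometric series from $i=m+1$ to $m+k$ yields $\Vert {\mathcal{r}}_{m,k}(\mbox{\boldmath${\theta}$}_{0})-\mathbf{q}_{m+k}\Vert = O_{P}(1)$ uniformly in $k$. Since $\min_{1\leq k\leq {\mathcal{n}}} {\mathcal{n}}^{1/2}(k/{\mathcal{n}})^{\gamma}={\mathcal{n}}^{1/2-\gamma}\to\infty$ for $\gamma<1/2$, and analogously $\min_{r\leq k\leq {\mathcal{n}}} r^{1/2}(k/r)^{\gamma}=r^{1/2}\to\infty$ for $\gamma>1/2$ by Assumption \ref{asrr}, this $O_{P}(1)$ remainder contributes $o_{P}(1)$ to both (i) and (ii).

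For the first bracket I would appeal to \citet{jensen2004asymptotic}, who establish that on a neighbourhood of $\bar{\mbox{\boldmath${\theta}$}}_{0}=(\alpha_{0},\beta_{0})$ one has $k^{-1}\sup\Vert\nabla {\mathcal{r}}_{m,k}(\alpha,\beta,\omega)\Vert = O_{P}(1)$ uniformly over $\omega\in[\underline{\omega},\overline{\omega}]$, together with QMLE consistency at rate $m^{-1/2}$ irrespective of the (non)stationarity of the training sample. A coordinate-wise mean value theorem then gives, on an event of probability tending to one,
\[
\Vert {\mathcal{r}}_{m,k}(\hat{\mbox{\boldmath${\theta}$}}_{m})-{\mathcal{r}}_{m,k}(\mbox{\boldmath${\theta}$}_{0})\Vert = O_{P}\!\left(\frac{k}{m^{1/2}}\right).
\]
Dividing and maximizing: for (i), since $\gamma<1$,
\[
\max_{1\leq k\leq {\mathcal{n}}}\frac{k/m^{1/2}}{{\mathcal{n}}^{1/2}(k/{\mathcal{n}})^{\gamma}}
= m^{-1/2}{\mathcal{n}}^{\gamma-1/2}\max_{k}k^{1-\gamma} = ({\mathcal{n}}/m)^{1/2}=o(1)
\]
by Assumption \ref{assh2}. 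For (ii), I would split on $\gamma$: when $1/2<\gamma\leq 1$ the maximum is at $k={\mathcal{n}}$, giving $({\mathcal{n}}/m)^{1/2}(r/{\mathcal{n}})^{\gamma-1/2}\to 0$; when $\gamma>1$ the maximum is at $k=r$, giving $(r/m)^{1/2}\to 0$ since $r={\mathcal{o}}({\mathcal{n}})$ and ${\mathcal{n}}/m\to 0$.

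The only point that requires care is the first step, where the mean value theorem must be applied on a \emph{random} neighbourhood containing $\hat{\mbox{\boldmath${\theta}$}}_{m}$. This is harmless because the QMLE is $\sqrt{m}$-consistent (so the event $\{\hat{\mbox{\boldmath${\theta}$}}_{m}\in\bar{\mbox{\boldmath${\theta}$}}_{0}^{*}\}$ has probability tending to one), and the uniform gradient bound of Jensen--Rahbek is exactly what is needed to turn the parameter discrepancy into a linear-in-$k$ bound. The rest is a matter of verifying that the two weighted rates $(n/m)^{1/2}$ and $\max((n/m)^{1/2}(r/n)^{\gamma-1/2},(r/m)^{1/2})$ vanish under Assumptions \ref{assh2}--\ref{asrr}, which is immediate.
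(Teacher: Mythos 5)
Your proof is correct and follows essentially the same route as the paper's: the Jensen--Rahbek uniform gradient bound and $\sqrt{m}$-consistency combined with a coordinate-wise mean value theorem for the plug-in error, Lemma \ref{nost1} for the geometric approximation error, and the same weighted rate computations $({\mathcal{n}}/m)^{1/2}$ and $({\mathcal{n}}/m)^{1/2}(r/{\mathcal{n}})^{\gamma-1/2}$ (resp.\ $(r/m)^{1/2}$). Your explicit two-term decomposition merely makes transparent what the paper folds into a single inequality, and in fact cleans up the paper's stray exponent $1/2$ on $\Vert\bar{\mbox{\boldmath${\theta}$}}_{m}-\bar{\mbox{\boldmath${\theta}$}}_{0}\Vert$.
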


\begin{lemma}
\label{nost3} If Assumptions \ref{aspos}--\ref{assh2} hold, then for any $m$
we can define a Gaussian process $\{\mbox{\boldmath${\Gamma}$}_{m}(t),t\geq
0\}$, $E\mbox{\boldmath${\Gamma}$}_{m}(t)=\mathbf{0},E\mbox{\boldmath${%
\Gamma}$}_{m}(t)\mbox{\boldmath${\Gamma}$}_{m}(s)=\min (t,s)\mathbf{D}$ with 
\begin{equation}
\mathbf{D}=E[1-\epsilon _{m}^{2}]^{2}E\left[ \left( {\mathcal{z}}_{m,1},{%
\mathcal{z}}_{m,2}\right) ^{\top }\left( {\mathcal{z}}_{m,1},{\mathcal{z}}%
_{m,2}\right) \right] .  \label{D2}
\end{equation}%
such that \newline
(i) if $0\leq \gamma <1/2$, then 
\begin{equation}
\max_{1\leq k\leq {\mathcal{n}}}\frac{1}{{\mathcal{n}}^{1/2}(k/{\mathcal{n}}%
)^{\gamma }}\left\Vert \mathbf{q}_{m+k}-\mbox{\boldmath${\Gamma}$}%
_{m}(k)\right\Vert =o_{P}(1),  \label{nost31}
\end{equation}%
(ii) if in addition, Assumption \ref{asrr} also holds and $\gamma >1/2$,
then 
\begin{equation}
\max_{r\leq k\leq {\mathcal{n}}}\frac{1}{r^{1/2}(k/r)^{\gamma }}\left\Vert 
\mathbf{q}_{m+k}-\mbox{\boldmath${\Gamma}$}_{m}(k)\right\Vert =o_{P}(1).
\label{nost32}
\end{equation}

\begin{proof}
\citet{horvath2024detecting} showed that $\{\mathbf{q}_{m+k},k\geq 1\}$ is a
decomposable Bernoulli shift and therefore Lemma S2.1 of %
\citet{aue2014dependent} implies that for any $m$ there are Gaussian
processes $\{\mbox{\boldmath${\Gamma}$}_{m}(k),k\geq 1\}$ such that 
\begin{equation}
\sup_{1\leq k<\infty }\frac{1}{k^{\zeta }}\left\Vert \mathbf{q}_{m+k}-%
\mbox{\boldmath${\Gamma}$}_{m}(k)\right\Vert =O_{P}(1)  \label{hus}
\end{equation}%
with some $\zeta <1/2$, and $E\mbox{\boldmath${\Gamma}$}(t)=\mathbf{0},\;E%
\mbox{\boldmath${\Gamma}$}(t)\mbox{\boldmath${\Gamma}$}^{\top }(s)=\min (t,s)%
\mathbf{D}$ and $\mathbf{D}$ is defined in \eqref{D2}. The approximation in (%
\ref{hus}) implies both \eqref{nost31} and \eqref{nost32} in the same way as %
\eqref{auho} implies Lemma \ref{lefo} and \eqref{lefi3}.
\end{proof}
\end{lemma}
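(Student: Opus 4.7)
The plan is to recognise $\{\mathbf{q}_{m+k},\,k\geq 1\}$ as a decomposable Bernoulli shift in $\mathbb{R}^{2}$, then invoke a strong invariance principle to couple it to a Brownian motion with an error of order $k^{\zeta}$ for some $\zeta<1/2$, and finally to read off (i) and (ii) by an elementary weighted-$\max$ calculation, exactly parallel to how Lemma \ref{lefo} was derived in the stationary case.

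First I would verify the Bernoulli-shift structure. From Lemma \ref{nost1} (and the construction of $\mathbf{q}_{m+k}$ used in the proof of Lemma \ref{nost2}), the $i$-th summand of $\mathbf{q}_{m+k}$ is, up to a term decaying geometrically in $i$, of the form $(1-\epsilon_{m+i}^{2})({\mathcal{z}}_{m+i,1},{\mathcal{z}}_{m+i,2})^{\top}$, where each ${\mathcal{z}}_{m+i,j}$ is an absolutely convergent series in the innovations $\{\epsilon_{m+i-\ell},\,\ell\geq 1\}$ strictly prior to time $m+i$. In particular $({\mathcal{z}}_{m+i,1},{\mathcal{z}}_{m+i,2})$ is independent of $\epsilon_{m+i}$, so each summand has mean zero because $E(1-\epsilon_{m+i}^{2})=0$ by Assumption \ref{asin}(iii); moreover the summands form a stationary sequence admitting an $L$-dependent approximation whose $L^{\kappa/2}$ error decays geometrically in $L$, since under $E\log(\alpha_{0}\epsilon_{0}^{2}+\beta_{0})\neq 0$ each generic term of ${\mathcal{z}}_{m+i,j}$ is dominated by a product of random factors of the form $\beta_{0}/(\alpha_{0}\epsilon^{2}+\beta_{0})\in(0,1)$ whose iterates have summable $L^{\kappa/2}$ tails by the multiplicative ergodic theorem. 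The resulting long-run covariance is precisely the $\mathbf{D}$ in (\ref{D2}), the factorisation $E[1-\epsilon_{m}^{2}]^{2}\cdot E[({\mathcal{z}}_{m,1},{\mathcal{z}}_{m,2})^{\top}({\mathcal{z}}_{m,1},{\mathcal{z}}_{m,2})]$ reflecting the independence just mentioned.

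Second, I would invoke the strong approximation for decomposable Bernoulli shifts (Lemma S2.1 of Aue et al.\ 2014, used already in the proof of Lemma \ref{lefo}) to obtain, for each fixed $m$, a two-dimensional Wiener process $\{\mbox{\boldmath${\Gamma}$}_{m}(t),t\geq 0\}$ with $E\mbox{\boldmath${\Gamma}$}_{m}(t)\mbox{\boldmath${\Gamma}$}_{m}^{\top}(s)=\min(t,s)\mathbf{D}$ such that
\begin{equation*}
\sup_{1\leq k<\infty}\frac{1}{k^{\zeta}}\|\mathbf{q}_{m+k}-\mbox{\boldmath${\Gamma}$}_{m}(k)\|=O_{P}(1)
\end{equation*}
for some $\zeta<1/2$; the geometric tail bound on the summands together with $\kappa>4$ in Assumption \ref{asin}(iii) supply the hypotheses. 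Both conclusions follow by deterministic rate comparison. For (i), picking $\zeta\in(\gamma,1/2)$ gives
\begin{equation*}
\max_{1\leq k\leq\mathcal{n}}\frac{\|\mathbf{q}_{m+k}-\mbox{\boldmath${\Gamma}$}_{m}(k)\|}{\mathcal{n}^{1/2}(k/\mathcal{n})^{\gamma}}\leq O_{P}(1)\,\mathcal{n}^{-1/2+\gamma}\max_{1\leq k\leq\mathcal{n}}k^{\zeta-\gamma}=O_{P}\bigl(\mathcal{n}^{\zeta-1/2}\bigr)=o_{P}(1),
\end{equation*}
while for (ii), since $\gamma>1/2>\zeta$,
\begin{equation*}
\max_{r\leq k\leq\mathcal{n}}\frac{\|\mathbf{q}_{m+k}-\mbox{\boldmath${\Gamma}$}_{m}(k)\|}{r^{1/2}(k/r)^{\gamma}}\leq O_{P}(1)\,r^{\zeta-1/2}\max_{r\leq k\leq\mathcal{n}}(k/r)^{\zeta-\gamma}=O_{P}\bigl(r^{\zeta-1/2}\bigr)=o_{P}(1),
\end{equation*}
by Assumption \ref{asrr}.

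The main obstacle will be the first step: cleanly establishing, uniformly in $i$, the geometric $L^{\kappa/2}$ coupling of $({\mathcal{z}}_{m+i,1},{\mathcal{z}}_{m+i,2})$ to its truncation using only the most recent $L$ innovations. This requires controlling high moments of products of the multiplicative factors $\beta_{0}/(\alpha_{0}\epsilon^{2}+\beta_{0})$, which in turn relies on the fact that the hypothesis $E\log(\alpha_{0}\epsilon_{0}^{2}+\beta_{0})\neq 0$ delivers a nonzero Lyapunov exponent via the Furstenberg--Kesten theorem. Once the decomposable Bernoulli-shift property is secured in this sense, everything else is mechanical and mirrors the argument already given for Lemma \ref{lefo}.
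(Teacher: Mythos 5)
Your proposal is correct and follows essentially the same route as the paper: establish that $\{\mathbf{q}_{m+k},k\geq 1\}$ is a decomposable Bernoulli shift, apply the strong approximation of Lemma S2.1 of Aue et al.\ (2014) to obtain the coupling at rate $k^{\zeta}$ with $\zeta<1/2$, and then deduce (i) and (ii) by the same deterministic weighted-maximum computations used for Lemma \ref{lefo} and \eqref{lefi3}. The only difference is that you sketch the verification of the Bernoulli-shift property (independence of $(\mathcal{z}_{m+i,1},\mathcal{z}_{m+i,2})$ from $\epsilon_{m+i}$ and the geometric $L^{\kappa/2}$ truncation bound), whereas the paper simply cites Horv\'ath and Wang (2024) for it.
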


\begin{lemma}
\label{lealt1} We assume that $H_{A}$, Assumptions \ref{aspos}--\ref{assh2}
and \ref{asalt} are satisfied.\newline
(i) If $E\log (\alpha _{A}\epsilon _{0}^{2}+\beta _{A})<0$, then 
\begin{equation*}
\max_{1\leq k\leq k^{\ast }}\frac{{\mathcal{D}}_{m,k}}{g_{m}(k)}=O_{P}\left(
\left( \frac{k^{\ast }}{{\mathcal{n}}}\right) ^{1-\eta }\right) .
\end{equation*}%
(ii) If $E\log (\alpha _{A}\epsilon _{0}^{2}+\beta _{A})<0$, then 
\begin{equation*}
\max_{r\leq k\leq \max (r,k^{\ast })}\frac{{\mathcal{D}}_{m,k}}{\bar{g}%
_{m}(k)}=O_{P}\left( 1\right) .
\end{equation*}

\begin{proof}
The lemma follows from the proof of Theorem \ref{ma1}.
\end{proof}
\end{lemma}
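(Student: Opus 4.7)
Since on $1\leq k\leq k^{\ast}$ the observations $y_{m+1},\ldots,y_{m+k}$ are still generated by the pre-change recursion with parameter $\mbox{\boldmath${\theta}$}_{0}$, the Gaussian approximations of Lemmas \ref{lefo}, \ref{lefi} and \ref{nost3} remain valid on this interval. The strategy is to couple $\mathcal{r}_{m,k}(\hat{\mbox{\boldmath${\theta}$}}_m)$ with the Gaussian process $\mbox{\boldmath${\Gamma}$}_m(k)\overset{\mathcal{D}}{=}\mathbf{D}^{1/2}\mathbf{W}(k)$, reduce $\mathcal{D}_m(k)$ to $\|\mathbf{W}(k)\|^{2}$ up to a lower-order remainder (using $\hat{\mathbf{D}}_m\overset{P}{\to}\mathbf{D}$), and then read off the claimed rate by Brownian scaling.

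For part (i), pick $\eta/2<\zeta<1/2$ and invoke the sharper pointwise rate $\sup_{1\leq k<\infty}k^{-\zeta}\|\mathcal{r}_{m,k}(\hat{\mbox{\boldmath${\theta}$}}_m)-\mbox{\boldmath${\Gamma}$}_m(k)\|=O_P(1)$ that underlies Lemma \ref{lefo} (this is the form in which the strong invariance principle is applied in the proof of that lemma). Writing $\mathcal{r}_{m,k}(\hat{\mbox{\boldmath${\theta}$}}_m)=\mbox{\boldmath${\Gamma}$}_m(k)+\mbox{\boldmath${\delta}$}_{m,k}$, expanding $\mathcal{D}_m(k)$ via Cauchy--Schwarz and controlling $\max_{k\leq k^{\ast}}\|\mbox{\boldmath${\Gamma}$}_m(k)\|=O_P((k^{\ast})^{1/2})$ by Doob's inequality, one obtains
\begin{equation*}
\max_{1\leq k\leq k^{\ast}}\frac{|\mathcal{D}_m(k)-\mathbf{W}^{\top}(k)\mathbf{W}(k)|}{g_m(k)}=O_P\bigl((k^{\ast})^{\zeta-1/2}(k^{\ast}/\mathcal{n})^{1-\eta}\bigr)=o_P\bigl((k^{\ast}/\mathcal{n})^{1-\eta}\bigr),
\end{equation*}
the $o_P$-step using $\zeta<1/2$ and $k^{\ast}\to\infty$ (cf.\ Assumption \ref{assh2}). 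The main term is then handled by the scaling $\mathbf{W}(k^{\ast}t)\overset{\mathcal{D}}{=}(k^{\ast})^{1/2}\mathbf{W}(t)$, which yields
\begin{equation*}
\max_{1\leq k\leq k^{\ast}}\frac{\|\mathbf{W}(k)\|^{2}}{g_m(k)}\overset{\mathcal{D}}{=}\left(\frac{k^{\ast}}{\mathcal{n}}\right)^{1-\eta}\sup_{0<t\leq 1}\frac{\|\mathbf{W}(t)\|^{2}}{c\,t^{\eta}},
\end{equation*}
the supremum on $(0,1]$ being a.s.\ finite because $0\leq\eta<1$ (this is what drives Theorem \ref{ma1}(i)). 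For part (ii), replace Lemma \ref{lefo} by Lemma \ref{lefi}, which admits $\gamma>1/2$ and already makes all remainders $o_P(1)$ uniformly on $[r,\mathcal{n}]$; then rescale via $k=rt$ and $\mathbf{W}(rt)\overset{\mathcal{D}}{=}r^{1/2}\mathbf{W}(t)$ to obtain
\begin{equation*}
\max_{r\leq k\leq\max(r,k^{\ast})}\frac{\mathcal{D}_m(k)}{\bar{g}_m(k)}\overset{\mathcal{D}}{=}\sup_{1\leq t\leq\max(1,k^{\ast}/r)}\frac{\|\mathbf{W}(t)\|^{2}}{c\,t^{\eta}}+o_P(1)=O_P(1),
\end{equation*}
the finiteness of the latter supremum following from the time-reversal identity $\{t\mathbf{W}(1/t)\}\overset{\mathcal{D}}{=}\{\mathbf{W}(t)\}$ together with $\eta>1$. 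The case where the pre-change process is nonstationary is handled identically, invoking Lemma \ref{nost3} in place of Lemma \ref{lefo}.

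The main obstacle is the uniform control of the Gaussian-approximation remainder for small $k$, where the boundary $g_m(k)\propto k^{\eta}$ is vanishingly small: the $o_P(\mathcal{n}^{1/2}(k/\mathcal{n})^{\gamma})$ form in which Lemmas \ref{lefo} and \ref{nost3} are stated is too coarse here, and one has to appeal to the sharper pointwise rate $O_P(k^{\zeta})$ that is already embedded in the underlying strong invariance principle. The decay $(k^{\ast})^{\zeta-1/2}\to 0$, which can be secured by any choice $\zeta\in(\eta/2,1/2)$ once $\eta<1$ is fixed, is precisely what absorbs the cross and squared remainders into the leading order $(k^{\ast}/\mathcal{n})^{1-\eta}$; in part (ii), the R\'{e}nyi regime $\eta>1$ itself provides enough decay at infinity, so no analogous fine-tuning is needed.
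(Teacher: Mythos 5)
Your proposal is correct and follows essentially the same route as the paper, whose entire proof of this lemma is a pointer to the proof of Theorem \ref{ma1}: on the pre-change range the detector is coupled with the Gaussian process of Lemmas \ref{lefo}/\ref{lefi} (or \ref{nost3} in the explosive case) and the stated rates then follow from Brownian scaling, exactly as you argue. Your remark that one must invoke the sharper $O_{P}(k^{\zeta})$ coupling rate (e.g.\ \eqref{auho}) rather than the $o_{P}({\mathcal{n}}^{1/2}(k/{\mathcal{n}})^{\gamma})$ form in which those lemmas are stated is a detail the paper leaves implicit, but it is indeed how its own argument runs.
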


\medskip

Next we assume that \eqref{nostalst} holds. The observations are generated
by \eqref{alty} but in contrast to the previous case, the sequence is
explosive and cannot be approximated with a stationary sequence. However, we
can approximate the gradient of the log likelihood function with a
stationary sequence, we need only minor modifications of the arguments used
before. We use the following sequences to approximate the gradient vector of
the log likelihood function when \eqref{alty} holds for an explosive
sequence: 
\begin{equation*}
{\mathcal{v}}_{m+k^{\ast }+i,1}=\sum_{j=1}^{\infty }\epsilon _{m+k^{\ast
}+i-j}^{2}\frac{1}{\beta _{A}}\sum_{k=1}^{j}\frac{\beta _{A}}{\alpha
_{A}\epsilon _{m+k^{\ast }+i-k}^{2}+\beta _{A}}\quad \;
\end{equation*}%
and 
\begin{equation*}
{\mathcal{v}}_{m+k^{\ast }+i,2}=\sum_{j=1}^{\infty }\frac{1}{\beta _{A}}%
\sum_{k=1}^{j}\frac{\beta _{A}}{\alpha _{A}\epsilon _{m+k^{\ast
}+i-k}^{2}+\beta _{A}}.
\end{equation*}%
Let 
\begin{equation*}
\tilde{\mathbf{q}}_{m,k^{\ast },i}=\sum_{j=k^{\ast }+1}^{i}\left(
(1-\epsilon _{m+j}^{2}){\mathcal{v}}_{m+j,1},(1-\epsilon _{m+j}^{2}){%
\mathcal{v}}_{m+j,2}\right) ^{\top }.
\end{equation*}

\begin{lemma}
\label{nosalap} If $H_{A}$, Assumptions \ref{aspos}--\ref{assh2}, \ref{asalt}
are satisfied and $E\log (\alpha _{A}\epsilon _{0}^{2}+\beta _{A})>0$, then
we have 
\begin{equation*}
\max_{k^{\ast }<i\leq {\mathcal{n}}}\frac{1}{(i-k^{\ast })^{\zeta }}%
\left\Vert [{\mathcal{r}}_{m,m+i}-{\mathcal{r}}_{m,m+k^{\ast }}]-(i-k^{\ast
})\mbox{\boldmath${\Upsilon}$}-\tilde{\mathbf{q}}_{m,k^{\ast },i}\right\Vert
=O_{P}(1),
\end{equation*}%
with some $\zeta <1/2$.

\begin{proof}
After the changepoint the observations change into an explosive sequence
with parameter $\mbox{\boldmath${\theta}$}_{A}$, hence the proofs of Lemmas %
\ref{nost1} and \ref{nost2} can be repeated.
\end{proof}
\end{lemma}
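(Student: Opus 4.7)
The plan is to adapt, step by step, the sequence of approximations behind Lemmas \ref{nost1}--\ref{nost3} to the post-change regime, treating $m+k^{\ast}$ as a fresh starting time. Under $E\log(\alpha_{A}\epsilon_{0}^{2}+\beta_{A})>0$, the post-change recursion is exactly of the explosive type handled by \citet{jensen2004asymptotic} and Lemma \ref{nost1}, with $\sigma_{j}\to\infty$ a.s.\ exponentially fast. For each $j>m+k^{\ast}$, I would decompose the pair $(\partial\bar{\ell}_{j}(\hat{\mbox{\boldmath${\theta}$}}_{m})/\partial\alpha,\partial\bar{\ell}_{j}(\hat{\mbox{\boldmath${\theta}$}}_{m})/\partial\beta)^{\top}$ into three pieces: a centered random component $(1-\epsilon_{j}^{2})({\mathcal{v}}_{j,1},{\mathcal{v}}_{j,2})^{\top}$, whose sum over $m+k^{\ast}<j\leq m+i$ is by construction $\tilde{\mathbf{q}}_{m,k^{\ast},i}$; a deterministic mean whose Ces\`aro average, by the result of \citet{jensen2004asymptotic} quoted just after \eqref{nostalst}, converges to ${\mathcal{r}}^{(2)}(\mbox{\boldmath${\theta}$}_{0})=\mbox{\boldmath${\Upsilon}$}$ at geometric rate in the explosive regime; and two remainders absorbing initial-value effects and the gap $\hat{\mbox{\boldmath${\theta}$}}_{m}-\mbox{\boldmath${\theta}$}_{0}$. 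Summing the deterministic mean term and comparing it with $(i-k^{\ast})\mbox{\boldmath${\Upsilon}$}$ leaves an $O(1)$ discrepancy uniformly in $i\leq {\mathcal{n}}$.

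To control the two remainders, I would invoke the verbatim analog of Lemma \ref{nost1}, namely that uniformly over a compact neighborhood of $\mbox{\boldmath${\theta}$}_{A}$ one has $|\bar{p}_{j,\ell}(\mbox{\boldmath${\theta}$})-{\mathcal{v}}_{j,\ell}|=O_{P}(\rho^{j-m-k^{\ast}})$ for some $0<\rho<1$ and $\ell=1,2$. Since $\hat{\mbox{\boldmath${\theta}$}}_{m}$ is $m^{-1/2}$-consistent for $\mbox{\boldmath${\theta}$}_{0}$ by Lemma \ref{leone} (and, as in Lemma \ref{nost2}, only the sub-vector $(\alpha,\beta)$ needs estimation under nonstationarity), a coordinate-wise mean-value expansion identical to the one used in Lemma \ref{nost2} bounds the contribution of replacing $\mbox{\boldmath${\theta}$}_{A}$ by $\hat{\mbox{\boldmath${\theta}$}}_{m}$, per summand, by $\Vert\hat{\mbox{\boldmath${\theta}$}}_{m}-\mbox{\boldmath${\theta}$}_{0}\Vert$ times a uniform bound on the score's gradient. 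Telescoping the geometric initial-value error gives $O_{P}(1)$, while summing the parametric error contributes at most $O_{P}(({\mathcal{n}}/m)^{1/2})=o_{P}(1)$ by Assumption \ref{assh2}; both are absorbed by the announced $O_{P}((i-k^{\ast})^{\zeta})$ bound.

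Finally, the $(i-k^{\ast})^{-\zeta}$ normalization of the maximum would come from applying to $\tilde{\mathbf{q}}_{m,k^{\ast},i}$ the strong-approximation machinery already used in the proof of Lemma \ref{nost3}: the sequence $\{(1-\epsilon_{j}^{2})({\mathcal{v}}_{j,1},{\mathcal{v}}_{j,2})\}_{j>m+k^{\ast}}$ is a stationary decomposable Bernoulli shift driven by the i.i.d.\ sequence $\{\epsilon_{j}\}$, with $(1-\epsilon_{j}^{2})$ a martingale increment relative to the past innovations to which the $\mathcal{v}_{j,\ell}$'s are measurable. Lemma S2.1 of \citet{aue2014dependent} then yields a Wiener coupling with uniform error $O_{P}(k^{\zeta})$ for some $\zeta<1/2$, and a triangle inequality combining this with the previous estimates delivers the claim. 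I expect the main technical obstacle to be the verification that the decomposable-Bernoulli-shift conditions are satisfied by $\{(1-\epsilon_{j}^{2})\mathcal{v}_{j,\ell}\}$ under Assumption \ref{asin}: one has to check that the moment bound $E|\epsilon_{0}|^{\kappa}<\infty$ with $\kappa>4$ transfers to $\mathcal{v}_{j,\ell}$ via the a.s.\ exponential contraction produced by $E\log(\alpha_{A}\epsilon_{0}^{2}+\beta_{A})>0$, and that the coupling of $\mathcal{v}_{j,\ell}$ to its finite-truncation retains this exponential geometry uniformly in the parameter neighborhood; this is essentially a bookkeeping exercise repeating the analogous verification in \citet{horvath2024detecting}.
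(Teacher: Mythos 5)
Your proposal is correct and follows essentially the same route as the paper, whose own proof is a one-line instruction to repeat the arguments of Lemmas \ref{nost1} and \ref{nost2} in the post-change explosive regime with parameter $\mbox{\boldmath${\theta}$}_{A}$ --- exactly the adaptation you carry out in your first two paragraphs (geometric approximation of the score components by the stationary ${\mathcal{v}}_{m+k^{\ast}+i,\ell}$'s via \citet{jensen2004asymptotic}, Ces\`aro identification of the drift $\mbox{\boldmath${\Upsilon}$}$, and a mean-value bound for the $\hat{\mbox{\boldmath${\theta}$}}_{m}$ substitution). One small remark: the Wiener coupling of $\tilde{\mathbf{q}}_{m,k^{\ast},i}$ in your final paragraph is not needed for this lemma, since $\tilde{\mathbf{q}}_{m,k^{\ast},i}$ is subtracted off exactly in the statement; that strong-approximation step, and the accompanying decomposable-Bernoulli-shift verification, is the content of the subsequent Lemma \ref{lealto} rather than of this one.
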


\begin{lemma}
\label{lealto} If $H_{A}$, Assumptions \ref{aspos}--\ref{assh2}, \ref{asalt}
are satisfied and $E\log (\alpha _{A}\epsilon _{0}^{2}+\beta _{A})>0$, then
we can define Gaussian processes $\left\{ \tilde{\mbox{\boldmath${\Gamma}$}}%
_{m}(t),t\geq 0\right\} $ with $E\tilde{\mbox{\boldmath${\Gamma}$}}_{m}(t)=%
\mathbf{0},E\tilde{\mbox{\boldmath${\Gamma}$}}_{m}(t)\tilde{%
\mbox{\boldmath${\Gamma}$}}_{m}^{\top }(s)=\min (t,s)\mbox{\boldmath${%
\Sigma}$}_{2}$, where $\mbox{\boldmath${\Sigma}$}_{2}$ is defined in %
\eqref{Sig2}, such that 
\begin{equation*}
\max_{k^{\ast }<k\leq {\mathcal{n}}}\frac{1}{(k-k^{\ast })^{\zeta }}%
\left\Vert \tilde{\mathbf{q}}_{m,k^{\ast },i}-(i-k^{\ast })%
\mbox{\boldmath${\Upsilon}$}-\tilde{\mbox{\boldmath${\Gamma}$}}%
_{m}(i-k^{\ast })\right\Vert =O_{P}(1).
\end{equation*}

\begin{proof}
The proof is the same as of that of Lemma \ref{nost3}.
\end{proof}
\end{lemma}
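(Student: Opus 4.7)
The plan is to mirror, step by step, the proof of Lemma \ref{nost3}, the sole novelty being that one now works with post-change summands in an explosive regime rather than pre-change summands in the stationary reference model. Write
$$\mathbf{U}_{\ell}=\Bigl((1-\epsilon_{m+k^{*}+\ell}^{2}){\mathcal{v}}_{m+k^{*}+\ell,1},\;(1-\epsilon_{m+k^{*}+\ell}^{2}){\mathcal{v}}_{m+k^{*}+\ell,2}\Bigr)^{\top},$$
so that $\tilde{\mathbf{q}}_{m,k^{*},i}=\sum_{\ell=1}^{i-k^{*}}\mathbf{U}_{\ell}$. The first step is the structural observation that $(1-\epsilon_{m+k^{*}+\ell}^{2})$ is mean-zero and independent of $\sigma(\epsilon_{m+k^{*}+\ell-1},\epsilon_{m+k^{*}+\ell-2},\ldots)$, while both ${\mathcal{v}}_{m+k^{*}+\ell,1}$ and ${\mathcal{v}}_{m+k^{*}+\ell,2}$ are measurable with respect to that past $\sigma$-algebra. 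Hence $\{\mathbf{U}_{\ell}\}_{\ell\geq 1}$ is a stationary, ergodic, square-integrable martingale difference sequence with one-step covariance exactly equal to $\boldsymbol{\Sigma}_{2}$ as defined in \eqref{Sig2}.

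The second step is to verify that $\{\mathbf{U}_{\ell}\}$ is a decomposable Bernoulli shift driven by $\{\epsilon_{m+k^{*}+\ell-j},\;j\geq 0\}$. To this end, I would consider the $N$-dependent truncations $\mathbf{U}^{(N)}_{\ell}$ obtained by cutting off the infinite-series representations of ${\mathcal{v}}_{m+k^{*}+\ell,\cdot}$ at lag $N$, and show that $\|\mathbf{U}_{\ell}-\mathbf{U}^{(N)}_{\ell}\|_{p}=O(\rho^{N})$ for some $\rho\in(0,1)$ and some $p>2$, which is guaranteed by Assumption \ref{asin} together with the geometric $L^{p}$-decay of the partial products $\prod_{k=1}^{j}\beta_{A}/(\alpha_{A}\epsilon_{m+k^{*}+\ell-k}^{2}+\beta_{A})$. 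This is the same geometric-decay estimate that underpins Lemma \ref{nost1} via the Jensen--Rahbek argument of \cite{jensen2004asymptotic}, transcribed from the pair $(\alpha_{0},\beta_{0})$ to the pair $(\alpha_{A},\beta_{A})$.

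With decomposability in hand, the third step is a direct appeal to Lemma S2.1 of \cite{aue2014dependent}, whose moment hypothesis is met thanks to $E|\epsilon_{0}|^{\kappa}<\infty$ with $\kappa>4$. That lemma produces, on a suitable probability space, a mean-zero two-dimensional Wiener process $\{\tilde{\boldsymbol{\Gamma}}_{m}(t),t\geq 0\}$ with $E\tilde{\boldsymbol{\Gamma}}_{m}(t)\tilde{\boldsymbol{\Gamma}}_{m}(s)^{\top}=\min(t,s)\boldsymbol{\Sigma}_{2}$ and some $\zeta<1/2$ such that
$$\sup_{1\leq L<\infty}\frac{1}{L^{\zeta}}\Bigl\|\sum_{\ell=1}^{L}\mathbf{U}_{\ell}-\tilde{\boldsymbol{\Gamma}}_{m}(L)\Bigr\|=O_{P}(1).$$
Setting $L=i-k^{*}$ and taking the maximum over $k^{*}<i\leq{\mathcal{n}}$ then yields the stated bound, in exactly the same manner by which the analogous approximation \eqref{hus} was used to deduce \eqref{nost31} in the proof of Lemma \ref{nost3}.

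The main obstacle is the decomposability verification in the second step: one must prove that the partial-product weights entering ${\mathcal{v}}_{m+k^{*}+\ell,\cdot}$ decay geometrically in $L^{p}$ uniformly in $\ell$, and this estimate has to be imported from Jensen and Rahbek's explosive ergodic theory rather than from the simpler geometric contraction available in the stationary setting of Lemma \ref{nost3}. Once that bound is secured, the remainder of the argument is a literal transcription of the pre-change proof.
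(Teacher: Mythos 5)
Your proposal follows essentially the same route as the paper: the paper's proof of Lemma \ref{lealto} simply remarks that it is the same as that of Lemma \ref{nost3}, which consists precisely of establishing that the post-change summands form a decomposable Bernoulli shift and then invoking Lemma S2.1 of \cite{aue2014dependent} to obtain the Wiener approximation with covariance $\min(t,s)\mbox{\boldmath${\Sigma}$}_{2}$ and rate $O_{P}(1)$ after normalization by $(i-k^{\ast})^{\zeta}$, $\zeta<1/2$. Your additional detail on the martingale-difference structure and the geometric $L^{p}$ truncation of the partial-product weights (imported from the Jensen--Rahbek analysis, as in Lemma \ref{nost1}) is exactly the content the paper delegates to the cited references, so the argument is correct and matches the paper's.
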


\newpage

\clearpage
\renewcommand*{\thesection}{\Alph{section}}

\setcounter{subsection}{-1} \setcounter{subsubsection}{-1} %
\setcounter{equation}{0} \setcounter{lemma}{0} \setcounter{theorem}{0} %
\renewcommand{\theassumption}{C.\arabic{assumption}} 
\renewcommand{\thetheorem}{C.\arabic{theorem}} \renewcommand{\thelemma}{C.%
\arabic{lemma}} \renewcommand{\thecorollary}{C.\arabic{corollary}} %
\renewcommand{\theequation}{C.\arabic{equation}}

\section{Proofs\label{proofs}}

\begin{proof}[Proof of Theorem \protect\ref{ma1}]
We begin by showing the theorem under stationarity. It follows from Lemmas %
\ref{letwo}--\ref{lefo} that 
\begin{equation*}
\sup_{1\leq k\leq {\mathcal{n}}}\frac{1}{{\mathcal{n}}(k/{\mathcal{n}}%
)^{\eta }}\left\Vert {\mathcal{r}}_{m,k}^{\top }(\hat{\mbox{\boldmath${%
\theta}$}}_{m})\mathbf{D}^{-1}{\mathcal{r}}_{m,k}(\hat{\mbox{\boldmath${%
\theta}$}}_{m})-\mbox{\boldmath${\Gamma}$}_{m}^{\top }(k)\mathbf{D}^{-1}%
\mbox{\boldmath${\Gamma}$}_{m}(k)\right\Vert =o_{P}(1).
\end{equation*}%
The covariance structure of $\mbox{\boldmath${\Gamma}$}_{m}(k)$ implies that 
\begin{equation*}
\sup_{1\leq k\leq {\mathcal{n}}}\frac{1}{{\mathcal{n}}(k/{\mathcal{n}}%
)^{\eta }}\left\vert \mbox{\boldmath${\Gamma}$}_{m}^{\top }(k)\mathbf{D}^{-1}%
\mbox{\boldmath${\Gamma}$}_{m}(k)\right\vert \overset{{\mathcal{D}}}{=}%
\sup_{1\leq k\leq {\mathcal{n}}}\frac{1}{{\mathcal{n}}(k/{\mathcal{n}}%
)^{\eta }}\left\Vert \mathbf{W}(k)\right\Vert ^{2},
\end{equation*}%
where $\{\mathbf{W}(t),t\geq 0\}$ has independent coordinates, identically
distributed Wiener processes. Using the the scale transformation and
continuity of the Wiener process we conclude 
\begin{equation*}
\sup_{1\leq k\leq {\mathcal{n}}}\frac{1}{{\mathcal{n}}(k/{\mathcal{n}}%
)^{\eta }}\left\Vert \mathbf{W}(k)\right\Vert ^{2}\overset{{\mathcal{D}}}{=}%
\sup_{1/{\mathcal{n}}\leq k/{\mathcal{n}}\leq }\frac{1}{(k/{\mathcal{n}}%
)^{\eta }}\left\Vert \mathbf{W}(k/{\mathcal{n}})\right\Vert ^{2}\overset{{%
\mathcal{D}}}{\rightarrow }\sup_{0<t\leq }\frac{1}{t^{\eta }}\left\Vert 
\mathbf{W}(t)\right\Vert ^{2}.
\end{equation*}%
The proof of part \textit{(i)} of the theorem is complete when \eqref{stalt}
holds since 
\begin{equation}
\left\Vert \hat{\mathbf{D}}_{m}-\mathbf{D}\right\Vert =o_{P}\left( 1\right) ,
\label{Dco}
\end{equation}%
with $\mathbf{D}$ defined in \eqref{lefod}. We now turn to proving the
second part of the Theorem. We note 
\begin{align*}
\max_{r\leq k\leq {\mathcal{n}}}\frac{1}{r(k/r)^{\eta }}\mbox{\boldmath${%
\Gamma}$}_{m}^{\top }(k)\mathbf{D}^{-1}\mbox{\boldmath${\Gamma}$}_{m}(k)& 
\overset{{\mathcal{D}}}{=}\max_{r\leq k\leq {\mathcal{n}}}\frac{1}{%
r(k/r)^{\eta }}\left\Vert \mathbf{W}(k)\right\Vert ^{2} \\
& \overset{{\mathcal{D}}}{=}\max_{1\leq k/r\leq {\mathcal{n}}/r}\frac{1}{%
(k/r)^{\eta }}\left\Vert \mathbf{W}(k/r)\right\Vert ^{2} \\
& \overset{{\mathcal{D}}}{\rightarrow }\max_{1\leq t<\infty }\frac{1}{%
t^{\eta }}\left\Vert \mathbf{W}(t)\right\Vert ^{2}
\end{align*}%
due to the almost sure continuity of the Wiener process, and the Law of the
Iterated Logarithm for Wiener processes. The result in part \textit{(ii)} of
the theorem now follows from Lemma \ref{lefi} and \eqref{Dco}. Under
nonstationarity, the proof is exactly the same as above, only Lemmas \ref%
{letwo}--\ref{lefi} are replaced with Lemmas \ref{nost1}--\ref{nost3}. We
also note that according to \citet{jensen2004asymptotic} (cf.\ also %
\citealp{francq2019garch}), it holds that $\left\Vert \hat{\mathbf{D}}_{m}-%
\mathbf{D}\right\Vert =o_{P}(1)$, where $\mathbf{D}$ is defined in \eqref{D2}%
, thus completing the proof.
\end{proof}

\begin{proof}[Proof of Theorem \protect\ref{ma2}]
We note that under our conditions, it holds that $\Vert \hat{\mathbf{D}}_{m}-%
\mathbf{D}\Vert =O_{P}(m^{-\delta })$, for some $\delta >0$. Hence in the
proofs we can replace $\hat{\mathbf{D}}_{m}^{-1}$ with $\mathbf{D}^{-1}$ in
the definition of the detector, without loss of generality. We wish to point
out that the definition of $\mathbf{D}$ depends on the sign of $E\log
(\alpha _{0}\epsilon _{0}^{2}+\beta _{0})$.\newline
First we assume that $E\log (\alpha _{A}\epsilon _{0}^{2}+\beta _{A})<0$
holds. To prove part \textit{(i)} of the theorem, we note that according to
Lemma \ref{letwo} 
\begin{equation}
\max_{1\leq k\leq {\mathcal{n}}}\frac{1}{k^{1/2}}\sup_{\mbox{\boldmath${%
\theta}$}\in \mbox{\boldmath${\theta}$}}\Vert {\mathcal{r}}_{m,k}(%
\mbox{\boldmath${\theta}$})-\bar{{\mathcal{r}}}_{m,k}(\mbox{\boldmath${%
\theta}$})\Vert =O_{P}(1),  \label{fm}
\end{equation}%
and 
\begin{equation}
\max_{a\leq k\leq {\mathcal{n}}}\frac{1}{k^{1/2}}\sup_{\mbox{\boldmath${%
\theta}$}\in \mbox{\boldmath${\theta}$}}\Vert {\mathcal{r}}_{m,k}(%
\mbox{\boldmath${\theta}$})-\bar{{\mathcal{r}}}_{m,k}(\mbox{\boldmath${%
\theta}$})\Vert =O_{P}\left( a^{-1/2+\zeta }\right) ,  \label{fmm}
\end{equation}%
where $\zeta <1/2$ and $a=(\log {\mathcal{n}})^{\phi }$ with $\phi >4$.
Following the proof of Lemma \ref{lethree} we get 
\begin{equation*}
\max_{1\leq k\leq {\mathcal{n}}}\frac{1}{k^{1/2}}\Vert \mathbf{R}_{m,k}\Vert
=O_{P}({\mathcal{n}}^{1/2})\Vert \hat{\mbox{\boldmath${\theta}$}}_{m}-%
\mbox{\boldmath${\theta}$}_{0}\Vert ^{2}=O_{P}\left( \frac{{\mathcal{n}}%
^{1/2}}{m}\right) ,
\end{equation*}%
and by \eqref{N2} 
\begin{equation*}
\max_{1\leq k\leq {\mathcal{n}}}\frac{1}{k^{1/2}}\left\Vert \nabla ^{2}{%
\mathcal{f}}_{m,k}(\mbox{\boldmath${\theta}$}_{0})(\hat{\mbox{\boldmath${%
\theta}$}}_{m}-\mbox{\boldmath${\theta}$})\right\Vert =O_{P}\left( \frac{{%
\mathcal{n}}^{1/2}}{m^{1/2}}\right) ,
\end{equation*}%
\begin{equation*}
\max_{a\leq k\leq {\mathcal{n}}}\frac{1}{k^{1/2}}\left\Vert \nabla ^{2}{%
\mathcal{f}}_{m,k}(\mbox{\boldmath${\theta}$}_{0})(\hat{\mbox{\boldmath${%
\theta}$}}_{m}-\mbox{\boldmath${\theta}$})\right\Vert =O_{P}\left( \frac{%
a^{1/2}}{m^{1/2}}\right) .
\end{equation*}%
Using the approximation in \eqref{auho} we conclude 
\begin{equation}
\max_{1\leq k\leq {\mathcal{n}}}\frac{1}{k^{1/2}}\left\Vert \bar{{\mathcal{r}%
}}_{m,k}(\mbox{\boldmath${\theta}$}_{0})-\mbox{\boldmath${\Gamma}$}%
_{m}(k)\right\Vert =O_{P}\left( 1\right)  \label{stac}
\end{equation}%
and 
\begin{equation}
\max_{1\leq k\leq a}\frac{1}{k^{1/2}}\left\Vert \bar{{\mathcal{r}}}_{m,k}(%
\mbox{\boldmath${\theta}$}_{0})-\mbox{\boldmath${\Gamma}$}_{m}(k)\right\Vert
=O_{P}\left( a^{-1/2+\zeta }\right) .  \label{stacc}
\end{equation}%
The Darling--Erd\H{o}s (cf. \citealp{csorgo1997}, pp.\ 363--365) yields 
\begin{equation*}
\frac{1}{2\log \log {\mathcal{n}}}\max_{1\leq k\leq {\mathcal{n}}}\frac{1}{k}%
\mbox{\boldmath${\Gamma}$}_{m}^{\top }(k)\mathbf{D}^{-1}\mbox{\boldmath${%
\Gamma}$}_{m}(k)\overset{P}{\rightarrow }1,
\end{equation*}%
and 
\begin{equation*}
\max_{a\leq k\leq {\mathcal{n}}}\frac{1}{k}\mbox{\boldmath${\Gamma}$}%
_{m}^{\top }(k)\mathbf{D}^{-1}\mbox{\boldmath${\Gamma}$}_{m}(k)=O_{P}\left(
\log \log \log {\mathcal{n}}\right) .
\end{equation*}%
Thus we obtain that 
\begin{equation*}
\lim_{m\rightarrow \infty }P\left\{ \max_{1\leq k\leq {\mathcal{n}}}\frac{1}{%
k}{\mathcal{r}}_{m,k}^{\top }(\hat{\mbox{\boldmath${\theta}$}}_{m})\mathbf{D}%
^{-1}{\mathcal{r}}_{m,k}(\hat{\mbox{\boldmath${\theta}$}}_{m})=\max_{a\leq
k\leq {\mathcal{n}}}\frac{1}{k}{\mathcal{r}}_{m,k}^{\top }(\hat{%
\mbox{\boldmath${\theta}$}}_{m})\mathbf{D}^{-1}{\mathcal{r}}_{m,k}(\hat{%
\mbox{\boldmath${\theta}$}}_{m})\right\} =1.
\end{equation*}%
Putting together our estimates on the set $a\leq k\leq {\mathcal{n}}$, we
get 
\begin{equation*}
\max_{a\leq k\leq {\mathcal{n}}}\frac{1}{k}\left\vert {\mathcal{r}}%
_{m,k}^{\top }(\hat{\mbox{\boldmath${\theta}$}}_{m})\mathbf{D}^{-1}{\mathcal{%
r}}_{m,k}(\hat{\mbox{\boldmath${\theta}$}}_{m})-\mbox{\boldmath${\Gamma}$}%
_{m}^{\top }(k)\mathbf{D}^{-1}\mbox{\boldmath${\Gamma}$}_{m}(k)\right\vert
=O_{P}\left( (\log {\mathcal{n}})^{-\bar{\phi}}\right) ,
\end{equation*}%
with some $\bar{\phi}>0$. Since 
\begin{equation*}
\max_{a\leq k\leq {\mathcal{n}}}\mbox{\boldmath${\Gamma}$}_{m}^{\top }(k)%
\mathbf{D}^{-1}\mbox{\boldmath${\Gamma}$}_{m}(k)\overset{{\mathcal{D}}}{=}%
\max_{a\leq k\leq {\mathcal{n}}}\Vert \mathbf{W}(k)\Vert ^{2},
\end{equation*}%
where $\{\mathbf{W}(t),t\geq 0\}$ is defined in Theorem \ref{ma1}. Using
Appendix 3 in \citet{csorgo1997}, we obtain that 
\begin{equation*}
\lim_{{\mathcal{n}}\rightarrow \infty }P\left\{ a(\log {\mathcal{n}}%
)\max_{a\leq k\leq {\mathcal{n}}}\Vert \mathbf{W}(k)\Vert \leq x+b_{2}(\log {%
\mathcal{n}})\right\} =\exp (-e^{-x})
\end{equation*}%
for all $x$. Hence the proof Theorem \ref{ma2}(i) is complete when $E\log
(\alpha _{0}\epsilon _{0}^{2}+\beta _{0})<0$.\newline
Lemma \ref{nost1}, \eqref{jr1} and \eqref{jr2} imply that \eqref{fm} and %
\eqref{fmm} can be replaced with 
\begin{equation*}
\max_{1\leq k\leq {\mathcal{n}}}\frac{1}{k^{1/2}}\left\Vert {\mathcal{r}}%
_{m,k}(\hat{\mbox{\boldmath${\theta}$}}_{m})-\mathbf{q}_{m+k}\right\Vert
=O_{P}(1),
\end{equation*}%
and 
\begin{equation*}
\max_{a\leq k\leq {\mathcal{n}}}\frac{1}{k^{1/2}}\left\Vert {\mathcal{r}}%
_{m,k}(\hat{\mbox{\boldmath${\theta}$}}_{m})-\mathbf{q}_{m+k}\right\Vert
=O_{P}(a^{-1/2+\zeta }),
\end{equation*}%
with some $\zeta <1/2$, when $E\log (\alpha _{0}\epsilon _{0}^{2}+\beta
_{0})>0$. The approximation in \eqref{hus} yields 
\begin{equation}
\max_{1\leq k\leq {\mathcal{n}}}\frac{1}{k^{1/2}}\left\Vert \mathbf{q}_{m+k}-%
\mbox{\boldmath${\Gamma}$}_{m}(k)\right\Vert =O_{P}(1),  \label{nostt}
\end{equation}%
and 
\begin{equation}
\max_{a\leq k\leq {\mathcal{n}}}\frac{1}{k^{1/2}}\left\Vert \mathbf{q}_{m+k}-%
\mbox{\boldmath${\Gamma}$}_{m}(k)\right\Vert =O_{P}(a^{-1/2+\zeta }).
\label{nosttt}
\end{equation}%
The approximations in \eqref{nostt} and \eqref{nosttt} imply part \textit{(i)%
} of the theorem, in the same way as \eqref{stac} and \eqref{stacc} yielded %
\eqref{nosttt} under condition $E\log (\alpha _{0}\epsilon _{0}^{2}+\beta
_{0})<0$.\newline
To prove part \textit{(ii)} of the theorem, we use again the approximations
in \eqref{stac} and \eqref{stacc}. By the scale transformation of Wiener
processes we have 
\begin{equation}
\sup_{r\leq t\leq {\mathcal{n}}}\frac{1}{t^{1/2}}\Vert \mathbf{W}(t)\Vert 
\overset{{\mathcal{D}}}{=}\sup_{1\leq t\leq {\mathcal{n}}/r}\frac{1}{t^{1/2}}%
\Vert \mathbf{W}(t)\Vert  \label{rende}
\end{equation}%
and therefore the Darling--Erd\H{o}s limit result (cf. \citealp{csorgo1997},
pp.\ 363--368) with \eqref{stac} and \eqref{stacc} yields 
\begin{equation*}
\frac{1}{2\log \log ({\mathcal{n}}/r)}\max_{r\leq k\leq {\mathcal{n}}}\frac{1%
}{k}{\mathcal{r}}_{m,k}^{\top }(\hat{\mbox{\boldmath${\theta}$}}_{m})\mathbf{%
D}^{-1}{\mathcal{r}}_{m,k}\overset{P}{\rightarrow }1
\end{equation*}%
and%
\begin{equation*}
\max_{r\leq k\leq rg}\frac{1}{k}{\mathcal{r}}_{m,k}^{\top }(\hat{%
\mbox{\boldmath${\theta}$}}_{m})\mathbf{D}^{-1}{\mathcal{r}}%
_{m,k}=O_{P}\left( \log \log \log ({\mathcal{n}}/r)\right)
\end{equation*}%
with 
\begin{equation*}
g=(\log ({\mathcal{n}}/r))^{\bar{\phi}},\;\;\;\bar{\phi}>4.
\end{equation*}%
Now we get 
\begin{equation*}
\lim_{m\rightarrow \infty }P\left\{ \max_{r\leq k\leq {\mathcal{n}}}\frac{1}{%
k}{\mathcal{r}}_{m,k}^{\top }(\hat{\mbox{\boldmath${\theta}$}}_{m})\mathbf{D}%
^{-1}{\mathcal{r}}_{m,k}=\max_{rg\leq k\leq {\mathcal{n}}}\frac{1}{k}{%
\mathcal{r}}_{m,k}^{\top }(\hat{\mbox{\boldmath${\theta}$}}_{m})\mathbf{D}%
^{-1}{\mathcal{r}}_{m,k}\right\} =1.
\end{equation*}%
Using the approximation of ${\mathcal{r}}_{m,k}(\hat{\mbox{\boldmath${%
\theta}$}}_{m})$ with Gaussian processes we get 
\begin{equation*}
\max_{rg\leq k\leq {\mathcal{n}}}\frac{1}{k}\left\vert {\mathcal{r}}%
_{m,k}^{\top }(\hat{\mbox{\boldmath${\theta}$}}_{m})\mathbf{D}^{-1}{\mathcal{%
r}}_{m,k}-\mbox{\boldmath${\Gamma}$}_{m}^{\top }(k)\mathbf{D}^{-1}%
\mbox{\boldmath${\Gamma}$}_{m}(k)\right\vert =O_{P}\left( (\log ({\mathcal{n}%
}/r))^{-\bar{\phi}}\right)
\end{equation*}%
with some $\bar{\phi}>0$. Now using the results on pp. 363--365 in %
\citet{csorgo1997} with \eqref{rende} yields 
\begin{equation*}
\lim_{m\rightarrow \infty }P\left\{ a(\log ({\mathcal{n}}/r))\max_{rg\leq
k\leq {\mathcal{n}}}\frac{1}{k^{1/2}}\left( \mbox{\boldmath${\Gamma}$}%
_{m}^{\top }(k)\mathbf{D}^{-1}\mbox{\boldmath${\Gamma}$}_{m}(k)\right)
^{1/2}\leq x+b_{2}(\log ({\mathcal{n}}/r))\right\} =\exp (-e^{-x}),
\end{equation*}%
completing the proof of part \textit{(ii)} of the theorem when $E\log
(\alpha _{0}\epsilon _{0}^{2}+\beta _{0})<0$. The same arguments can be used
when $E\log (\alpha _{0}\epsilon _{0}^{2}+\beta _{0})>0$.
\end{proof}

\begin{proof}[Proof of Theorem \protect\ref{thcons}]
First we assume that \eqref{stalt} holds. After $m+k^{\ast }$, the sequence
can turn into (asymptotically) stationary sequence. Since \eqref{stalt}
holds, we have 
\begin{equation*}
\max_{k^{\ast }+1\leq j<\infty }\sup_{\mbox{\boldmath${\theta}$}\in %
\mbox{\boldmath${\theta}$}}\frac{1}{j-k^{\ast }}\Vert {\mathcal{r}}_{m,j}(%
\mbox{\boldmath${\theta}$})-{\mathcal{r}}_{m,k^{\ast }}(\mbox{\boldmath${%
\theta}$})-(j-k^{\ast }){\mathcal{r}}^{(1)}(\mbox{\boldmath${\theta}$})\Vert
=o_{P}(1),
\end{equation*}%
where 
\begin{equation*}
{\mathcal{r}}^{(1)}(\mbox{\boldmath${\theta}$})=E\bar{{\mathcal{r}}}%
_{m,k^{\ast }+1}^{(1)}(\mbox{\boldmath${\theta}$}),
\end{equation*}%
\begin{equation*}
\bar{{\mathcal{r}}}_{m,k^{\ast }+j}^{(1)}(\mbox{\boldmath${\theta}$}%
)=\sum_{i=k^{\ast }+1}^{k^{\ast }+j}\left( \frac{\partial \hat{\ell}_{i}(%
\mbox{\boldmath${\theta}$})}{\partial \alpha },\frac{\partial \hat{\ell}_{i}(%
\mbox{\boldmath${\theta}$})}{\partial \beta }\right) ^{\top },
\end{equation*}%
Under $H_{A}$ we have that $\mbox{\boldmath${\Delta}$}={\mathcal{r}}^{(1)}(%
\mbox{\boldmath${\theta}$}_{0})\neq \mathbf{0}.$ Next we write for $%
k>k^{\ast }$ 
\begin{equation}
{\mathcal{r}}_{m,k}(\hat{\mbox{\boldmath${\theta}$}}_{m})=(k-k^{\ast })%
\mbox{\boldmath${\Delta}$}+\mathbf{m}_{m,k}  \label{mdec-1}
\end{equation}%
with 
\begin{equation}
\mathbf{m}_{m,k}={\mathcal{r}}_{m,k^{\ast }}(\hat{\mbox{\boldmath${\theta}$}}%
_{m})+[{\mathcal{r}}_{m,k}(\hat{\mbox{\boldmath${\theta}$}}_{m})-{\mathcal{r}%
}_{m,k^{\ast }}(\hat{\mbox{\boldmath${\theta}$}}_{m})-(k-k^{\ast })%
\mbox{\boldmath${\Delta}$}]  \label{mdec}
\end{equation}%
We use the decomposition 
\begin{equation}
{\mathcal{D}}_{m,k}=(k-k^{\ast })^{2}A_{m}+2(k-k^{\ast })\mbox{\boldmath${%
\Delta}$}^{\top }\hat{\mathbf{D}}_{m}^{-1}\mathbf{m}_{m,k}+\mathbf{m}%
_{m,k}^{\top }\mathbf{D}_{m}^{-1}\mathbf{m}_{m,k},  \label{mdec1}
\end{equation}%
where $A_{m}$ is defined in \eqref{deldef}. We note that, for any ${\mathcal{%
M}}$%
\begin{equation*}
\max_{1\leq k\leq {\mathcal{M}}}\frac{{\mathcal{D}}_{m,k}}{g_{m}(k)}=\max
\left( \max_{1\leq k\leq k^{\ast }}\frac{{\mathcal{D}}_{m,k}}{g_{m}(k)}%
,\max_{k^{\ast }+1\leq k\leq {\mathcal{M}}}\frac{{\mathcal{D}}_{m,k}}{%
g_{m}(k)}\right) .
\end{equation*}%
First we consider the case when \eqref{stalt} holds. The observations $y_{i}$
can be approximated with a stationary sequence defined in \eqref{hatx}.
Recalling the definitions of $t^{\ast },\bar{u},u_{{\mathcal{n}}}$ and $%
u^{\ast }$ in \eqref{tstar}--\eqref{ustar}, we define 
\begin{equation*}
{\mathcal{M}}=k^{\ast }+u_{{\mathcal{n}}}\left( {\mathcal{n}}^{1-\eta }\frac{%
{\mathcal{c}}}{A_{m}}\right) ^{1/(2-\eta )}+R,
\end{equation*}%
with 
\begin{equation*}
R=x{(u^{\ast }+t^{\ast })^{3/2}}\left( \left( {\mathcal{n}}^{1-\eta }\frac{{%
\mathcal{c}}}{A_{m}}\right) ^{1/(2-\tau )}\right) ^{1/2}.
\end{equation*}%
Now according to Theorem \ref{ma1} 
\begin{equation*}
\left( \max_{1\leq k\leq k^{\ast }}\frac{{\mathcal{D}}_{m,k}}{k^{\eta }}-{%
\mathcal{c}}{\mathcal{n}}^{1-\eta }\right) \left( {\mathcal{n}}^{(1-\eta
)/(2-\eta )}\right) ^{-3/2+\eta }\overset{\mathcal{P}}{\rightarrow }-\infty .
\end{equation*}%
It follows from the proof of Theorem \ref{ma1} 
\begin{equation*}
\max_{k^{\ast }+1\leq k\leq {\mathcal{M}}}\frac{\mathbf{m}_{m,k}^{\top }%
\mathbf{D}_{m}^{-1}\mathbf{m}_{m,k}}{g_{m}(k)}=O_{P}(1),
\end{equation*}%
and for any $\bar{\eta}<1/2$ 
\begin{equation*}
\max_{k^{\ast }+1\leq k\leq {\mathcal{M}}}\frac{1}{k-k^{\ast }}\frac{\Vert 
\mathbf{m}_{m,k}\Vert }{{\mathcal{n}}^{1/2}(k/{\mathcal{n}})^{\bar{\eta}}}%
=O_{P}(1).
\end{equation*}%
Let $0<\delta <1$ and write 
\begin{equation*}
{\mathcal{n}}^{1-\eta }\max_{k^{\ast }+1\leq k\leq (1-\delta ){\mathcal{M}}}%
\frac{{\mathcal{D}}_{m,k}}{g_{m}(k)}=(1-\delta )^{2-\eta }{\mathcal{M}}%
^{2-\eta }O_{P}(1),
\end{equation*}%
where $O_{P}(1)$ does not depend on $\delta $. Thus we conclude 
\begin{equation*}
\left( {\mathcal{n}}^{1-\eta }\max_{k^{\ast }+1\leq k\leq (1-\delta ){%
\mathcal{M}}}\frac{{\mathcal{D}}_{m,k}}{g_{m}(k)}-{\mathcal{n}}^{1-\eta
}\right) ({\mathcal{n}}^{1-\eta })^{(-3/2+\eta )/(2-\eta )}\overset{\mathcal{%
P}}{\rightarrow }-\infty .
\end{equation*}%
Our calculations show that we need to consider the asymptotic properties of 
\begin{align*}
& \hspace{-3cm}\left( \max_{(1-\delta ){\mathcal{M}}\leq k\leq {\mathcal{M}}}%
\frac{1}{k^{\eta }}\left( (k-k^{\ast })^{2}A_{m}+2(k-k^{\ast })%
\mbox{\boldmath${\Delta}$}^{\top }\hat{\mathbf{D}}_{m}^{-1}\mathbf{m}%
_{m,k}\right) \right. \\
& \left. -\left[ {\mathcal{c}}{\mathcal{n}}^{1-\eta }-\frac{1}{{\mathcal{M}}%
^{\eta }}({\mathcal{M}}-k^{\ast })^{2}A_{m}\right] \right) \left( {\mathcal{n%
}}^{(1-\eta )/(2-\eta )}\right) ^{-3/2+\eta }.
\end{align*}%
Since the model is the same until $y_{m+k^{\ast }}$, the $k^{\ast }$th
observation following the training sample, we have that 
\begin{align}
\mathbf{D}=\lim_{k^{\ast }\rightarrow \infty }& \frac{1}{m+k^{\ast }}%
E\left\{ \left( \frac{\partial \bar{{\mathcal{r}}}_{m,m+k^{\ast }}(%
\mbox{\boldmath${\theta}$}_{0})}{\partial \alpha },\frac{\partial \bar{{%
\mathcal{r}}}_{m,m+k^{\ast }}(\mbox{\boldmath${\theta}$}_{0})}{\partial
\beta }\right) ^{\top }\right.  \label{bsione} \\
& \left. \hspace{3cm}\times \left( \frac{\partial \bar{{\mathcal{r}}}%
_{m,m+k^{\ast }}(\mbox{\boldmath${\theta}$}_{0})}{\partial \alpha },\frac{%
\partial \bar{{\mathcal{r}}}_{m,m+k^{\ast }}(\mbox{\boldmath${\theta}$}_{0})%
}{\partial \beta }\right) \right\} .  \notag
\end{align}%
(Note that the formula for $\mathbf{D}$ depends on whether the training
sample is stationary or non stationary, cf. \citealp{berkes2003garch} and %
\citealp{jensen2004asymptotic}). Along the lines of the proof of Theorem \ref%
{ma1} one can show that 
\begin{equation}
{\mathcal{M}}^{-3/2}\left( \lfloor {\mathcal{M}}t\rfloor -\lfloor k^{\ast }/{%
\mathcal{M}}\rfloor \right) \mbox{\boldmath${\Delta}$}^{\top }\hat{\mathbf{D}%
}_{m}^{-1}\mathbf{m}_{m,{\mathcal{M}}t}\overset{{\mathcal{D}}[1-\delta ,1]}{%
\longrightarrow }(t-t^{\ast }/(u^{\ast }+t^{\ast }))Z(t),  \label{Zco}
\end{equation}%
where $\{Z(t),t\geq t^{\ast }/(t^{\ast }+u^{\ast })\}$ is a Gaussian process
with $EZ(t)=0$ and 
\begin{equation*}
EZ(t)Z(s)=t^{\ast }\mbox{\boldmath${\Delta}$}^{\top }\mathbf{D}^{-1}%
\mbox{\boldmath${\Delta}$}+[\min (t,s)-t^{\ast }/(t^{\ast }+u^{\ast })]%
\mbox{\boldmath${\Delta}$}^{\top }\mathbf{D}^{-1}\mbox{\boldmath${\Sigma}$}%
_{1}\mathbf{D}^{-1}\mbox{\boldmath${\Delta}$},
\end{equation*}%
where $\mbox{\boldmath${\Sigma}$}_{1}$ is defined in \eqref{Sig1}. Hence 
\begin{equation*}
{\mathcal{M}}^{-1/2+\eta }\max_{(1-\delta ){\mathcal{M}}\leq k\leq {\mathcal{%
M}}}\frac{1}{k^{\tau }}(k-k^{\ast })\mbox{\boldmath${\Delta}$}^{\top }\hat{%
\mathbf{D}}_{m}^{-1}\mathbf{m}_{m,k}\overset{{\mathcal{D}}}{\rightarrow }%
\sup_{1-\delta \leq t\leq 1}\frac{1}{t^{\eta }}(t-t^{\ast })Z(t).
\end{equation*}%
Also, for any $x>0$ 
\begin{align*}
\hspace{-1cm}\lim_{\delta \rightarrow 0}\limsup_{m\rightarrow \infty }&
P\left\{ {\mathcal{M}}^{-1/2+\eta }\max_{(1-\delta ){\mathcal{M}}\leq k\leq {%
\mathcal{M}}}\frac{1}{k^{\eta }}\left\vert (k-k^{\ast })\mbox{\boldmath${%
\Delta}$}^{\top }\hat{\mathbf{D}}_{m}^{-1}\mathbf{m}_{m,k}-k^{\ast }%
\mbox{\boldmath${\Delta}$}^{\top }\hat{\mathbf{D}}_{m}^{-1}\mathbf{m}_{m,{%
\mathcal{M}}}\right\vert >x\right\} \\
& =0.
\end{align*}%
Since $(k-k^{\ast })^{2}/t^{\eta }$ is increasing on $[(1-\delta ){\mathcal{M%
}},{\mathcal{M}}]$ we get that 
\begin{align*}
& \hspace{-3cm}\lim_{\delta \rightarrow 0}\limsup_{m\rightarrow \infty
}P\left\{ \sup_{(1-\delta ){\mathcal{M}}\leq k\leq {\mathcal{M}}}\frac{1}{%
k^{\eta }}\left( (k-k^{\ast })^{2}A_{m}+2(k-k^{\ast })\mbox{\boldmath${%
\Delta}$}^{\top }\hat{\mathbf{D}}_{m}^{-1}\mathbf{m}_{m,k}\right) \right. \\
& \left. =\frac{1}{{\mathcal{M}}^{\eta }}\left[ ({\mathcal{M}}-k^{\ast
})^{2}A_{m}+2({\mathcal{M}}-k^{\ast })\mbox{\boldmath${\Delta}$}^{\top }\hat{%
\mathbf{D}}_{m}^{-1}\mathbf{m}_{m,{\mathcal{M}}}\right] \right\} =1.
\end{align*}%
Next we write 
\begin{align*}
P& \left\{ \frac{1}{{\mathcal{M}}^{\eta }}\left[ ({\mathcal{M}}-k^{\ast
})^{2}A_{m}+2({\mathcal{M}}-k^{\ast })\mbox{\boldmath${\Delta}$}^{\top }\hat{%
\mathbf{D}}_{m}^{-1}\mathbf{m}_{m,{\mathcal{M}}}\right] \leq {\mathcal{c}}{%
\mathcal{n}}^{1-\eta }\right\} \\
& =P\left\{ 2({\mathcal{M}}-k^{\ast })[\mbox{\boldmath${\Delta}$}^{\top }%
\hat{\mathbf{D}}_{m}^{-1}\mathbf{m}_{m,{\mathcal{M}}}]{\mathcal{M}}%
^{-3/2}\leq \left[ {\mathcal{c}}{\mathcal{n}}^{1-\eta }-\frac{1}{{\mathcal{M}%
}^{\eta }}({\mathcal{M}}-k^{\ast })^{2}A_{m}\right] {\mathcal{M}}^{-3/2+\eta
}\right\} .
\end{align*}%
Using \eqref{Zco} we get 
\begin{equation*}
{\mathcal{M}}^{-1/2}\mbox{\boldmath${\Delta}$}^{\top }\hat{\mathbf{D}}%
_{m}^{-1}\mathbf{m}_{m,{\mathcal{M}}}\overset{{\mathcal{D}}}{\rightarrow }%
\bar{s}_{1}{\mathcal{N}},
\end{equation*}%
where ${\mathcal{N}}$ denotes a standard normal random variable, $\bar{s}%
_{1} $ is defined in \eqref{s1} and $\mathbf{D}$ is the covariance matrix in %
\eqref{bsione}. Also, 
\begin{equation*}
\left[ {\mathcal{n}}^{1-\eta }-\frac{1}{{\mathcal{M}}^{\eta }}({\mathcal{M}}%
-k^{\ast })^{2}A_{m}\right] {\mathcal{M}}^{-3/2+\eta }\rightarrow -2x,
\end{equation*}%
and thus we conclude 
\begin{align*}
\lim_{m\rightarrow \infty }& P\left\{ 2({\mathcal{M}}-k^{\ast })[%
\mbox{\boldmath${\Delta}$}^{\top }\hat{\mathbf{D}}_{m}^{-1}\mathbf{m}_{m,{%
\mathcal{M}}}]{\mathcal{M}}^{-3/2}\leq \left[ {\mathcal{c}}{\mathcal{n}}%
^{1-\eta }-\frac{1}{{\mathcal{M}}^{\eta }}({\mathcal{M}}-k^{\ast })^{2}A_{m}%
\right] {\mathcal{M}}^{-3/2+\eta }\right\} \\
& =P\left\{ \bar{s}_{1}{\mathcal{N}}\leq -x\right\} =1-\Phi (x/\bar{s}_{1}).
\end{align*}%
We assume now that $t^{\ast }=\infty $ and $\bar{u}=0$ also holds. We modify
the definition of ${\mathcal{M}}$ as 
\begin{equation*}
{\mathcal{M}}=k^{\ast }+\left( \frac{{\mathcal{c}}}{A_{m}}{\mathcal{n}}%
^{1-\eta }(k^{\ast })^{\eta }\right) ^{1/2}+R\quad \;\mbox{with}\quad \;R=%
\frac{x}{A_{m}}(k^{\ast })^{1/2}.
\end{equation*}%
Proceeding as in the previous case we get that for any $0<\delta <1$ 
\begin{align}
\lim_{m\rightarrow \infty }& P\left\{ \tau _{m}>{\mathcal{M}}\right\}
\label{mm3} \\
& =\lim_{m\rightarrow \infty }\left\{ \max_{(1-\delta ){\mathcal{M}}\leq
k\leq {\mathcal{M}}}\frac{1}{k^{\eta }}\left( (k-k^{\ast
})^{2}A_{m}+2(k-k^{\ast })\mbox{\boldmath${\Delta}$}^{\top }\hat{\mathbf{D}}%
_{m}^{-1}\mathbf{m}_{m.k}\right) \leq {\mathcal{c}}{\mathcal{n}}^{1-\eta
}\right\}  \notag \\
& =\lim_{m\rightarrow \infty }P\left\{ 2({\mathcal{M}}-k^{\ast })%
\mbox{\boldmath${\Delta}$}^{\top }\hat{\mathbf{D}}_{m}^{-1}\mathbf{m}_{m,{%
\mathcal{M}}}\leq \left( {\mathcal{c}}{\mathcal{n}}^{1-\eta }-\frac{1}{{%
\mathcal{M}}^{\eta }}({\mathcal{M}}-k^{\ast })^{2}A_{m}\right) {\mathcal{M}}%
^{\eta }\right\}  \notag \\
& =\lim_{m\rightarrow \infty }P\left\{ 2(k^{\ast })^{-1/2}%
\mbox{\boldmath${\Delta}$}^{\top }\hat{\mathbf{D}}_{m}^{-1}\mathbf{m}_{m,{%
\mathcal{M}}}\leq \left( {\mathcal{c}}{\mathcal{n}}^{1-\eta }-\frac{1}{{%
\mathcal{M}}^{\tau }}({\mathcal{M}}-k^{\ast })^{2}A_{m}\right) \frac{{%
\mathcal{M}}^{\eta }}{({\mathcal{M}}-k^{\ast })(k^{\ast })^{1/2}}\right\} . 
\notag
\end{align}%
Using the definitions of ${\mathcal{M}}$ and $R$ we have 
\begin{equation*}
\left( {\mathcal{c}}{\mathcal{n}}^{1-\eta }-\frac{1}{{\mathcal{M}}^{\eta }}({%
\mathcal{M}}-k^{\ast })^{2}A_{m}\right) \frac{{\mathcal{M}}^{\eta }}{({%
\mathcal{M}}-k^{\ast })(k^{\ast })^{1/2}}\rightarrow -2x.
\end{equation*}%
As in the proof of \eqref{Zco} we have 
\begin{equation*}
(k^{\ast })^{-1/2}\mbox{\boldmath${\Delta}$}^{\top }\hat{\mathbf{D}}_{m}^{-1}%
\mathbf{m}_{m,{\mathcal{M}}}\overset{{\mathcal{D}}}{\rightarrow }\bar{s}_{2}{%
\mathcal{N}},
\end{equation*}%
where ${\mathcal{N}}$ is a standard normal random variable and $\bar{s}_{2}$
is defined in \eqref{s2}. Now we conclude 
\begin{equation*}
\lim_{m\rightarrow \infty }P\{\tau _{m}>{\mathcal{M}}\}=1-\Phi (x/\bar{s}%
_{2}),
\end{equation*}%
where $\Phi $ is the standard normal distribution function. \newline
In the final case we write 
\begin{equation*}
k^{\ast }=u_{m}{\mathcal{n}}\;\;\;\;\mbox{with}\;\;\;u_{m}\rightarrow \bar{u}%
\in (0,1).
\end{equation*}%
In this case, we define 
\begin{equation*}
{\mathcal{M}}=k^{\ast }+x(k^{\ast })^{1/2}.
\end{equation*}%
We write 
\begin{equation*}
\max_{1\leq k\leq {\mathcal{M}}}\frac{{\mathcal{D}}_{m,k}}{{\mathcal{n}}(k/{%
\mathcal{n}})^{\eta }}=\max \left( U_{1,m},U_{2,m}\right) ,
\end{equation*}%
\begin{equation*}
U_{1,m}=\max_{1\leq k\leq k^{\ast }}\frac{{\mathcal{D}}_{m,k}}{{\mathcal{n}}%
(k/{\mathcal{n}})^{\eta }}
\end{equation*}%
and 
\begin{equation*}
U_{2,m}=\max_{k^{\ast }+1\leq k\leq {\mathcal{M}}}\frac{{\mathcal{D}}_{m,k}}{%
{\mathcal{n}}(k/{\mathcal{n}})^{\eta }}.
\end{equation*}%
We use the decomposition in \eqref{mdec} and we note 
\begin{equation*}
\left( \frac{1}{k^{\ast }}\right) ^{1/2}\max_{k^{\ast }+1\leq k\leq {%
\mathcal{M}}}\Vert {\mathcal{r}}_{m,k}(\hat{\mbox{\boldmath${\theta}$}}_{m})-%
{\mathcal{r}}_{m,k^{\ast }}(\hat{\mbox{\boldmath${\theta}$}}_{m})\Vert
=O_{P}(1).
\end{equation*}%
It follows from the proof of Theorem \ref{ma1} that 
\begin{equation*}
\left( \max_{1\leq k\leq k^{\ast }}\frac{{\mathcal{D}}_{m,k}}{k^{\ast
}(k/k^{\ast })^{\eta }},\left( \frac{1}{k^{\ast }}\right) ^{1/2}{\mathcal{r}}%
_{m,k^{\ast }}(\hat{\mbox{\boldmath${\theta}$}}_{m})\right) \overset{{%
\mathcal{D}}}{\rightarrow }\left( \sup_{0\leq s\leq 1}\frac{%
\mbox{\boldmath${\Gamma}$}(s)^{\top }\mathbf{D}^{-1}\mbox{\boldmath${%
\Gamma}$}(s)}{s^{\eta }},\mbox{\boldmath${\Gamma}$}(1)\right) ,
\end{equation*}%
where $\{\mbox{\boldmath${\Gamma}$}(s),s\geq 0\}$ is a Gaussian process with 
$E\mbox{\boldmath${\Gamma}$}(s)=\mathbf{0}$ and $E\mbox{\boldmath${\Gamma}$}%
(s)\mbox{\boldmath${\Gamma}$}^{\top }(t)=\min (t,s)\mathbf{D}.$ We write 
\begin{equation*}
U_{2,m}=\max_{1\leq j\leq x(k^{\ast })^{1/2}}\left( \frac{k^{\ast }+j}{m}%
\right) ^{1-\eta }\left( {\mathcal{r}}_{m,k^{\ast }}+j\mbox{\boldmath${%
\Delta}$}+[{\mathcal{r}}_{m,k^{\ast }}-{\mathcal{r}}_{m,k}]\right) ^{\top }%
\hat{\mathbf{D}}_{m}^{-1}\left( {\mathcal{r}}_{m,k^{\ast }}+j%
\mbox{\boldmath${\Delta}$}+[{\mathcal{r}}_{m,k^{\ast }}-{\mathcal{r}}%
_{m,k}]\right) ^{\top }.
\end{equation*}%
Thus we get 
\begin{equation*}
\left( U_{m,1},U_{m,2}\right) \overset{{\mathcal{D}}}{\rightarrow }\left(
\sup_{0<t\leq 1}\frac{\bar{u}^{1-\eta }}{t^{\eta }}\mbox{\boldmath${\Gamma}$}%
^{\top }\mathbf{D}^{-1}\mbox{\boldmath${\Gamma}$}(t),\bar{u}^{1-\eta
}\sup_{0\leq t\leq x}(t\mbox{\boldmath${\Delta}$}+\mbox{\boldmath${\Gamma}$}%
(1))^{\top }\mathbf{D}^{-1}(t\mbox{\boldmath${\Delta}$}+\mbox{\boldmath${%
\Gamma}$}(1))\right) .
\end{equation*}

Next we assume that \eqref{nostalst} holds. We replace \eqref{mdec} and %
\eqref{mdec1} with%
\begin{equation*}
{\mathcal{r}}_{m,k}(\hat{\mbox{\boldmath${\theta}$}}_{m})=(k-k^{\ast })%
\mbox{\boldmath${\Upsilon}$}+\tilde{\mathbf{m}}_{m,k},
\end{equation*}%
\begin{equation*}
\tilde{\mathbf{m}}_{m,k}={\mathcal{r}}_{m,k^{\ast }}(\hat{%
\mbox{\boldmath${\theta}$}}_{m})+[{\mathcal{r}}_{m,k}(\hat{%
\mbox{\boldmath${\theta}$}}_{m})-{\mathcal{r}}_{m,k^{\ast }}(\hat{%
\mbox{\boldmath${\theta}$}}_{m})-(k-k^{\ast })\mbox{\boldmath${\Upsilon}$}],
\end{equation*}%
\begin{equation*}
{\mathcal{D}}_{m,k}=(k-k^{\ast })^{2}B_{m}+2(k-k^{\ast })\mbox{\boldmath${%
\Upsilon}$}^{\top }\hat{\mathbf{D}}_{m}^{-1}\tilde{\mathbf{m}}_{m,k}+\tilde{%
\mathbf{m}}_{m,k}^{\top }\hat{\mathbf{D}}_{m}^{-1}\tilde{\mathbf{m}}_{m,k}.
\end{equation*}%
Following the same arguments as in the case of \eqref{stalt}, we get that if 
$\bar{u}=0$, then 
\begin{align*}
& \lim_{m\rightarrow \infty }P\{\tau _{m}>\tilde{{\mathcal{M}}}\} \\
& =\lim_{m\rightarrow \infty }P\left\{ 2(\tilde{{\mathcal{M}}}-k^{\ast })[%
\mbox{\boldmath${\Upsilon}$}^{\top }\hat{\mathbf{D}}_{m}^{-1}\tilde{\mathbf{m%
}}_{m,\tilde{{\mathcal{M}}}}]\tilde{{\mathcal{M}}}^{-3/2}\leq \left[ {%
\mathcal{c}}{\mathcal{n}}^{1-\eta }-\frac{1}{\tilde{{\mathcal{M}}}^{\eta }}%
\left( \tilde{{\mathcal{M}}}-k^{\ast }\right) ^{2}B_{m}\right] \tilde{{%
\mathcal{M}}}^{-3/2+\eta }\right\} ,
\end{align*}%
and the definition of $\tilde{{\mathcal{M}}}$ depends on if $\tilde{t}^{\ast
}<\infty $ is finite or not. If $\tilde{t^{\ast }}<\infty $ we use 
\begin{equation*}
\tilde{{\mathcal{M}}}=k^{\ast }+\tilde{u}_{{\mathcal{n}}}\left( {\mathcal{n}}%
^{1-\eta }\frac{{\mathcal{c}}}{B_{m}}\right) ^{1/(2-\eta )}+\tilde{R},
\end{equation*}%
with 
\begin{equation*}
\tilde{R}=x{(\tilde{u}^{\ast }+\tilde{t}^{\ast })^{3/2}}\left( \left( {%
\mathcal{n}}^{1-\eta }\frac{{\mathcal{c}}}{B_{m}}\right) ^{1/(2-\eta
)}\right) ^{1/2}.
\end{equation*}%
Using the definition of $\bar{{\mathcal{M}}}$ we get 
\begin{equation}
\lim_{m\rightarrow \infty }\left[ {\mathcal{c}}{\mathcal{n}}^{1-\tau }-\frac{%
1}{\tilde{{\mathcal{M}}}^{\tau }}\left( \tilde{{\mathcal{M}}}-k^{\ast
}\right) ^{2}B_{m}\right] \tilde{{\mathcal{M}}}^{-3/2+\tau }=-2x.
\label{mconv}
\end{equation}%
Applying \eqref{hus} and Lemma \ref{nosalap} with the independence of the
approximating Gaussian processes we conclude 
\begin{equation}
(\tilde{{\mathcal{M}}}-k^{\ast })[\mbox{\boldmath${\Upsilon}$}^{\top }\hat{%
\mathbf{D}}_{m}^{-1}\tilde{\mathbf{m}}_{m,\tilde{{\mathcal{M}}}}]\tilde{{%
\mathcal{M}}}^{-3/2}\overset{{\mathcal{D}}}{\rightarrow }\bar{s}_{1}{%
\mathcal{N}},  \label{mmco}
\end{equation}%
where ${\mathcal{N}}$ is a standard normal random variable and $\bar{s}_{1}$
is defined in \eqref{s1}. The covariance matrix $\mathbf{D}$ is defined in %
\eqref{bsione}. If $\tilde{t}^{\ast }=\infty $ and $\bar{u}=0$, then we use 
\begin{equation*}
{\mathcal{M}}=k^{\ast }+\left( \frac{{\mathcal{c}}}{B_{m}}{\mathcal{n}}%
^{1-\eta }(k^{\ast })^{\eta }\right) ^{1/2}+\tilde{R},\quad \;\mbox{with}%
\quad \;\tilde{R}=\frac{x}{B_{m}}(k^{\ast })^{1/2}.
\end{equation*}%
We still have \eqref{mconv} but \eqref{mmco} is replaced with 
\begin{equation*}
(\tilde{{\mathcal{M}}}-k^{\ast })[\mbox{\boldmath${\Upsilon}$}^{\top }\hat{%
\mathbf{D}}_{m}^{-1}\tilde{\mathbf{m}}_{m,\tilde{{\mathcal{M}}}}]\tilde{{%
\mathcal{M}}}^{-3/2}\overset{{\mathcal{D}}}{\rightarrow }\bar{s}_{2}{%
\mathcal{N}},
\end{equation*}%
where $\bar{s}_{2}$ is given in \eqref{s2}. As in the previous case, if $0<%
\bar{u}<1$, then we use again ${\mathcal{M}}=k^{\ast }+x(k^{\ast })^{1/2}$.
Since the sum of gradients (after removing the mean) of the log likelihood
function of the observations after the change is negligible, we get that 
\begin{equation*}
\max_{1\leq k\leq {\mathcal{M}}}\frac{{\mathcal{D}}_{m}(k)}{g_{m}(k)}\overset%
{{\mathcal{D}}}{\rightarrow }\bar{u}^{1-\eta }\max \left( \sup_{0<t\leq 1}%
\frac{1}{t^{\eta }}\mbox{\boldmath${\Gamma}$}^{\top }(t)\mathbf{D}^{-1}%
\mbox{\boldmath${\Gamma}$}(t),\sup_{0\leq t\leq x}(t\mbox{\boldmath${%
\Upsilon}$}+\mbox{\boldmath${\Gamma}$}(1)^{\top }\mathbf{D}^{-1}(t%
\mbox{\boldmath${\Upsilon}$}+\mbox{\boldmath${\Gamma}$}(t))\right) ,
\end{equation*}%
where recall that $\{\mbox{\boldmath${\Gamma}$}(t),t\geq 0\}$ is a Gaussian
process $E\mbox{\boldmath${\Gamma}$}(t)=\mathbf{0}$ and $E%
\mbox{\boldmath${\Gamma}$}(t)\mbox{\boldmath${\Gamma}$}^{\top }(s)=\min (t,s)%
\mathbf{D}$.

Next we consider the R\'{e}nyi type detector. First we investigate the case
when after the change we have an (asymptotically) stationary sequence.
Assume that $k^{\ast }\leq r$, i.e.\ the change occurred before the
monitoring started. By definition, 
\begin{equation*}
P\{\bar{\tau}_{m}=r\}=P\left\{ \frac{{\mathcal{D}}_{m}(r)}{\bar{g}_{m}(r)}%
\geq 1\right\} .
\end{equation*}%
The proof is based again on \eqref{mdec-1} and \eqref{mdec}. It follows from
Lemma \ref{lealt1} that 
\begin{equation*}
\frac{1}{(k^{\ast })^{1/2}}\Vert {\mathcal{r}}_{m,k^{\ast }}\Vert =O_{P}(1)
\end{equation*}%
and if $k^{\ast }\rightarrow \infty $ and $r-k^{\ast }\rightarrow \infty $,
then%
\begin{equation*}
\left( \frac{1}{(k^{\ast })^{1/2}}{\mathcal{r}}_{m,k^{\ast }},\frac{1}{%
(r-k^{\ast })^{1/2}}\left[ {\mathcal{r}}_{m,r}(\hat{\mbox{\boldmath${%
\theta}$}}_{m})-{\mathcal{r}}_{m,k^{\ast }}(\hat{\mbox{\boldmath${\theta}$}}%
_{m})-(r-k^{\ast })\mbox{\boldmath${\Delta}$}\right] \right) \overset{{%
\mathcal{D}}}{\rightarrow }\left( \mathbf{N}_{1},\mathbf{N}_{2}\right)
\end{equation*}%
where $\mathbf{N}_{1},\mathbf{N}_{2}$ are independent standard normal random
variables with covariance matrices $\mbox{\boldmath${\Sigma}$}_{1}$ and $%
\mbox{\boldmath${\Sigma}$}_{2}$ defined in \eqref{Sig1} and \eqref{Sig2},
respectively. Thus we get 
\begin{equation*}
\frac{1}{r^{1/2}}\left( (r-k^{\ast })\mbox{\boldmath${\Delta}$}+\mathbf{m}%
_{m,r}\right) \overset{{\mathcal{D}}}{\rightarrow }{\mathcal{a}}^{1/2}%
\mathbf{N}_{1}+{\mathcal{b}}+\mathbf{N}_{2}.
\end{equation*}%
If ${\mathcal{a}}<\infty $, then we use 
\begin{equation*}
{\mathcal{M}}=k^{\ast }+xr^{1/2}.
\end{equation*}%
It follows from the proof of Theorem \ref{ma1} that for all $K>0$ 
\begin{align*}
& \left\{ \frac{1}{r^{1/2}}{\mathcal{r}}_{m,rt}(\hat{\mbox{\boldmath${%
\theta}$}}_{m}),\frac{1}{r^{1/2}}\left( {\mathcal{r}}_{m,k^{\ast }}+[{%
\mathcal{r}}_{m,xr^{1/2}}-{\mathcal{r}}_{m,k^{\ast }}-xr^{1/2}]\right)
,t\geq 0,x\geq 0\right\} \\
& \hspace{2cm}\overset{{\mathcal{D}}[0,{\mathcal{a}}]\times \lbrack 0,K]}{%
\longrightarrow }\left\{ \mbox{\boldmath${\Gamma}$}_{1}(t),%
\mbox{\boldmath${\Gamma}$}_{1}({\mathcal{a}})+x\mbox{\boldmath${\Delta}$}%
\right\}
\end{align*}%
where $\{\mbox{\boldmath${\Gamma}$}_{1}(t),t\geq 0\}$ is Gaussian process
with $E\mbox{\boldmath${\Gamma}$}_{1}(t)=\mathbf{0}$, $E\mbox{\boldmath${%
\Gamma}$}_{1}(t)\mbox{\boldmath${\Gamma}$}_{1}^{\top }(s)=\min (t,s)%
\mbox{\boldmath${\Sigma}$}_{1}$ with $\mbox{\boldmath${\Sigma}$}_{1}$
defined in \eqref{Sig1}. Thus we conclude 
\begin{align*}
\sup_{r\leq k\leq {\mathcal{M}}}& \frac{{\mathcal{D}}_{m}(k)}{g_{m}(k)}=%
\frac{1}{{\mathcal{c}}}\max \left( \max_{r\leq k\leq k^{\ast }}\frac{{%
\mathcal{D}}_{m}(k)}{(k/r)^{\eta }r},\;\max_{k^{\ast }<k\leq {\mathcal{M}}}%
\frac{{\mathcal{D}}_{m}(k)}{(k/r)^{\eta }r}\right) \\
& \overset{{\mathcal{D}}}{\rightarrow }\frac{1}{{\mathcal{c}}{\mathcal{a}}%
^{\eta }}\max \left( \sup_{1\leq t\leq {\mathcal{a}}}\mbox{\boldmath${%
\Gamma}$}_{1}^{\top }(t)\mathbf{D}^{-1}\mbox{\boldmath${\Gamma}$}%
_{1}(t),\;\max_{0\leq s\leq x}(\mbox{\boldmath${\Gamma}$}_{1}({\mathcal{a}}%
)+s\mbox{\boldmath${\Delta}$})^{\top }\mathbf{D}^{-1}(\mbox{\boldmath${%
\Gamma}$}_{1}({\mathcal{a}})+s\mbox{\boldmath${\Delta}$}).\right) .
\end{align*}%
For the final case ${\mathcal{a}}=\infty $ we use 
\begin{equation*}
{\mathcal{M}}=k^{\ast }+\left( \frac{{\mathcal{c}}}{A_{m}}\frac{(k^{\ast
})^{\eta }}{r^{\eta -1}}\right) ^{1/2}+x(k^{\ast })^{1/2}.
\end{equation*}%
We observe that 
\begin{equation}
\frac{k^{\ast }}{((k^{\ast })^{\eta }/r^{\eta -1})^{1/2}}=\left( \left( 
\frac{k^{\ast }}{r}\right) ^{2-\eta }r\right) ^{1/2}\rightarrow \infty
\label{la1}
\end{equation}%
and 
\begin{equation}
\frac{(k^{\ast })^{\eta }/r^{\eta -1}}{k^{\ast }}=\left( \frac{k^{\ast }}{r}%
\right) ^{\eta -1}\rightarrow \infty .  \label{la2}
\end{equation}%
As in the previous cases, 
\begin{align*}
\lim_{m\rightarrow \infty }& P\{\bar{\tau}_{m}>{\mathcal{M}}\} \\
& =\lim_{m\rightarrow \infty }P\left\{ 2({\mathcal{M}}-k^{\ast })%
\mbox{\boldmath${\Delta}$}^{\top }\hat{\mathbf{D}}_{m}^{-1}\mathbf{m}_{m,{%
\mathcal{M}}}<\left[ {\mathcal{c}}r^{1-\eta }-\frac{({\mathcal{M}}-k^{\ast
})^{2}A_{m}}{{\mathcal{M}}^{\eta }}\right] {\mathcal{M}}^{\eta }\right\} \\
& =\lim_{m\rightarrow \infty }P\left\{ 2{\mathcal{M}}^{-1/2}%
\mbox{\boldmath${\Delta}$}^{\top }\hat{\mathbf{D}}_{m}^{-1}\mathbf{m}_{m,{%
\mathcal{M}}}<\left[ {\mathcal{c}}r^{1-\eta }-\frac{({\mathcal{M}}-k^{\ast
})^{2}A_{m}}{{\mathcal{M}}^{\eta }}\right] \frac{1}{{\mathcal{M}}-k^{\ast }}{%
\mathcal{M}}^{-1/2+\eta }\right\} .
\end{align*}%
We obtain from \eqref{la1} and \eqref{la2} that 
\begin{equation*}
\left[ {\mathcal{c}}r^{1-\eta }-\frac{({\mathcal{M}}-k^{\ast })^{2}A_{m}}{{%
\mathcal{M}}^{\eta }}\right] \frac{1}{{\mathcal{M}}-k^{\ast }}{\mathcal{M}}%
^{-1/2+\eta }\rightarrow -2x.
\end{equation*}%
We get from \eqref{Zco} that 
\begin{equation*}
{\mathcal{M}}^{-1/2}\mbox{\boldmath${\Delta}$}^{\top }\hat{\mathbf{D}}%
_{m}^{-1}\mathbf{m}_{m,{\mathcal{M}}}\overset{{\mathcal{D}}}{\rightarrow }%
\bar{s}_{2}{\mathcal{N}},
\end{equation*}%
where ${\mathcal{N}}$ is a standard normal random variable and $\bar{s}_{2}$
is defined in \eqref{s2}. We now conclude 
\begin{equation*}
\lim_{m\rightarrow \infty }P\{\bar{\tau}_{m}>{\mathcal{M}}\}=1-\Phi (x/\bar{s%
}_{2}).
\end{equation*}%
The proof when \eqref{nostalst} holds is essentially the same so the details
are omitted.
\end{proof}

\end{document}